\tikzset{
    axis break gap/.initial=1mm
}
\let\llncssubparagraph\subparagraph
\let\subparagraph\paragraph
\titlespacing{\section}{0pt}{7pt}{5pt}
\titlespacing{\subsection}{0pt}{6pt}{4pt}
\titlespacing{\subsubsection}{0pt}{5pt}{3pt}
\titlespacing{\paragraph}{0pt}{4pt}{2pt}
\let\subparagraph\llncssubparagraph
\def\soledge[#1,#2,#3,#4];{
        \draw[dashed,-latex,#4] (#1) -- #3 (#2);
}
\newif\ifdraft
\renewcommand{\vec}[1]{\textbf{#1}}
\newcommand{\myPar}[1]{\paragraph{\textbf{\emph{#1}}}}
\newcommand{\statoil}{{Statoil}\xspace}
\newcommand{\fd}{{FD}\xspace}
\newcommand{\vfd}{VFD\xspace}
\newcommand{\fds}{{FDs}\xspace}
\newcommand{\vfds}{VFDs\xspace}
\newcommand{\oce}{OCE\xspace}
\newcommand{\unf}{\operatorname{unf}}
\newcommand{\rdfssubclassof}{\texttt{\scriptsize rdfs:subClassOf}}
\newcommand{\owlobjectpropertydomain}{\texttt{\scriptsize rdfs:domain}}
\newcommand{\owlobjectpropertyrange}{\texttt{\scriptsize rdfs:range}}
\newcommand{\rdftype}{\texttt{\scriptsize rdf:type}}
\newcommand{\SELECT}{{\bfseries \scriptsize\textcolor{black}{SELECT}}}
\newcommand{\AS}{{\bfseries \scriptsize\textcolor{black}{AS}}}
\newcommand{\FROM}{{\bfseries \scriptsize\textcolor{black}{FROM}}}
\newcommand{\WHERE}{{\bfseries \scriptsize \textcolor{black}{WHERE}}}
\newcommand{\proj}[1]{\pi_{#1}}
\newcommand{\tra}{\boldsymbol{\tau}}
\newcommand{\var}{\textit{var}}
\newcommand{\cc}{\rule[-0.2ex]{0.7pt}{1.8ex}\rule{0.2em}{0pt}\rule[-0.2ex]{0.7pt}{1.8ex}}
\newcommand{\sANS}[2]{\llbracket #1\rrbracket_{#2}}
\newcommand{\isNull}{\textit{isNull}}
\newcommand{\rTwoTr}{{\sf tr}}
\definecolor{dark-gray}{gray}{0.3}
\newcommand{\owl}[1]{\textcolor{dark-gray}{\texttt{\scriptsize #1}}}
\newcommand{\owlind}[1]{\texttt{\scriptsize #1}}
\newcommand{\textowl}[1]{{\textcolor{dark-gray}{\texttt{\small #1}}}}
\newcommand{\SQL}[1]{\texttt{\scriptsize #1}}
\definecolor{midgrey}{rgb}{0.5,0.5,0.5}
\definecolor{darkred}{rgb}{0.7,0.1,0.1}
\definecolor{midgrey}{rgb}{0.5,0.5,0.5}
\definecolor{darkred}{rgb}{0.7,0.1,0.1}
\newcommand{\dl}[1]{\textcolor{red}{\textbf{!}}\todo[color=orange!30]{\footnotesize\textcolor{darkred}{\textbf{dl::}}\textcolor{blue}{\textbf{[}}#1\textcolor{blue}{\textbf{]}}}}
\newcommand{\davide}[1]{$[$\textcolor{darkred}{davide}:~~\emph{\textcolor{midgrey}{#1}}$]$}
\newcommand{\OWLQL}{\textrm{OWL\,2\,QL}\xspace}
\newcommand{\Null}{\textit{null}}
\newcommand{\sDOM}{C}
\newcommand{\dom}{\textit{dom}}
\newcommand{\tup}[1]{( #1 )}
\newcommand{\A}{\mathcal{A}}
\newcommand{\T}{\mathcal{T}}
\newcommand{\G}{\mathcal{G}}
\newcommand{\I}{\mathcal{I}}
\newcommand{\M}{\mathcal{M}}
\renewcommand{\O}{\mathcal{O}}
\renewcommand{\S}{\mathcal{S}}
\newcommand{\C}{\mathcal{C}}
\newcommand{\ontop}{\textsl{Ontop}\xspace}
\newcommand{\join}{\textsc{Join}}
\newcommand{\leftjoin}{\textsc{Opt}}
\newcommand{\union}{\textsc{Union}}
\newcommand{\filter}{\textsc{Filter}}
\newcommand{\nm}[2][]{\textit{#2}}
\newcommand{\myIRI}{\texttt{I}}
\newcommand{\myBNK}{\texttt{B}}
\newcommand{\myLIT}{\texttt{L}}
\newcommand{\myVAR}{\texttt{V}}
\newcommand{\ignore}[1]{}
\newcommand{\mysql}{\textsc{MySQL}\xspace}
\newcommand{\dbtwo}{\textsc{DB2}\xspace}
\newcommand{\postgres}{\textsc{PostgreSQL}\xspace}
\newcommand{\sparql}{\textsc{SPARQL}\xspace}
\newcommand{\sql}{\textsc{SQL}\xspace}
\newcommand{\tableRowBreak}{\\}
\newcommand{\tableColSpace}{@{\hspace{1.5em}}}
\newcommand{\tableColSmallSpace}{@{\hspace{1ex}}}
\newenvironment{theoremnum}[1]{\smallskip\noindent\textbf{Theorem~#1.}
  \hspace*{0.3em}\em}{\par\smallskip}
\newenvironment{lemmanum}[1]{\smallskip\noindent\textbf{Lemma~#1.}
  \hspace*{0.3em}\em}{\par\smallskip}
\newenvironment{propositionnum}[1]{\smallskip\noindent\textbf{Proposition~#1.}
  \hspace*{0.3em}\em}{\par\smallskip}
\newif\ifextendedversion
\begin{document}



\def\tit{OBDA Constraints for Effective Query Answering}

\ifextendedversion %
\title{\tit\\(Extended Version)} %
\else %
\title{\tit} %
\fi

\author{Dag Hovland$^1$ \and Davide Lanti$^2$ \and Martin Rezk$^2$
  \and Guohui Xiao$^2$}

\institute{
$^1$University of Oslo, Norway\\
$^2$Free University of Bozen-Bolzano, Italy
}
\maketitle

\begin{abstract}
  In Ontology Based Data Access (OBDA) users pose SPARQL queries over
  an ontology that lies on top of relational datasources. These
  queries are translated on-the-fly into SQL queries by OBDA systems.
  Standard SPARQL-to-SQL translation techniques in OBDA often produce
  SQL queries containing redundant joins and unions, even after a
  number of semantic and structural optimizations.  These redundancies
  are detrimental to the performance of query answering, especially in
  complex industrial OBDA scenarios with large enterprise databases.
  To address this issue, we introduce two novel notions of OBDA
  constraints and show how to exploit them for efficient query
  answering. We conduct an extensive set of experiments on large
  datasets using real world data and queries, showing that these
  techniques strongly improve the performance of query answering up to
  orders of magnitude.
\end{abstract}



\section{Introduction}\label{sec:intro}

In Ontology Based Data Access (OBDA)~\cite{PLCD*08}, the complexity of data storage is hidden by
a conceptual layer on top of an existing relational database (DB). 
Such a conceptual layer, realized by an ontology, provides a convenient vocabulary for user queries, and captures domain knowledge (e.g., hierarchies of concepts) that can be used to enrich 
query answers over incomplete data. 
%
The  ontology is connected to the relational database through a declarative specification given
in terms of mappings that relate each term in the ontology (each class and property) to a (SQL) view over the database.
The mappings and  the database define a \emph{(virtual)} RDF graph
that, 
together with the ontology, can be queried using the \emph{SPARQL query language}.

To answer a SPARQL query over the conceptual layer, a typical OBDA system translates it into an equivalent SQL query over the original database.
The translation procedure has two major stages: (1) \textit{rewriting} the input SPARQL query with respect to the ontology and (2) \textit{unfolding} the rewritten query with respect to the mappings.
A well-known theoretical result is that the size of the translation is worst-case exponential in the size of the
input query~\cite{KKPZ12b}. These worst-case scenarios are not only theoretical, but they also occur in real-world
applications, as shown in~\cite{LRXC15}, where some user SPARQL queries are translated into SQL queries containing thousands  of join and union operators. This is mainly due to~\emph{(i)} SPARQL
queries containing joins of ontological terms with rich hierarchies,
which lead to redundant unions~\cite{RoKZ13}; and \emph{(ii)}
reifications of n-ary relations in the database into triples over the
RDF data model, which lead to SQL translations containing several
(mostly redundant) self-joins.
How to reduce the impact of exponential blow-ups through optimization
techniques so as to make OBDA applicable to real-world scenarios is
one of the main open problems in current OBDA research.

The standard solutions to tackle this problem are based on
\emph{semantic and structural
  optimizations}~\cite{RoKZ13,journalMariano} originally from the
database area~\cite{Chakravarthy:1986:SQO:21004.21043}. Semantic
optimizations use explicit integrity constraints (such as primary and
foreign keys) to remove redundant joins and unions from the translated
queries. Structural optimizations are in charge of reshaping the
translations so as to take advantage of database indexes. 

The main problem addressed in this paper is that these optimizations
cannot exploit constraints that go beyond database dependencies, such
as domain constraints (e.g., people have only one age, except for
Chinese people who have two ages), or storage policies in the
organization (e.g., table {\small\texttt{married}} \emph{must} contain
all the married employees).  We address this problem by
proposing two novel classes of constraints that go beyond database
dependencies.  The first type of constraint, \emph{exact predicate},
intuitively describes classes and properties whose elements can be
retrieved without the help of the ontology.  The second type of
constraint, \emph{virtual functional dependency} (\vfd), intuitively
describes a functional dependency over the virtual RDF graph exposed
by the ontology, the mappings, and the database.  These notions are
used to enrich the OBDA specification so as to allow the OBDA system
to identify and prune redundancies from the translated queries.  To
help the design of enriched specifications, we provide tools that
detect the satisfied constraints within a given OBDA
instance. 
%
 We extend the OBDA system \ontop so as to exploit the enriched specification, 
 and evaluate it in both a large-scale industrial setting
 provided by the petroleum company \statoil, and in an ad-hoc artificial and scalable benchmark with different commercial and free
relational database engines as back-ends.
Both sets of experiments reveal a drastic reduction on the size
of translated queries, which in some cases is reduced by orders of
magnitudes. This allows for a major performance improvement of query
answering.
%

The rest of the paper is structured as follows:
Preliminaries are provided in Section~\ref{sec:background}. 
In Section~\ref{sec:sparql-qa} we describe how state-of-the-art OBDA systems work, and highlight the problems with the current optimization techniques.
In Section~\ref{sec:tuningGeneral} we formally introduce our novel OBDA constraints, and show how they can be used to optimize translated queries.
In Section~\ref{sec:experiments} we provide an evaluation of the impact of the proposed optimization techniques on the performance
of query answering.
In Section~\ref{sec:related} we briefly survey other related works.
Section~\ref{sec:conclusions} concludes the paper.
\ifextendedversion %
The omitted proofs and extended experiments with Wisconsin benchmark
can be found in the appendix. %
\else %
Omitted proofs and extended experiments with Wisconsin benchmark can
be found in the extended version of this paper~\cite{techreport}. %
\fi



\section{Preliminaries}\label{sec:background}
We assume the reader to be familiar with relational algebra and SQL
queries, as well as with ontology languages and in particular with the
OWL~2~QL\footnote{\url{http://www.w3.org/TR/owl2-overview/}}
profile. 
%
To simplify the notation we express OWL~2 QL axioms by their
description logic
counterpart~\textit{DL-Lite}$_\mathcal{R}$~\cite{CDLLR07}.
%
Notation-wise, we will denote
tuples with the bold faces; e.g., \textbf{x} is a tuple.

\paragraph{\textbf{Ontology and RDF Graphs.}}
The building block of an ontology is a \emph{vocabulary} $(N_C, N_R)$,
where $N_C, N_R$ are respectively countably infinite disjoint sets of
\emph{class names} and (object or datatype) \emph{property names}. A \emph{predicate} is either a
class name or a property name. An \emph{ontology} is a finite set of
axioms constructed out a vocabulary, and it describes a domain of
interest.
 These axioms of an ontology can be 
serialized into a concrete syntax. In the following we use the
\emph{Turtle syntax} 
 for readability.

\begin{example}\label{ex:ontology}
  
  The ontology from \statoil captures the domain knowledge related to oil extraction activities.
  Relevant axioms for our examples are:\\
  \begin{tabular}{l \tableColSpace l}
    \midrule
    \textowl{$\owl{:isInWell}$ $\owlobjectpropertydomain{}$ $\owl{:Wellbore}$} & \textowl{$\owl{:isInWell}$ $\owlobjectpropertyrange{}$ $\owl{:Well}$}\\
    \textowl{$\owl{:hasInterval}$ $\owlobjectpropertydomain{}$ $\owl{:Wellbore}$} & \textowl{$\owl{:hasInterval}$ $\owlobjectpropertyrange{}$ $\owl{:WellboreInterval}$}\\
    \textowl{$\owl{:completionDate}$ $\owlobjectpropertydomain{}$ $\owl{:Wellbore}$}\\      
    \textowl{$\owl{:ProdWellbore}$ $\rdfssubclassof{}$ $\owl{:DevelopWellbore}$} &  \textowl{$\owl{:DevelopWellbore}$ $\rdfssubclassof{}$ $\owl{:Wellbore}$}\\
    \midrule
  \end{tabular}

  \noindent The first five axioms specify domains and ranges of the
  properties \owl{:isInWell}, \owl{:hasInterval}, and
  \owl{:completionDate}.  The last two  state the hierarchy
  between different wellbore\footnote{A wellbore is a three-dimensional representation of a hole in the ground.} classes.
\end{example}

Given a countably infinite set $N_I$ of \emph{individual names}
disjoint from $N_C$ and $N_R$, an \emph{assertion} is an expression of
the form $A(i)$ or $P(i_1, i_2)$, where $i, i_1, i_2 \in N_I$,
$A \in N_C, P \in N_R$. An OWL 2 QL \emph{knowledge base} (KB) is a
pair $(\T, \A)$ where $\T$ is an OWL 2 QL ontology and $\A$ is a set
of assertions (also called ABox). Semantics for entailment of assertions
 ($\models$) in OWL 2 QL KBs is
given through Tarski-style interpretations in the usual
way~\cite{BCMNP07}.
Given a KB $(\T, \A)$, the \emph{saturation of $\A$ with respect to $\T$} is the set of assertions
$\A_\T = \{A(s) \mid \tup{\T,\A}\models A(s)\} \cup \{P(s,o) \mid \tup{\T,\A}\models P(s,o)\}$. In the following, it is convenient to view assertions $A(s)$ and $P(s,o)$
as the \emph{RDF triples} $(s,\rdftype,A)$ and $(s,P,o)$, respectively
.
Hence, we view a set of assertions also as an RDF graph $\G^{\A}$ defined as 
$\G^{\A} = \{ (s,\rdftype,A) \mid A(s) \in \A \} \cup \{(s,P,o) \mid P(s,o) \in \A\}$. Moreover, the \emph{saturated RDF graph} $\G^{\tup{\T,\A}}$ associated to a knowledge base $\tup{\T,\A}$ consists of the set
of triples entailed by $\tup{\T,\A}$, i.e. $\G^{\tup{\T,\A}} = \G^{\A_\T}$.
           
\paragraph{\textbf{OBDA and Mappings.}}

Given a vocabulary $(N_C, N_R)$ and a database schema $\Sigma$, a \emph{mapping} is an expression of the form  
$A(f_1(\textbf{x}_1)) \leftarrow sql(\textbf{y})$ or $P(f_1(\textbf{x}_1), f_2(\textbf{x}_2)) \leftarrow sql(\textbf{y})$, 
where $A \in N_C$, $P \in N_R$, $f_1, f_2$ are function symbols, $\textbf{x}_i \subseteq \textbf{y}$, for $i=1, 2$, and $sql(\textbf{y})$ is an SQL query in $\Sigma$ having output attributes $\textbf{y}$.
Given $Q$ in $N_C \cup N_R$, \emph{a mapping $m$ is defining $Q$} if $Q$ is on the left hand side of $m$.

Given an SQL query $q$ and a DB instance $D$, $q^D$ denotes the set of answers to $q$ over $D$. 
Given a database instance $D$, and a set of mappings $\M$, we define the \emph{virtual
assertions set} $\A_{\M,D}$ as follows:
\[
  \A_{\M,D}=
  \begin{array}[t]{@{}l}
    \{A(f(\textbf{o})) \mid \textbf{o} \in \pi_{\textbf{x}}(sql(\textbf{y}))^D \text{ and } A(f(\textbf{x})) \leftarrow sql(\textbf{y})  
    \text{ in } \M \} \quad\cup{}\\
    \{P(f(\textbf{o}),g(\textbf{o'})) \mid (\textbf{o},\textbf{o'})\in \pi_{\textbf{x}_1,\textbf{x}_2}(sql(\textbf{y}))^D \text{ and }
     ~P(f(\textbf{x}_1), g(\textbf{x}_2)) \leftarrow sql(\textbf{y}) \text{ in } \M \}
  \end{array}
  \]
  In the Turtle syntax for mappings, we use
  \emph{templates}--strings with placeholders--for specifying the functions (like $f$ and $g$
  above) that map database values into URIs and literals.
For instance, the string
  \owl{\small <http://statoil.com/\{id\}>} is a URI template where
  ``\owl{\small id}'' is an attribute; when \owl{\small id} is
  instantiated as ``\owl{\small 1}'', it generates the URI \owl{\small
    <http://statoil.com/1>}.


An \emph{OBDA specification} is a triple $\mathcal{S} = (\T, \M, \Sigma)$ where $\T$
is  an ontology, $\Sigma$ is a database schema with key dependencies, and $\M$ is a set of mappings between $\T$ and $\Sigma$.
Given an OBDA specification $\S$ and a database instance $D$, we call the pair $(\S, D)$ an \emph{OBDA instance}.
Given an OBDA instance $\O = ((\T, \M, \Sigma), D)$, the \emph{virtual RDF graph exposed by $\O$} is
the RDF graph $\G^{\A_{\M,D}}$; the \emph{saturated virtual RDF graph $\G^\O$ exposed by $\O$} is the RDF graph $\G^{\tup{\T, \A_{\M,D}}}$.

\begin{example}\label{fig:mappingsEPDS}
  The mappings for the classes and properties introduced in Example~\ref{ex:ontology} are:

    \begin{lstlisting}[frame=tb]
$\owlind{:Wellbore-\{wellbore\_s\}} \ \owl{rdf:type} \ \owl{:Wellbore}$
$\leftarrow \SQL{\SELECT{} wellbore\_s}$ $\SQL{\FROM{}  wellbore}$ $\SQL{\WHERE{} wellbore.r\_existence\_kd\_nm = 'actual' }$

$\owlind{:Wellbore-\{wellbore\_s\}} \ \owl{:isInWell} \ \owlind{:Well-\{well\_s\}}$
$\leftarrow \SQL{\SELECT{} well\_s, wellbore\_s}$ $\SQL{\FROM{}  wellbore}$ $\SQL{\WHERE{} wellbore.r\_existence\_kd\_nm = 'actual' }$

$\owlind{:Wellbore-\{wellbore\_s\}} \ \owl{:hasInterval}$ $\owlind{:WellboreInterval-\{wellbore\_intv\_s\}}$
$\leftarrow \SQL{\SELECT{} wellbore\_s, wellbore\_intv\_s}$ $\SQL{\FROM{}  wellbore\_interval}$

$\owlind{:Wellbore-\{wellbore\_s\}} \ \owl{:completionDate} $ $\owlind{`\{year\}-\{month\}-\{day\}'\^{}\^{}xsd:date}$
$\leftarrow \SQL{\SELECT{} wellbore\_s, year, month, day}$ $\SQL{\FROM{}  wellbore}$ $\SQL{\WHERE{} wellbore.r\_existence\_kd\_nm = 'actual' }$
      
$\owlind{:Wellbore-\{wellbore\_s\}} \ \owl{rdf:type} \ \owl{:ProdWellbore}$
$\leftarrow \SQL{\SELECT{} w.wellbore\_s AS wellbore\_s}$ $\SQL{\FROM{} wellbore w, facility\_clsn }$ $\SQL{\WHERE{}} $ complex-expression
\end{lstlisting}
\end{example}

\paragraph{\textbf{Query Answering in OWL 2 QL KBs.}}

A \emph{conjunctive query} $q(\textbf{x})$ is a first order formula of
the form $\exists \textbf{y}.\ \varphi(\textbf{x}, \textbf{y})$, where
$\varphi(\textbf{x}, \textbf{y})$ is a conjunction of equalities and
atoms of the form $A(t)$, $P(t_1, t_2)$ (where
$A \in N_C, P \in N_R$), and each $t, t_1, t_2$ is either a
\emph{term} or an individual variable in $\textbf{x},
\textbf{y}$. 
Given a conjunctive query $q(\textbf{x})$ and a knowledge
base $\mathcal{K} := (\T, \A)$, a tuple
$\textbf{i} \in N_I^{|\textbf{x}|}$ is a \emph{certain answer} to
$q(\textbf{x})$ iff $\mathcal{K} \models q(\textbf{i})$. 
The task of query answering in OWL~2~QL (\textit{DL-Lite}$_\mathcal{R}$) can be addressed by query
rewriting techniques~\cite{CDLLR07}.
For an OWL~2~QL ontology $\T$,
 a conjunctive query $q$ can be rewritten to a union $q_r$ of
conjunctive queries  such that for each assertion set $\A$ and
each tuple of individuals $\textbf{i} \in N_I^{|\textbf{x}|}$, it
holds
$\mathcal{(\T,\A)} \models q(\textbf{i}) \Leftrightarrow \A \models
q_r(\textbf{i})$. Many rewriting techniques have been proposed in the
literature~\cite{KiKZ12,RoAl10,BOSX13}. 

SPARQL~\cite{W3Crec-SPARQL-1.1-entailment} is a W3C standard language designed to query RDF graphs.
Its vocabulary contains four pairwise disjoint and countably infinite sets of symbols: \myIRI{} for \emph{IRIs},  \myBNK{} for \emph{blank nodes}, \myLIT{} for \emph{RDF literals}, and \myVAR{} for \emph{variables}. The elements of $\sDOM = \myIRI \cup \myBNK \cup \myLIT$ are called \emph{RDF terms}.
A \emph{triple pattern} is an element of $(\sDOM\cup\myVAR)\times \myIRI \times (\sDOM\cup\myVAR)$. A \emph{basic graph pattern} (\emph{BGP}) is a finite set of joins of triple patterns. BGPs can be combined
using the SPARQL operators join, optional, filter, projection, etc.

\begin{example}
\label{ex:sparql}
  The following \sparql query, containing a BGP with three triple patterns, returns all the wellbores, their completion dates, and the well where they are contained.
  \begin{lstlisting}[basicstyle=\ttfamily\scriptsize,mathescape,frame=none,frame=tb]
$\SQL{\SELECT{} * \WHERE{} \{?wlb \owl{rdf:type} \owl{:Wellbore}.}$ $ \texttt{?wlb} \owl{\texttt{:completionDate}} \owlind{~?cmpl}.\  \texttt{?wlb} \owl{\texttt{:isInWell}} \owlind{~?w} . \}$
  \end{lstlisting}
\end{example}
\vspace{-.2cm}

\noindent{}To ease the presentation of the technical development, in
the rest of this paper we adopt the OWL~2 QL entailment regime for
SPARQL query answering~\cite{KRRXZ14}, but disallow complex
class/property expressions in the query.
Intuitively this restriction states that
each BGP can be seen as a conjunctive query without
existentially quantified variables.
%
Under this restricted OWL~2 QL entailment regime, the task of
answering a \sparql query $q$ over a knowledge base $\tup{\T,\A}$ can
be reduced to answering $q$ over the saturated graph
$\G^{\tup{\T,\A}}$ under the simple entailment regime.
This restriction can be lifted with the help of a standard query
rewriting step~\cite{KRRXZ14}.



\section{SPARQL Query Answering in OBDA}
\label{sec:sparql-qa}
In this section we describe the typical steps that an OBDA system performs to answer SPARQL queries 
and discuss the performance challenges. To do so, we pick
the representative state-of-the-art OBDA system \ontop and discuss its functioning in detail.

During its start-up, \ontop classifies the ontology, ``compiles'' the ontology
into the mappings generating the so-called $\T$-mappings~\cite{RoKZ13}, and removes redundant 
mappings by using inclusion dependencies (e.g., foreign keys) contained in the database schema.
Intuitively, $\T$-mappings expose a saturated RDF graph. Formally, given a basic OBDA specification $\mathcal{S} = (\T, \M, \Sigma)$, the mappings $\M_\T$ are $\T$-mappings for $\S$ if, for every OBDA instance $\mathcal{O} = (\mathcal{S}, D)$, $\G^\O = \G^{\tup{\A_{\M_\T},D}}$.

\begin{example}
  \label{ex:statoil-tmapping}
  The $\T$-mappings for our running example are those in Example~\ref{fig:mappingsEPDS} plus
  
    \begin{lstlisting}[basicstyle=\ttfamily\scriptsize,mathescape,frame=none,frame=tb]
$\owlind{:Wellbore-\{wellbore\_s\}} \ \owl{rdf:type} \ \owlind{:Wellbore}$
$\leftarrow \SQL{\SELECT{} wellbore\_s \FROM{}  wellbore}$ $\SQL{\WHERE{} wellbore.r\_existence\_kd\_nm = 'actual' }$

$\owlind{:Wellbore-\{wellbore\_s\}} \  \owl{rdf:type} \ \owlind{:Wellbore}$
$\leftarrow \SQL{\SELECT{} wellbore\_s, wellbore\_intv\_s \FROM{}  wellbore\_interval}$

$\owlind{:Wellbore-\{wellbore\_s\}} \ \owl{rdf:type} \ \owlind{:Wellbore}$
$\leftarrow \SQL{\SELECT{} w.wellbore\_s}$ $\SQL{\FROM{} wellbore w, facility\_clsn }$    $\SQL{\WHERE{} } $ ... complex-expression
    \end{lstlisting}

    The new mappings are derived from the domain of the properties \texttt{\small :isInWell, :completionDate}, and because  
    \texttt{\small :ProdWellbore} is a sub-class of \texttt{\small :Wellbore}.

\end{example}
%

After the start-up, in the query answering
stage, \ontop translates the input SPARQL query
into an SQL query, evaluates it, and returns the answers to the end-user.
We divide this stage in five phases:
\begin{inparaenum}[\it(a)]
\item\label{p:rew} the SPARQL query is \emph{rewritten} using the tree-witness rewriting algorithm;
\item\label{phase:unf} the rewritten SPARQL query is \emph{unfolded} into an SQL query using $\T$-mappings;
\item\label{phase:opt} the resulting SQL query is optimized;
\item\label{p:ex} the optimized SQL query is  executed by the database engine;
\item\label{p:res} the SQL result is translated into the answer to the original  SPARQL query.
\end{inparaenum}
For the sake of simplicity, we disregard phase~\emph{(\ref{p:rew})} since it goes out of the
scope of this paper (cf.~\cite{GKKP*14}), and phases~\emph{(\ref{p:ex})} and~\emph{(\ref{p:res})} because they are straightforward.
In the following we elaborate on phases~\emph{(\ref{phase:unf})} and~\emph{(\ref{phase:opt})}.


\paragraph{\textbf{From SPARQL to SQL.}}
In phase~\emph{(\ref{phase:unf})} the rewritten SPARQL query is unfolded into an SQL query using $\T$-mappings. The rewritten query
is first transformed into a tree representation of its SPARQL algebra expression. 
The algorithm starts by replacing each leaf of the tree, that is, a
triple pattern of the form $(s,p,o)$, with the union of the SQL
queries defining $p$ in the $\T$-mapping. Such \sql queries are obtained as follows: 
given a triple pattern $p = \owlind{?x } \owl{rdf:type } \owl{:A}$, and a
mapping $m = \owl{\normalsize :A}(f(\mathbf{y'})) \leftarrow sql(\textbf{y})$, 
 the \emph{SQL unfolding
  $\unf(p,m)$ of $p$ by $m$} is the SQL query
$\SQL{\SELECT{} } \tau(f({\mathbf{y'}})) \SQL{ \AS{} } \owlind{x} \SQL{ \FROM{} } sql(\textbf{y})$, 
    where $\tau$ is an SQL function filling the placeholders in $f$ with values in $\mathbf{y'}$. 
 We denote the sub-expression ``$\SQL{\SELECT{} 
    } \tau(f(\mathbf{y'})) \SQL{ \AS{} } \owlind{x}$'' by $\pi_{x/f(\mathbf{y'})}$.  The
    notions of ``$\unf$'' and ``$\pi$'' are defined similarly for
    properties.

\begin{example}
  Consider the triple pattern $p = \owlind{?wlb } \owl{:completionDate } \owlind{?d}$, and the fourth mapping $m$ from Example~\ref{fig:mappingsEPDS}. Then
  the SQL unfolding $\unf(p,m)$ is the SQL query
  \begin{lstlisting}[basicstyle=\ttfamily\scriptsize,mathescape,frame=none,frame=tb]
$\SQL{\SELECT{} CONCAT(":Wellbore-",well\_s) \AS{} wlb,}$$\SQL{CONCAT("`",year,"-",month,"-", day,"'\^{}\^{}xsd:date") \AS{} d }$ 
$\SQL{\FROM{}  wellbore}$ $\SQL{\WHERE{} wellbore.r\_existence\_kd\_nm = 'actual' }$
    \end{lstlisting}
\end{example}
\vspace{-.2cm}

Given a triple pattern $p$ and a set of mappings $\M$, the
\emph{SQL unfolding $\unf(p, \M)$ of $p$ by $\M$} is the SQL union
$\cup_{m\in\M} \{\unf(p,m) \mid \unf(p,m) \textit{ is defined} \}$.

Once the leaves are processed, the algorithm processes the upper levels in the
tree, where the SPARQL operators  are
translated into the corresponding SQL operators (Project, InnerJoin, LeftJoin,
Union, and Filter). 
Once the root is translated the process terminates 
and the resulting SQL expression is returned.

    \begin{example}\label{ex:trivRed}
      The unfolded SQL query for the SPARQL query in
      Example~\ref{ex:sparql} and $\T$-mappings in Example~\ref{ex:statoil-tmapping} has the following shape:
      \vspace{-0.5mm}
      \begin{align*}
        & (\pi_{wlb/\Box}sql_{\owl{{:}Wellbore}} \cup \pi_{wlb/\Box}sql_{\owl{{:}ProdWellbore}} \cup \pi_{wlb/\Box}sql_{\owl{:hasInterval}})\\
        &  \quad\Join (\pi_{wlb/\Box, cmp/\Diamond}sql_{\owl{{:}completionDate}}) \Join (\pi_{wlb/\Box, w/\circ}sql_{\owl{{:}isInWell}})       
      \end{align*}
\vspace{-0.5mm}
      where $\small \Box = \owl{:Wellbore-\{wellbore\_s\}}$, $\small \Diamond = \owl{`\{year\}-\{month\}-\{day\}'\^{}\^{}xsd:date}$, $\small \circ = \owl{:Well-\{well\_s\}}$, and $sql_P$ is the SQL query in the mapping defining the class/property $P$. 
    \end{example}

\ignore{
\begin{example}\label{ex:simple-statoil}
  Consider the second query from Example \ref{ex:intro}. 
  In SPARQL this query can be written:
\begin{lstlisting}[basicstyle=\ttfamily\scriptsize,mathescape,frame=none,frame=tb]
$\SQL{\SELECT{} * \WHERE{} \{?w \owl{rdf:type} \owl{:Wellbore};}$ $ \owl{\texttt{:completionDate}} \owlind{~?cmpl}; \owl{\texttt{:isInWell}} \owlind{~?w} . \}$
  \end{lstlisting}
The above query corresponds to the a SPARQL algebra tree where  the leaves are the triple patterns,
and the inner nodes are the SPARQL operators (such as Join).

 \end{example}
}


\myPar{Optimizing the generated SQL queries.}\label{sec:optimize:SQL}
At this point, the unfolded SQL queries are merely of theoretical value as they would not be efficiently executable by any database system. A problem comes from the fact that they contain joins over the results of built-in database functions, which are expensive to evaluate. Another problem is that the unfoldings are usually verbose, often containing thousands of unions and join operators. Structural and semantic optimizations are in charge of dealing with these two problems.

\medskip
\noindent\emph{Structural Optimizations.}~
To ease the presentation, we assume the queries to contain only one BGP.
Extending to the general case is straightforward.  
An SQL unfolding of a BGP has the shape of a join of unions $Q = Q_1 \Join Q_2 \ldots \Join Q_n$, 
where each $Q_i$ is a union of sub-queries. 
The first step is to remove duplicate sub-queries in each $Q_i$.
In the second step, $Q$ is transformed into a union of joins.
In the third step, all joins of the kind $\pi_{x/f}sql_1(\textbf{z}) \Join \pi_{x/g}sql_2(\textbf{w})$ where $f \neq g$ are removed because they do not produce any answer.
In the fourth step, the occurrences of the SQL function $\pi$ for creating URIs are pushed to the root of the query tree so as to obtain
efficient queries where the joins
are over database values rather than over URIs.
Finally, duplicates in the union are removed.

\medskip \noindent\emph{Semantic Optimizations.}
SQL queries are semantically analyzed with the
goal of transforming them into a more efficient form. The analyses are based on database integrity constraints (precisely, primary and foreign keys)
explicitly defined in the database schema. These constraints are used to identify and remove redundant self-joins and unions from the unfolded SQL query.
%

\paragraph{\textbf{How Optimized are Optimized Queries?}}

There are real-world cases where the optimizations discussed above
are not enough to mitigate the exponential explosion caused by the unfolding.
As a result, the unfolded SQL queries
cannot be efficiently handled by DB engines~\cite{LRXC15}. However, the same
queries can usually be manually formulated in a succint way by database managers. A reason for this is that database dependencies cannot model certain domain constraints or storage policies that are
available to the database manager but not to the OBDA system. The next
example, inspired by the \statoil use case explained in Section \ref{sec:experiments}, illustrates this issue.

\begin{example}\label{ex:intro}

  The data stored at \statoil has certain properties that derive from domain constraints or storage policies. Consider a modified version of the query defining the class \textowl{:Wellbore} where all the attributes are projected out. According to storage policies for the database table \SQL{\small wellbore}, the result of the evaluation of this query against any database instance must satisfy the following constraints:
  \begin{enumerate*}[ label=\it(\roman*)]
  \item\label{item1} it must contain all the wellbores\footnote{i.e., individuals in the class \textowl{:Wellbore}} in the ontology (modulo templates);
  \item\label{item2} every tuple in the result must contain the information about name, date, and well (no nulls);
  \item\label{item3} for each wellbore in the result, there is exactly one date/well
    that is tagged as `actual'.
  \end{enumerate*}
  
  \paragraph{Query with Redundant Unions.} 
  Consider the SPARQL query retrieving all the wellbores, namely $\SQL{\SELECT{} * \WHERE{} \{?wlb \owl{rdf:type} \owl{:Wellbore}.\}}$.
  By ontological reasoning, the query will retrieve also the wellbores that can be inferred from the subclasses of \textowl{:Wellbore}
  and from the properties where \textowl{:Wellbore} is the domain or range.
  Thus, after unfolding and optimizations, the resulting SQL query has the structure $\pi_{wlb/\Box}(sql_1)$, with $sql_1 = (sql_{\owl{{:}Wellbore}} \cup sql_{\owl{{:}ProdWellbore}} \cup \pi_{\#}sql_{\owl{:hasInterval}})$,     
  where $\Box$ \texttt{\small= :Wellbore-\{wellbore\_s\}}, and \texttt{\small \# = wellbore\_s}.
  However, all the answers returned by $sql_1$ are also returned by the query $sql_{\owl{:Wellbore}}$ alone,
  when these two queries are evaluated on a data instance satisfying item~\ref{item1}. 

  \paragraph{Query with Redundant Joins.} 
  For the SPARQL query in~Example~\ref{ex:sparql}, the unfolded and optimized SQL translation is of the form $\pi_{wlb/\Box, cmp/\Diamond, w/\circ}(sql_2)$ with  $sql_2 = sql_1 \Join sql_{\owl{:completionDate}} \Join sql_{\owl{:isInWell}}$.
  Observe that the answers from $sql_2$ could also be retrieved from a projection and a selection over \texttt{\small wellbore}.
  This is because  $sql_1$ could be simplified  to  $sql_{\owl{:Wellbore}}$ and items~\ref{item2} and~\ref{item3}. 
  The problem we highlight here is that this ``optimized'' SQL query contains two redundant joins if storage policies and domain constraints are taken into account.  
\end{example}

It is important to remark that the constraints in the previous example cannot be expressed through schema dependencies like foreign or primary keys (because these constraints are defined over the output relations of SQL queries in the mappings, rather than over database relations\footnote{Materializing the SQL in the mappings is not an option, since the schema is fixed.}). Therefore, current state-of-the-art optimizations applied in OBDA cannot exploit this information.

\ignore{
A \emph{conjunctive query} $q(\textbf{x})$ is a first order formula of the form $\exists \textbf{y}.\ \varphi(\textbf{x}, \textbf{y})$, where $\varphi(\textbf{x}, \textbf{y})$ is a conjunction of atoms of the form $A(t)$, $P(t_1, t_2)$ (where $A \in N_C, P \in N_R$), and each $t, t_1, t_2$ is either a \emph{term} or an individual variable in $\textbf{x}, \textbf{y}$. Given a conjunctive query $q(\textbf{x})$, and an ontology $\mathcal{O} := (\T, \A)$, a n-ary tuple $\textbf{i} \in N_I^n$ is a \emph{certain answer} to $q(\textbf{x})$ iff $\mathcal{O} \models q(\textbf{i})$.

Let $\operatorname{ind}(\A)$ be the \emph{set of individuals in an ABox $\A$}. A \emph{first-order rewriting} of a conjunctive query $q(\textbf{x})$ over an ontology $\mathcal{O}:=(\A,\T)$ is a first order query $q_r(\textbf{x})$ such that, for each tuple of individuals $\textbf{i} \in \operatorname{ind}(\A)^{|\textbf{x}|}$, it holds $\mathcal{O} \models q(\textbf{i}) \Leftrightarrow \A \models q_r(\textbf{i})$. An ontology language $\mathcal{L}$ has \emph{FO-rewritability for conjunctive queries} iff there always exists a first-order rewriting for any conjunctive query issued on ontologies written in that language. OWL 2 QL is a \emph{maximal} ontology language having FO-rewritability. 

OBDA systems traditionally allow for \OWLQL~\cite{W3Crec-OWL2-Profiles} ontologies. \OWLQL is based on the \textit{DL-Lite}
family of lightweight description logics~\cite{CDLLR07,ACKZ09}, which guarantees 
that  queries over the ontology and the virtual RDF graph can be rewritten into equivalent SQL 
queries over the database.
For the sake of succinctness in the presentation, we  prefer to use relational
algebra  in formal statements;  and for the sake of clarity we use SQL syntax in the examples. 
We adopt the set semantics for relational algebra, and since we do not consider queries with aggregation,
the theoretical development applies to  SQL (which adopts bag semantics).

The building blocks in SPARQL are \emph{triple patterns}  such as
(?x \textowl{rdf:type} \textowl{:Wellbore}).
A set of triple patterns forms a \emph{BGP} (Basic Graph Pattern). BGPs can be combined
using the SPARQL operators join, optional, filter, projection, etc.



\ignore{
\begin{example}
\label{ex:statoil_running_init}
Following up on Example \ref{ex:intro}, 

\ignore{
\begin{center}
  \begin{minipage}{0.4\textwidth}\scriptsize
    \begin{lstlisting}[basicstyle=\ttfamily\scriptsize,mathescape,frame=none,frame=tb]
wellbore(wellbore_s, wellbore_id, well_s,
                            r_existence_kd_nm)
facility_clsn(facility_s, classification_sys,
                               fcl_class_name)
wellbore_interval(wellbore_s, wellbore_intv_s)
    \end{lstlisting}
  \end{minipage}
\end{center}
}	
As noted in the introduction, none of these views have  keys. 
\end{example}
}


\ignore{
$\owlind{:Wellbore-\{wellbore\_s\}} \ \owl{:hasInterval}$ $\owlind{:WellboreInterval-\{wellbore\_intv\_s\}}$ 
$\leftarrow \SQL{\SELECT{} wellbore\_s, wellbore\_intv\_s \FROM{}  wellbore\_interval}$

$\owlind{:Wellbore-\{wellbore\_s\}} \ \owl{:completionDate} \ \owlind{\{completion\_date\}}$
$\leftarrow \SQL{\SELECT{} completion\_date, wellbore\_s \FROM{}  wellbore}$ $\SQL{\WHERE{} wellbore.r\_existence\_kd\_nm = 'actual' }$

}




\medskip
}



\section{OBDA Constraints}\label{sec:tuningGeneral}

We now formalize two properties over an OBDA instance: 
\emph{exact predicates} and \emph{virtual functional dependencies}. 
We will then enrich the OBDA specification with a constraints component, 
stating that all the instances for the specification display such properties.
We show how this additional constraint component can be used to identify and remove redundant unions and joins from the unfolded queries.

From now on, let $\O = (\S, D)$ be an OBDA instance of a specification $\S = (\T, \M, \Sigma)$.


\subsection{Exact Predicates in an OBDA Instance}
\label{sec:exact-mappings-obda}

In real world scenarios it often happens that axioms in the ontology do not enrich the answers to queries. Often this is due to storage policies not available to the OBDA system.
%
This fact leads to redundant unions in the generated SQL, as shown in Example~\ref{ex:intro}.
In this section we show how certain properties defined on the mappings and the predicates, ideally deriving from such constraints, can be used to reduce the number of redundant unions
in the generated SQL queries for a given OBDA instance.


\begin{definition}[Exact Mapping]\label{dfn:exactmapping}
  Let $\M'$ be a set of
  mappings defining a predicate $A$.  We say that $\M'$ is exact for
  $A$ in $\O$ if 
  $\O \models A(\vec{a}) \text{ if and only if } ((\emptyset,\M', \Sigma),D)
  \models A(\vec{a}). $
\end{definition}



In practice it is often the case that the mappings for a particular
predicate declared in the OBDA specification are already exact. This leads
us to the  next definition.

\begin{definition}[Exact Predicate]
  A predicate $A$ is exact in $\O$ 
  if the set of all the mappings in $\M$ defining $A$ are exact for $A$
  in $\O$.
\end{definition} 

 Recall that \ontop adds new mappings to the initial set of mappings through the $\T$-mapping technique. For exact predicates, this can be avoided while producing the same saturated virtual RDF graph. 
Fewer mappings lead to unfoldings with less unions.

\begin{proposition} \label{prop:exact} 
  Let $\M'$ be exact for the predicate $A$ in $\T$.  
  Let $\M_\T'$ be the result of replacing all the mappings 
  defining $A$ in $\M_\T$ by $\M'$. 
	Then $\G^\O=\G^{(( \emptyset,\M_\T', \Sigma ), D)}$.
\end{proposition}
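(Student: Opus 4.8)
The plan is to strip the ontological saturation from both sides of the claimed equality and reduce it to a comparison of the two \emph{unsaturated} virtual graphs exposed by $\M_\T$ and $\M_\T'$. On the right, the ontology component is empty, so no assertion is ever inferred and $\G^{((\emptyset,\M_\T',\Sigma),D)}$ is just $\G^{\A_{\M_\T',D}}$. On the left, the defining property of $\T$-mappings gives $\G^\O = \G^{\A_{\M_\T,D}}$. Hence it suffices to prove the reasoning-free identity $\G^{\A_{\M_\T,D}} = \G^{\A_{\M_\T',D}}$.

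Next I would decompose each of these graphs according to the predicate of its triples, writing each as the disjoint union of its $A$-triples and its non-$A$-triples. Since every mapping generates assertions for a single predicate (the one in its head), and since $\M_\T'$ is obtained from $\M_\T$ by replacing precisely the mappings that define $A$ --- while $\M'$ itself defines only $A$ --- the two mapping sets contain exactly the same mappings defining any predicate $B \neq A$. Consequently the non-$A$ fragments of $\G^{\A_{\M_\T,D}}$ and $\G^{\A_{\M_\T',D}}$ are produced by the very same mappings and therefore coincide. It remains only to match the $A$-triples.

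For the $A$-triples I would chain the two sides through the exactness hypothesis. The $A$-triples of $\G^{\A_{\M_\T,D}}$ coincide with the $A$-triples of $\G^\O = \G^{\tup{\T,\A_{\M,D}}}$ by the $\T$-mapping property, i.e. with exactly the assertions $A(\vec a)$ such that $\O \models A(\vec a)$. The $A$-triples of $\G^{\A_{\M_\T',D}}$ are produced solely by $\M'$ and, the ontology being empty, are exactly the $A(\vec a)$ with $((\emptyset,\M',\Sigma),D) \models A(\vec a)$. These two sets are equal by the assumption that $\M'$ is exact for $A$ in $\O$. Combining this with the matching non-$A$ fragments from the previous step yields $\G^{\A_{\M_\T,D}} = \G^{\A_{\M_\T',D}}$, and hence the proposition.

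I expect the only genuinely delicate point to be the localization carried out in the second step: one must argue that the passage from $\M_\T$ to $\M_\T'$ perturbs the virtual graph only within the $A$-fragment, so that the \emph{global} $\T$-mapping identity can be soundly combined with the \emph{single-predicate} exactness condition. This is immediate once we observe that a mapping's head fixes the predicate of every triple it produces, but since the entire argument hinges on it I would state it explicitly rather than leave it to the reader.
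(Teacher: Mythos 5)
Your proof is correct and follows essentially the same route as the paper's own (sketched) argument: reduce via the $\T$-mapping property to comparing the unsaturated graphs of $\M_\T$ and $\M_\T'$, split the triples by predicate, note that mappings for predicates other than $A$ are untouched, and invoke exactness of $\M'$ for the $A$-fragment. Your explicit justification of the predicate-wise localization (a mapping's head fixes the predicate of every triple it produces) merely spells out a step the paper leaves implicit.
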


\begin{example}
The $\T$-mappings for \textowl{:Wellbore} consist of four mappings (see Example~\ref{ex:statoil-tmapping}). 
However, \textowl{:Wellbore} is an exact class (Example~\ref{ex:intro}). 
Therefore we can drop the three  $\T$-mappings for 
\textowl{:Wellbore} inferred from the ontology,
and leave only its  original  mapping.
\end{example}

\ignore{
\begin{corollary}
Given a SPARQL query $Q$, the answer of $Q$ over $(\T,\M,D)$ 
is the same as the answer of $Q$ over $(\emptyset, \M_\T', D)$, 
where $\M_\T'$ is as the mapping obtained in Theorem~\ref{prop:exact}.
\end{corollary}
}



\subsection{Functional Dependencies in an OBDA instance}\label{s:fds-in-obda}


Recall that in database theory 
 a functional dependency
(abbr. \fd) is an expression of the form
$\textbf{x}\rightarrow \textbf{y}$, read \emph{\textbf{x} functionally
  determines \textbf{y}}, where $\textbf{x}$ and $\textbf{y}$ are
tuples of attributes. We say that
\emph{$\textbf{x}\rightarrow\textbf{y}$ is over an attributes set $R$}
if $\textbf{x} \subseteq R$ and $\textbf{y}\subseteq R$. Finally,
\emph{$\textbf{x}\rightarrow \textbf{y}$ is satisfied by a relation
  $I$ on $R$} if $\textbf{x}\rightarrow \textbf{y}$ is over $R$ and
for all tuples $\vec{u},\vec{v}\in I$, if the value
$\vec{u}[\textbf{x}]$ of $\textbf{x}$ in $\vec{u}$ is equal to the
value $\vec{v}[\textbf{x}]$ of $\textbf{x}$ in $\vec{v}$, then
$\vec{u}[\textbf{y}]=\vec{v}[\textbf{y}]$. Whenever $R$ is clear from
the context, we simply say that $\textbf{x}\rightarrow \textbf{y}$ is
satisfied in $\I$.

A \emph{virtual functional dependency} intuitively describes a
functional dependency on a saturated virtual RDF graph.  We identify
two types of virtual functional dependencies:
\begin{compactitem}
\item \emph{Branching \vfd}: This dependency describes the relation
  between an object and a set of functional properties providing
  information about this object.  Intuitively, it corresponds to a
  ``star'' of ``functional-like''\footnote{A property which is
    functional when restricting its domain/range to individuals
    generated from a single template.} properties in the virtual RDF
  graph.  For instance, given a person, the properties describing its
  (unique) gender, national id, biological mother, etc. are a
  branching \vfd.

\item \emph{Path \vfd}: This dependency describes the case when, from
  a given individual and a list of properties, there is at most one
  path that can be followed using the properties in the list.
For instance, \emph{x} works in a single department \emph{y}, and \emph{y} has a single manager 
\emph{w}, and \emph{w} works for a single company \emph{z}. 
\end{compactitem}
We use these notions to identify those cases where a SPARQL join of
properties translates into a redundant SQL join.  \ignore{ Recall that
  during the optimization phase SQL joins like
  $\pi_{x/f}(sql_1) \Join \pi_{x/g}(sql_2)$ where $f \neq g$
  will be eliminated.  The SPARQL-to-SQL translation produces a join
  of two SQL queries from a SPARQL join of two properties \emph{only
    if} the mappings defining these properties share a template.
}

\begin{definition}[Virtual Functional Dependency]\label{def:nfd}
  Let $t$ be a template,
  and $S_t$ be the set of individuals in $G^{\O}$ generated from
  $t$. Let $P,P_1,\dots, P_n$ be properties in $\T$. Then
\begin{compactitem}
\item A \emph{branching \vfd} is an expression of the form
  $t \mapsto^b P_1 \cdots P_n$.  A \vfd \emph{$t \mapsto^b P$ is
    satisfied in $\O$} if for each element $s\in S_t$, there are no
  $o\neq o' \text{ in } G^{\O}$ such that
  $\{(s,P,o), (s,P,o')\}\subseteq G^{\O}$. A \vfd
  $t \mapsto^b P_1 \cdots P_n$ is satisfied in $\O$ if
  $t \mapsto^b P_i $ is satisfied in $\O$ for each
  $i \in \{1,\ldots,n\}$.

\item A \emph{path \vfd} is an expression of the form
  $t \mapsto^p{P_1 \cdots P_n} $.  A \vfd
  \emph{$t \mapsto^p{P_1 \cdots P_n} $ is satisfied in $\O$} if for
  each $s\in S_t$
  there is at most one list of nodes $(o_1,\dots, o_n)$ in
  $G^\mathcal{O}$ such that $ \{ (s,P_1,o_1), \dots,$
  $ (o_{n-1}, P_n, o_n) \}$ $ \subseteq G^\mathcal{O}$.
\end{compactitem}
\end{definition}
The next example shows, similarly as in~\cite{Weddell}, that 
general path \vfds cannot be expressed as a combination of path
\vfds of length 1.
\begin{example}
  Let  $\G^\O = \{(s, P_1, o_1), (o_1, P_2, o_2), (s, P_1, o_1')\}$,
  and $t$ a template such that $S_t = \{s\}$. Then,
  $t\mapsto^p{P_1P_2}$ is clearly satisfied in $\O$. However, $t\mapsto^p{P_1}$ is not.
\end{example}
A property $P$ might not be functional, 
but still $t\mapsto^b P$ might be satisfied in $\O$ for some $t$.
\begin{example}\label{ex:functProp-noVFD}
  Let 
  $\G^\O = \{(s, P, o_1), (s, P, o_2), (s', P, o_3)\}$, and $t$ a template such that $S_t = \{s'\}$. Then, the \vfd
  $t\mapsto^p{P}$ is satisfied in $\O$, but $P$ is not functional.
\end{example}
A functional dependency satisfied in the virtual RDF graph might not
correspond to a functional dependency over the database relations. We show this with 
an example:

\begin{example}\label{ex:opti1}
	Consider the following instance of the view \textowl{wellbore}.

        \begin{center}
          \footnotesize
          \begin{tabular}{c \tableColSmallSpace c \tableColSmallSpace c \tableColSmallSpace c \tableColSmallSpace c \tableColSmallSpace c}
            \toprule
	    \texttt{wellbore\_s}  & \texttt{year} & \texttt{month} & \texttt{day} &  \texttt{r\_existence\_kd\_nm} & \texttt{well\_s} \tableRowBreak
            \midrule
	    002         & 2010 &  04 & 01 & historic & 1 \tableRowBreak
	    002         & 2009 &  04 &  01 &     actual &   1 \tableRowBreak
            \bottomrule
          \end{tabular}
        \end{center}
        The mapping defining \owl{\small :completionDate}
        (c.f. Example~\ref{fig:mappingsEPDS}) uses the view
        \owl{\small wellbore} and has a filter \owl{\small
          r\_exis\-tence\-\_kd\_nm='actual'}. Observe that there is
        no FD (\owl{\small wellbore\_s}~$\rightarrow$~\owl{\small year
          month day}). However, the \vfd
        $\textowl{:Wellbore-\{\}} \mapsto^b \textowl{:completionDate}$
        is satisfied with this data instance, since in $\G^\O$ the
        wellbore \owl{\small :Wellbore-002} is connected to a single
        date \owl{\small "2010-04-01"\^{}\^{}xsd:date} through
        \textowl{:completionDate}.
\end{example}
Functional dependencies satisfied in a database instance often do not
correspond to any \vfd at the virtual level. We show this with an example:

\begin{example}\label{ex:opti3}
  Consider the table $T_1(x,y,z)$ with a single tuple: $(1,2,3)$.
  Clearly $x\rightarrow y$ and $x\rightarrow z$ are \fds satisfied in
  $T_1$.  Now consider the following mappings:
\begin{tabular}{l l}
\end{tabular}
\begin{lstlisting}[basicstyle=\ttfamily\scriptsize,mathescape,frame=none,frame=tb]
$\owlind{:\{x\}} \ \owl{P}_1 \ \owlind{:\{y\}}$ $\leftarrow \SQL{\SELECT{} * \FROM{} }$ $T_1~~~~~~~~~~\owlind{:\{x\}} \ \owl{P}_1 \ \owlind{:\{z\}}$ $\leftarrow \SQL{\SELECT{} * \FROM{} }$ $T_1$
\end{lstlisting}
Clearly, there is no \vfd involving $P_1$.
\end{example}

Hence, the shape of the mappings affects the
satisfiability of \vfds. Moreover, the ontology can also affect
satisfiability. We show this with an example:

\begin{example}
  Consider again the data instance $D_E$ from Example~\ref{ex:opti3},
  and the mappings~$\M_E$
  \begin{center}
  \begin{lstlisting}[basicstyle=\ttfamily\scriptsize,mathescape,frame=none,frame=tb]
$\owlind{:\{x\}} \ \owl{P}_1 \ \owlind{:\{y\}}$ $\leftarrow \SQL{\SELECT{} * \FROM{} }$ $T_1~~~~~~\owlind{:\{x\}} \ \owl{P}_2 \ \owlind{:\{z\}}$ $\leftarrow \SQL{\SELECT{} * \FROM{} }$ $T_1$
  \end{lstlisting}
  \end{center}
  Consider an OBDA instance $\O_E = ((\emptyset, \M_E, \Sigma_E) D_E)$. Then the virtual
  functional dependencies \owl{:\{\} }$ \mapsto^b P_1$ and \owl{:\{\}
  }$\mapsto^b P_2$ are satisfied in $\O$.  Consider another OBDA
  instance $\O_E' = ((\T_E, \M_E, \Sigma_E), D_E)$, where
  $\T_E = \{P_1 \owl{ rdfs:subClassOf } P_2\}$.  Then the two \vfds
  above are not satisfied in $\O_E'$.
\end{example}


\subsubsection{\vfd Based Optimization}
In this section we show how to optimize queries using \vfds.  Due to
space limitations, we focus on branching \vfds. The results for path
\vfds are analogous and can be found in the 
\ifextendedversion
appendix,
\else
technical report~\cite{techreport},
\fi
 as well as proofs.


\begin{definition}
  The set of mappings  
  $\M$ is \emph{basic for $\T$} if, for each property $P$ in $\T$, $P$
  is defined by at most one mapping in $\M_\T$. We say that $\O$ is basic if $\M$ is basic for $\T$.
\end{definition}

To ease the presentation, from now on we assume $\O$ to be basic.
We denote the (unique) mapping for $P_i$ in $\T$, $i \in \{1, \ldots, m\}$, 
as
\begin{equation*}\label{eq:Shapemapping}
    \begin{array}{l}
      t^i_d(\textbf{x}_i) ~~~P_i~~~t^i_r(\textbf{y}_i) \leftarrow sql_i(\textbf{z}_i).\\
    \end{array}
  \end{equation*}
  where $t^i_d$, and $t^i_r$ are templates for the domain and range of
  $P_i$, and $\textbf{x}_i$, $\textbf{y}_i$ are lists of attributes in
  $\textbf{z}_i$. The list $\textbf{z}_i$ is the list of projected
  attributes, which we assume to be the maximal list of attributes
  that can be projected from $sql_i$.

  Although we only consider basic instances, we show in the
\ifextendedversion
appendix
\else
technical report~\cite{techreport}
\fi
 how the results from this section
  can also be applied to the general case.

  \ignore{ For the sake of simplicity, in this section we assume fix
    restrict our OBDA instances to the case where each property $P_i$,
    $i \in \{1 \ldots m\}$, is defined by a single $\T$-mapping of the
    form:
    \begin{equation*}\label{eq:Shapemapping}
    \begin{array}{l}
      t^i_d(\textbf{x}_i) ~~~P_i~~~t^i_r(\textbf{y}_i) \leftarrow sql_i(\textbf{z}_i)\\
    \end{array}
\end{equation*}
In \cite{techreport} we show how to apply the results from this
section to the case where properties and classes are defined by more
than one $\T$-mapping.  }
We also assume that queries $sql_i(\vec{z}_i)$
always contain a filter expression of the form
$\sigma_{\text{notNull}(\textbf{x}_i,\textbf{y}_i)}$, even if we do not
specify it explicitly in the examples, since URIs cannot be
generated from nulls~\cite{W3Crec-R2RML}. Without loss of generality, we assume that
$\textbf{z}_1$ contains all the attributes in
$\textbf{x}_1$,$\textbf{y}_1, \ldots, \textbf{y}_n$.
  
In order to check satisfiability for a \vfd in an OBDA instance one can
analyze the DB based on the mappings and the ontology.
The next lemma formalizes this intuition.

\begin{lemma} \label{lm:detect1} Let $P_1,\dots, P_n$ be properties in $\T$ such
  that, for each $1 \le i < n$, $t_d^i=t_d^1$. Then, the \vfd
  $t_d^1\mapsto^b{P_1\dots P_n}$ is satisfied in $\O$ if and only if,
  for each $1 \le i \le n$, the \fd
  $\textbf{x}_{i}\rightarrow \textbf{y}_{i} $ is satisfied on
  $sql_i(\textbf{z}_i)^D$.
 \end{lemma}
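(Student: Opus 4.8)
The plan is to prove the equivalence one property at a time. By Definition~\ref{def:nfd}, the branching \vfd $t_d^1\mapsto^b P_1\cdots P_n$ is satisfied in $\O$ exactly when $t_d^1\mapsto^b P_i$ is satisfied for every $i$, and the right-hand side of the lemma is likewise a conjunction of the \fds $\textbf{x}_i\rightarrow\textbf{y}_i$ over $i$. So it suffices to fix one property $P=P_i$ (with $t_d^i=t_d^1$, the standing hypothesis) and show that $t_d^1\mapsto^b P$ holds in $\O$ if and only if $\textbf{x}\rightarrow\textbf{y}$ holds on $sql(\textbf{z})^D$, dropping the index $i$ for readability. The whole argument then reduces to relating the $P$-triples of the saturated virtual graph $\G^\O$ to the tuples of $sql(\textbf{z})^D$.

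First I would make this triple/tuple correspondence precise. Since $\O$ is basic, $P$ is defined by the single $\M_\T$-mapping $t_d(\textbf{x})\ P\ t_r(\textbf{y})\leftarrow sql(\textbf{z})$, and by the $\T$-mapping property $\G^\O=\G^{\A_{\M_\T,D}}$; hence every triple $(s,P,o)$ in $\G^\O$ comes from exactly this mapping, i.e.\ there is a tuple $\textbf{u}\in sql(\textbf{z})^D$ with $s=t_d(\textbf{u}[\textbf{x}])$ and $o=t_r(\textbf{u}[\textbf{y}])$, and conversely each such tuple yields such a triple (the filter $\sigma_{\text{notNull}(\textbf{x},\textbf{y})}$ guarantees these terms are always defined, so no tuple is silently dropped). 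In particular every subject of a $P$-triple lies in $S_{t_d^1}$, because $t_d=t_d^1$. I will use throughout that the templates $t_d,t_r$ are injective in the values substituted for their placeholders (a standard assumption on such templates~\cite{W3Crec-R2RML}); this is the only non-bookkeeping ingredient, as it is what turns equality of generated terms back into equality of database values.

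With this correspondence, both directions are short. For the ``if'' direction, assume $\textbf{x}\rightarrow\textbf{y}$ holds on $sql(\textbf{z})^D$ and take $s\in S_{t_d^1}$ with $(s,P,o),(s,P,o')\in\G^\O$; picking witnessing tuples $\textbf{u},\textbf{v}$, injectivity of $t_d$ turns $t_d(\textbf{u}[\textbf{x}])=s=t_d(\textbf{v}[\textbf{x}])$ into $\textbf{u}[\textbf{x}]=\textbf{v}[\textbf{x}]$, the \fd gives $\textbf{u}[\textbf{y}]=\textbf{v}[\textbf{y}]$, and applying $t_r$ yields $o=o'$; hence no branching occurs and $t_d^1\mapsto^b P$ holds. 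For the ``only if'' direction, assume $t_d^1\mapsto^b P$ and take $\textbf{u},\textbf{v}\in sql(\textbf{z})^D$ with $\textbf{u}[\textbf{x}]=\textbf{v}[\textbf{x}]$; then $s:=t_d(\textbf{u}[\textbf{x}])=t_d(\textbf{v}[\textbf{x}])\in S_{t_d^1}$ is the common subject of the triples $(s,P,t_r(\textbf{u}[\textbf{y}]))$ and $(s,P,t_r(\textbf{v}[\textbf{y}]))$ of $\G^\O$, which by the \vfd must have equal objects, so injectivity of $t_r$ forces $\textbf{u}[\textbf{y}]=\textbf{v}[\textbf{y}]$, establishing the \fd.

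The main obstacle is not the case analysis but justifying the triple/tuple correspondence rigorously: I need the basic assumption to exclude $P$-triples produced by other mappings, and the $\T$-mapping property to guarantee that ontological saturation introduces no further $P$-triples whose subjects fall outside $S_{t_d^1}$ (for instance through subproperty or inverse axioms). A secondary but essential point to state explicitly is template injectivity, since without it equal generated URIs need not originate from equal database values and the ``if'' direction breaks; it is precisely this injectivity, together with $t_d^i=t_d^1$, that aligns the single database \fd with the virtual branching \vfd.
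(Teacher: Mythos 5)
Your proof is correct and follows essentially the same route as the paper's: the paper's proof is a chain of equivalences that unfolds Definition~\ref{def:nfd} property by property, passes from the $P_i$-triples of $\G^\O$ to the tuples of $sql_i(\textbf{z}_i)^D$ in a single step labelled ``mappings assumptions for $P_i$'', and then recognizes the definition of an \fd. The difference is only presentational: you spell out what that one step silently bundles together---the basic-instance and $\T$-mapping properties pinning every $P_i$-triple to its unique defining mapping, the $\sigma_{\text{notNull}(\textbf{x}_i,\textbf{y}_i)}$ filter ensuring no tuple is dropped, and the injectivity of the templates $t_d^i, t_r^i$ needed to turn equality of generated terms back into equality of database values---assumptions the paper uses without explicit comment.
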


 \begin{example}\label{ex:vfd1}
   Consider the properties \owl{:inWell} and \owl{:completionDate}
   from our running example.  The lemma above suggests that the \vfd
   ${\owl{:Wellbore-\{\}}}\mapsto^b{
     \owl{:isInWell}~\owl{:completionDate}}$ is satisfied in our OBDA
   instance with a database instance $D$ if and only if \emph{(i)}
   \owl{wellbore\_s}$\rightarrow$\owl{well\_s} is satisfied in
   $sql_{:isInWell}^D$, and \emph{(ii)}
   \owl{wellbore\_s}$\rightarrow$\owl{year month day} is satisfied in
   $sql_{:completionDate}^D$.

   From Example~\ref{ex:intro}, there is an organization constraint
   for the view \owl{wellbore} forcing only one completion date for
   each ``actual'' wellbore. As a consequence, the two FDs \emph{(i)}
   and \emph{(ii)} hold in any database $D$ following this
   organization constraint. Therefore, the \vfd in such instance is
   also satisfied.

 \end{example}

 
\medskip

We now show how  \vfds can be used to \emph{find redundant joins} that can be eliminated
in the  SQL translations. 

\ignore{
Given a branching \vfd of the form  ${t_d}\mapsto^b{P_1\dots P_n}$ 
that is satisfied in a virtual RDF graph where the data instance
satisfies certain inclusion assertions between the SQL queries in the mappings, 
we can safely use the SQL query defining $P_1$ 
in place of the joins between the SQL queries
defining $P_i$ ($i=1\dots n$).
The next lemma formalizes this intuition.
}

 \begin{definition}[Optimizing Branching \vfd]\label{def:opt1}
   Let $t$ be a template.
   An \emph{optimizing branching \vfd} is an expression of the form
   $t \rightsquigarrow^b P_1 \cdots P_n$.
   An optimizing \vfd \emph{$t \rightsquigarrow^b P_1 \cdots P_n$ is satisfied in $\O$} if $t \mapsto^b P_1 \cdots P_n$ is satisfied in $\O$, and for each $i \in \{1, \ldots, n\}$ it holds 
   \begin{equation}\label{eq:optimizingBranchingPrec}
     \pi_{ \textbf{x}_{1},\textbf{y}_{i}}\text{sql}_1(\textbf{z}_1)^D \subseteq 
     \rho_{\textbf{x}_{1}/\textbf{x}_{i}}(\pi_{ \textbf{x}_{i},\textbf{y}_{i}}\text{sql}_i(\textbf{z}_i))^D
   \end{equation}
\end{definition} 

 \begin{example}\label{ex:epds6}

   Recall that the \vfd
   ${\owl{:Wellbore-\{\}}}\mapsto^b{ \owl{:isInWell} ,
     \owl{:completionDate}}$ in Example~\ref{ex:vfd1} is
   satisfied in our OBDA instance.
%
  The precondition~(\ref{eq:optimizingBranchingPrec}) holds because
  (a) the properties are defined by the same SQL query (modulo
  projection) and (b) the organization constraint ``each wellbore
  entry must contain the information about name, date, and well (no
  nulls)''.
Thus, the optimizing \vfd ${\owl{:Wellbore-\{\}}}\rightsquigarrow^b{ \owl{:isInWell},\owl{:completionDate}}$ is satisfied in this instance.
\end{example}
 
 \begin{lemma}\label{def:opt1}
   Consider $n$ properties $P_1,\dots,P_n$ with $t_{d}^i = t_d^1$, for each $1 \le i \le n$, and for which 
   ${t_d^1}\rightsquigarrow^b{P_1\cdots P_n}$ is satisfied in $\O$. Then 
   {\footnotesize
     \begin{align*}
       \pi_{\gamma}(sql_1(\textbf{z}_1))^D = ~& 
\pi_{ \gamma} (sql_1(\textbf{z}_1) \Join_{\textbf{x}_{1}=\textbf{x}_{2}} sql_2(\textbf{z}_2) \Join \cdots \Join_{\textbf{x}_{1}=\textbf{x}_{n}} sql_n(\textbf{z}_n))^D, 
     \end{align*}
   }
where $\gamma=\textbf{x}_{1},\textbf{y}_{1}, \ldots, \textbf{y}_{n}$.

\end{lemma}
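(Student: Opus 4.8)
The plan is to fix a database instance $D$, abbreviate $R_j := sql_j(\textbf{z}_j)^D$ for each $j$, and prove the claimed identity by a double inclusion of the two projected relations. Recall that, by the maximality assumption on $\textbf{z}_1$, the relation $R_1$ already carries all of the attributes $\textbf{x}_1, \textbf{y}_1, \ldots, \textbf{y}_n$ occurring in $\gamma$, so the left-hand side $\pi_\gamma(R_1)$ is well defined and reads every column off $R_1$. On the right-hand side, writing $J := R_1 \Join_{\textbf{x}_1 = \textbf{x}_2} R_2 \Join \cdots \Join_{\textbf{x}_1 = \textbf{x}_n} R_n$, the projection $\pi_\gamma(J)$ takes $\textbf{x}_1$ and $\textbf{y}_1$ from the $R_1$ factor and each $\textbf{y}_i$ (for $i \geq 2$) from the $R_i$ factor, exactly as the SPARQL-to-SQL unfolding binds the object of $P_i$ through $sql_i$. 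Keeping track of this sourcing of attributes is the only real bookkeeping in the argument.

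From the hypothesis that $t_d^1 \rightsquigarrow^b P_1 \cdots P_n$ is satisfied in $\O$, I would extract two ingredients. First, by definition of the optimizing branching \vfd, the inclusions~(\ref{eq:optimizingBranchingPrec}) hold: for each $i$, every pair $(u[\textbf{x}_1], u[\textbf{y}_i])$ realized in $R_1$ also occurs in $R_i$ as $(v[\textbf{x}_i], v[\textbf{y}_i])$ for some tuple $v$. Second, since all domain templates coincide ($t_d^i = t_d^1$), Lemma~\ref{lm:detect1} turns the branching \vfd $t_d^1 \mapsto^b P_1 \cdots P_n$ into the ordinary functional dependencies $\textbf{x}_i \rightarrow \textbf{y}_i$ on each $R_i$. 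Existence of matching tuples will drive one inclusion, and functionality the other.

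For $\pi_\gamma(R_1) \subseteq \pi_\gamma(J)$, I would take an arbitrary $u \in R_1$ and use~(\ref{eq:optimizingBranchingPrec}) to pick, for each $i \geq 2$, a tuple $v_i \in R_i$ with $v_i[\textbf{x}_i] = u[\textbf{x}_1]$ and $v_i[\textbf{y}_i] = u[\textbf{y}_i]$. All join conditions $\textbf{x}_1 = \textbf{x}_i$ then hold, so $(u, v_2, \ldots, v_n)$ is a tuple of $J$; projecting it onto $\gamma$ returns $(u[\textbf{x}_1], u[\textbf{y}_1], v_2[\textbf{y}_2], \ldots, v_n[\textbf{y}_n])$, which equals $\pi_\gamma(u)$ and thus witnesses membership. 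Note that only the inclusion precondition is needed here.

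The converse inclusion $\pi_\gamma(J) \subseteq \pi_\gamma(R_1)$ is where the main obstacle lies, and where functionality is indispensable. Given a join tuple $(u, v_2, \ldots, v_n)$, its $\gamma$-projection is $(u[\textbf{x}_1], u[\textbf{y}_1], v_2[\textbf{y}_2], \ldots, v_n[\textbf{y}_n])$, and I would show it already equals $\pi_\gamma(u)$ by establishing $v_i[\textbf{y}_i] = u[\textbf{y}_i]$ for every $i \geq 2$. Applying~(\ref{eq:optimizingBranchingPrec}) to $u$ yields some $v_i' \in R_i$ with $v_i'[\textbf{x}_i] = u[\textbf{x}_1]$ and $v_i'[\textbf{y}_i] = u[\textbf{y}_i]$; since $v_i$ also satisfies $v_i[\textbf{x}_i] = u[\textbf{x}_1]$, the tuples $v_i$ and $v_i'$ agree on $\textbf{x}_i$, and the functional dependency $\textbf{x}_i \rightarrow \textbf{y}_i$ forces $v_i[\textbf{y}_i] = v_i'[\textbf{y}_i] = u[\textbf{y}_i]$. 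Hence $u$ itself witnesses membership in $\pi_\gamma(R_1)$. The point to stress is that the inclusion alone only guarantees that the join is nonempty and introduces no new subjects, whereas functionality is precisely what prevents the join from pairing a subject with a spurious object value drawn from $R_i$; combining existence and uniqueness gives both inclusions and therefore the equality.
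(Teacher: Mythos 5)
Your proposal is correct and takes essentially the same route as the paper: the paper also obtains the inclusion $\pi_\gamma(sql_1)^D \subseteq \pi_\gamma(\text{join})^D$ easily from precondition~(\ref{eq:optimizingBranchingPrec}), and establishes the converse from that same containment together with the relational \fds $\textbf{x}_i \rightarrow \textbf{y}_i$ supplied by Lemma~\ref{lm:detect1}, merely phrasing it as a proof by contradiction with a case analysis (including an explicit case ruling out a \texttt{null} $\textbf{y}_j$-value in $sql_1$, which your direct appeal to the containment subsumes, since $sql_i$ filters nulls on $\textbf{x}_i,\textbf{y}_i$). Your explicit bookkeeping of which factor sources each $\textbf{y}_i$ in the projection is the intended reading and agrees with how Theorem~\ref{th:finalFD} uses the lemma.
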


\ignore{
Given a \vfd $\ni$ and an OBDA instance $\O$, we call it \emph{optimizing} VFD modulo $D$ if 

We call a \vfd that satisfies precondition~(\ref{eq:optimizingBranchingPrec}) a
\emph{P$^D_1$-optimizing
  branching \vfd}.
}


\ignore{
  We call a path \vfd for which one can verify precondition (\ref{eq:pathPrecondition}) an \emph{optimizing path \vfd}. In addition, 
  we call $sql_1$ in Lemmas~\ref{def:opt1} and \ref{def:opt2} the \emph{optimizing SQL}.
}



We now show how virtual functional dependencies can be used
in presence of triple patterns of the form $\textowl{?z} \textowl{ rdf:type } \textowl{C}$. 
As for properties, We assume that for each concept  $C_j$ we have a single $\T$-mapping of the form 
$C_j(t^j(\textbf{x})) \leftarrow sql_j(\textbf{z}_j)$.
\ignore{
\begin{definition}[Optimizing Classes]
We say that $C_j$ is optimizing w.r.t. the domain of $P_i$ if and only if
$t_j=t^i_d$ and\\
\[
\pi_{x}
sql_j(\textbf{z}_j)^D  
\supseteq 
\rho_{  \textbf{x}_{i}\mapsto x}
(\pi_{\textbf{x}_{i}}sql_i(\textbf{z}_i))^D
\]
We say that $C_j$ is optimizing w.r.t. the range of $P_i$ if and only if
$t_j=t^i_r$ and
\[
\pi_{x}
sql_j(\textbf{z}_j)^D 
\supseteq 
\rho_{  \textbf{y}_{i}\mapsto x}
(\pi_{\textbf{y}_{i}}sql_i(\textbf{z}_i))^D
\]
\end{definition}
}
\begin{definition}[Domain Optimizing Class Expression]
  A \emph{domain optimizing class expression} (domain \oce) is an
  expression of the form $t_j \rightsquigarrow^d_{P_i} C_j$. We say that
  $t_j \rightsquigarrow^d_{P_i} C_j$ is satisfied in $\O$ if
  $t_j=t^i_d$ and
  $\pi_{x}sql_j(\textbf{z}_j)^D \supseteq \rho_{x/\textbf{x}_{i}}
  (\pi_{\textbf{x}_{i}}sql_i(\textbf{z}_i))^D$.
\end{definition}

\begin{definition}[Range Optimizing Class Expression]
  A
  \emph{range optimizing class expression} (range \oce) is an
  expression of the form $t_j \rightsquigarrow^r_P C_j$. We say that
  $t_j \rightsquigarrow^r_{P_i} C_j$ is satisfied in $\O$ if
  $t_j=t^i_r$ and
  $\pi_{\textbf{x}}sql_j(\textbf{z}_j)^D \supseteq \rho_{\textbf{x}/\textbf{y}_{i}}
  (\pi_{\textbf{y}_{i}}sql_i(\textbf{z}_i))^D$ .
\end{definition}

Optimizing \vfds and classes give us a tool to identify those BGPs whose SQL translation can be optimized by removing redundant joins.

\begin{definition}[Optimizable branching BGP]\label{def:optimizable-branching-bgp}
   A BGP $\beta$ is optimizable w.r.t.  $\mathfrak{v} = {t_d}\rightsquigarrow^b{P_1\dots P_n}$ 
  if \emph{(i)} $\mathfrak{v}$ is satisfied in $\O$;
  \emph{(ii)} the BGP of triple patterns in $\beta$ involving properties is of the form 
$\textowl{?v P$_1$ ?v$_1$.~\dots ?v P$_n$ ?v$_n$.}$; 
and \emph{(iii)} for each triple pattern of the form $\textowl{?u rdf:type C}$ in $\beta$
, $\textowl{?u}$ is either the subject of some 
$P_i$ and
$t_d^i \rightsquigarrow^d_{P_i} C$ is satisfied in $\O$
, or  $?u$ is  in the object of some
 $P_i$ and $t_r^i \rightsquigarrow^r_{P_i} C$ is satisfied in $\O$.
\end{definition}

Finally, we prove that the standard SQL translation of optimizable BGPs contains redundant SQL joins that can be safely removed.

\begin{theorem}\label{th:finalFD}
  Let $\beta$ be an
  optimizable BGP w.r.t. ${t_d}\rightsquigarrow^x {P_1\dots P_n}$
  ($x=b,p$) in $\O$.  Let
  $\pi_{v/t_{d}^1,v_1/t_{r}^1,\ldots,v_n/t_{r}^n}sql_{\beta}$ be
  the SQL translation of $\beta$ as explained in
  Section~\ref{sec:sparql-qa}.  Let
  $sql_{\beta}'
  =sql_1(\textbf{x}_{1},\textbf{y}_{1}\dots,\textbf{y}_{n}) $.
Then  $sql_{\beta}^D$ and $sql_{\beta}'^D$  return the same answers.
\end{theorem}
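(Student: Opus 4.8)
The plan is to factor the redundancy in $sql_\beta$ into two independent parts — the self-joins coming from the property triple patterns and the joins coming from the class triple patterns — and to discharge each part by an already-established result, so that what remains is exactly $sql_1$ projected onto $\gamma := \textbf{x}_1,\textbf{y}_1,\ldots,\textbf{y}_n$, i.e. $sql_\beta'$. First I would record the shape of $sql_\beta$. By the unfolding of Section~\ref{sec:sparql-qa} and condition~(ii) of Definition~\ref{def:optimizable-branching-bgp}, the property patterns $\textowl{?v P$_i$ ?v$_i$}$ unfold to $sql_1(\textbf{z}_1),\dots,sql_n(\textbf{z}_n)$; since they share the subject variable $?v$ and (by basicness together with $t_d^i=t_d^1$) expose the same domain template, the join on $?v$ reduces to the equijoin on the domain attributes $\textbf{x}_1=\textbf{x}_i$. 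Each class pattern $\textowl{?u rdf:type C}$ unfolds to some $sql_j(\textbf{z}_j)$ and is joined on the attribute generating $?u$. Hence
\[ sql_\beta = sql_1 \Join_{\textbf{x}_1=\textbf{x}_2} sql_2 \Join \cdots \Join_{\textbf{x}_1=\textbf{x}_n} sql_n \Join (\text{class subqueries}), \]
and it suffices to prove $\pi_\gamma sql_\beta^D = \pi_\gamma sql_1^D$, because the output substitutions $v/t_d^1, v_1/t_r^1,\dots,v_n/t_r^n$ read only attributes in $\gamma$ and are applied equally to $sql_\beta$ and $sql_\beta'$.

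Next I would eliminate the property self-joins. Condition~(i) gives that ${t_d^1}\rightsquigarrow^b{P_1\cdots P_n}$ is satisfied in $\O$, so the branching-\vfd collapse lemma (Lemma~\ref{def:opt1}) applies verbatim and yields $\pi_\gamma(sql_1 \Join_{\textbf{x}_1=\textbf{x}_2}\cdots\Join_{\textbf{x}_1=\textbf{x}_n} sql_n)^D = \pi_\gamma(sql_1)^D$. Thus, setting aside the class subqueries, the property part already collapses to $sql_\beta'$.

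Then I would show that no class join removes a tuple. Fix a class pattern $\textowl{?u rdf:type C}$ unfolding to $sql_j$, and write $J$ for the property join. By condition~(iii), either $?u$ is the subject of some $P_i$ with $t_d^i\rightsquigarrow^d_{P_i} C$ satisfied, giving $\pi_x sql_j(\textbf{z}_j)^D \supseteq \rho_{x/\textbf{x}_i}(\pi_{\textbf{x}_i} sql_i(\textbf{z}_i))^D$, or $?u$ is the object of some $P_i$ with $t_r^i\rightsquigarrow^r_{P_i} C$ satisfied, giving the analogous containment on $\textbf{y}_i$. In the domain case, every subject value occurring in $J$ equals some $\textbf{x}_i=\textbf{x}_1$ value produced by $sql_i$, which by the containment already occurs in $\pi_x sql_j^D$; hence $J \Join sql_j$ is a semijoin dropping no tuple of $J$, and since $\gamma$ contains no attribute of $sql_j$ the fan-out of $sql_j$ collapses after projection under set semantics, so $\pi_\gamma(J\Join sql_j)^D=\pi_\gamma J^D$. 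The range case is symmetric. Iterating over all class patterns leaves $\pi_\gamma sql_\beta^D$ unchanged, and combined with the previous step this gives $\pi_\gamma sql_\beta^D=\pi_\gamma sql_1^D=sql_\beta'^D$, proving the claim. For $x=p$ the argument is identical except that the property step invokes the path-\vfd collapse lemma instead of the branching one.

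The step I expect to be the main obstacle is the class-join elimination. There one must argue precisely that each class join acts as an identity semijoin, which relies on three points that have to be checked rather than assumed: (a) the equijoin attribute in the property-join result is exactly the subject (resp.\ object) attribute of $P_i$, which holds because the template match $t_j=t_d^i$ (resp.\ $t_j=t_r^i$) lets URI equality reduce to equality of the underlying attributes; (b) the $\supseteq$ direction of the optimizing-class containment, which is what guarantees that no surviving tuple is filtered out; and (c) the standing assumptions that each $sql_i$ carries the filter $\sigma_{\text{notNull}(\textbf{x}_i,\textbf{y}_i)}$ and that $\textbf{z}_1$ contains all of $\textbf{x}_1,\textbf{y}_1,\dots,\textbf{y}_n$, so that no tuple is lost through nulls or through a missing projected attribute.
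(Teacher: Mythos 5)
Your proposal is correct, and its core skeleton coincides with the paper's appendix proof: unfold the BGP into a join of the $sql_i$, use the template agreement ($t_d^i = t_d^1$, resp.\ $t_r^{i} = t_d^{i+1}$) to push the URI-constructing functions outward so that the joins on SPARQL variables become equijoins on the underlying attributes, and then collapse the resulting multi-join to $\pi_\gamma sql_1$ by the equality furnished by the optimizing-\vfd{} lemma (Lemma~\ref{def:opt1}, or its path analogue). Where you genuinely go beyond the paper is your third step: the paper's proof writes $\beta$ as consisting \emph{only} of the property triple patterns and never discharges condition~\emph{(iii)} of Definition~\ref{def:optimizable-branching-bgp} --- the class patterns $\textowl{?u rdf:type C}$ and the joins they contribute simply do not appear in the displayed equations, so the domain/range \oce{} containments are never invoked. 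Your identity-semijoin argument (template match $t_j = t_d^i$ or $t_j = t_r^i$ reduces the URI join to an attribute equijoin; the $\supseteq$ containment guarantees no tuple of the property join is filtered out; projection onto $\gamma$ under set semantics kills any fan-out from $sql_j$) is exactly the missing piece, and your checklist (a)--(c) correctly isolates the assumptions that make it sound, including the $\sigma_{\text{notNull}}$ filters and the requirement that $\textbf{z}_1$ contain all of $\textbf{x}_1,\textbf{y}_1,\ldots,\textbf{y}_n$. In short: same route for the property self-joins, but your write-up closes an omission in the paper's own proof, at the modest cost of a somewhat longer case analysis.
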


\ignore{
  Observe that if we remove in Definition~\ref{def:opt1} the requirement about
  the  relational functional dependencies, then it is  
  not guaranteed that the \vfd can be used for optimization. 
  Consider the two  mappings for $P_1$, $P_2$ in Example~\ref{ex:opti3}.
  If $T_1$ now contains $(1,null,3)$ and $(1,2,null)$, although the query containment between 
  $sql_{P_1}$ and $sql_{P_2}$ holds (the queries are identical), 
  the self-join over $T_1$ 
  to obtain the (SPARQL) join of $P_1$ and $P_2$ triples patterns is not redundant\dl{I cannot find a place to this.}. 
}

\begin{corollary}
Let $Q$ be a SPARQL query.
Let  $sql_{Q}$ be the SQL translation of $Q$ as explained in Section~\ref{sec:sparql-qa}.
Let $sql'_{Q}$ be the SQL translation of $Q$ where all the SQL expressions corresponding to an optimizable BGPs w.r.t.
a set of \vfds have been optimized as stated
in Theorem~\ref{th:finalFD}.
Then  $sql_{Q}^D$ and $sql_{Q}^{'D}$  return the same answers.
\end{corollary}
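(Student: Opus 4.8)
The plan is to lift the single-BGP statement of Theorem~\ref{th:finalFD} to an arbitrary query $Q$ by exploiting the \emph{compositional} nature of the SPARQL-to-SQL translation described in Section~\ref{sec:sparql-qa}. Recall that $sql_Q$ is produced by walking the tree representation of the SPARQL algebra expression of $Q$: each leaf is a BGP $\beta$, translated into $sql_\beta$, and each inner node applies the SQL operator (Project, InnerJoin, LeftJoin, Union, or Filter) corresponding to its SPARQL operator. Thus $sql_Q$ is the image of the algebra tree of $Q$ under a homomorphic map into a tree of SQL operators whose leaves are the $sql_\beta$. The optimized translation $sql_Q'$ is the \emph{same} operator tree in which every leaf $sql_\beta$ associated with an optimizable BGP $\beta$ (w.r.t.\ the given set of \vfds) is replaced by its optimized form $sql_\beta'$, while all other leaves and all inner nodes are left untouched.

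First I would record the only algebraic fact we need: evaluation of relational algebra is \emph{referentially transparent}, i.e.\ for every operator $\mathrm{Op}\in\{\pi,\Join,\leftouterjoin,\cup,\sigma\}$ the value $\mathrm{Op}(E_1,\dots,E_k)^D$ depends on the operands only through their evaluations $E_1^D,\dots,E_k^D$, not through their syntactic form. (Since the queries contain no aggregation, the set-semantics argument transfers verbatim to the bag semantics of \sql, as already noted in the preliminaries.) Given this, the proof proceeds by structural induction on the algebra tree of $Q$. In the base case a leaf is either a non-optimizable BGP, where $sql_\beta'=sql_\beta$ and there is nothing to show, or an optimizable BGP, where Theorem~\ref{th:finalFD} yields $sql_\beta^D=sql_\beta'^D$. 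In the inductive step the children $E_i$ and $E_i'$ of an inner node satisfy $E_i^D=E_i'^D$ by the induction hypothesis, so referential transparency gives $\mathrm{Op}(E_1,\dots,E_k)^D=\mathrm{Op}(E_1',\dots,E_k')^D$; applying this at the root yields $sql_Q^D=sql_Q'^D$.

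The step I expect to be the main obstacle is checking that the substitution is \emph{signature-compatible}, so that plugging $sql_\beta'$ into its parent operator is both well-typed and semantics-preserving. Two points need care. \emph{(i)} The structural optimization pushes the URI-template projections $\pi_{v/t}$ to the root of the tree, so at the BGP level both $sql_\beta$ and $sql_\beta'$ range over database values; one must verify that $sql_\beta'$ exposes exactly the attributes $\textbf{x}_1,\textbf{y}_1,\dots,\textbf{y}_n$ that the surrounding joins, filters, and the root templating refer to, so that the bindings of all query variables of $\beta$ (hence the attributes on which the parent operators join or filter) are preserved. This is precisely what the form $sql_\beta'=sql_1(\textbf{x}_1,\textbf{y}_1,\dots,\textbf{y}_n)$ of Theorem~\ref{th:finalFD} guarantees. \emph{(ii)} Among the inner operators, LeftJoin ($\leftouterjoin$) is the only one whose output is sensitive to the \emph{non-matching} tuples of an operand, so monotonicity alone would not suffice; here it is essential that Theorem~\ref{th:finalFD} delivers an \emph{exact} equality $sql_\beta^D=sql_\beta'^D$ rather than a mere containment, which closes this case as well. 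Once signature-compatibility is established, referential transparency makes the induction routine.
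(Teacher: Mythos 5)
Your proof is correct and is essentially the argument the paper leaves implicit: the paper states this corollary without any proof (none appears in the main text or the appendix), treating it as immediate from Theorem~\ref{th:finalFD} together with the compositional, leaf-by-leaf nature of the SPARQL-to-SQL translation, which is precisely the structural induction you carry out. Your explicit attention to signature compatibility of the substituted subquery and to the LeftJoin case (where the exact equality $sql_\beta^D = sql_\beta'^D$, rather than a mere containment, is needed) soundly fills in the details the paper omits rather than taking a different route.
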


\begin{example}
It is clear that the class 
\textowl{:Wellbore} is  optimizing w.r.t.  the domain of 
\textowl{:completion\-}\textowl{Date} and  \textowl{:is}\-\textowl{InWell}.
Since $\owl{:Wellbore-\{\}}\rightsquigarrow^b{\owl{:completionDate}, \owl{:isInWell}} $
is satisfied (c.f. Example~\ref{ex:epds6}),
one can allow the semantic optimizations to safely remove redundant joins
in query $sql_1$, sketched in Example~\ref{ex:intro}.
From Theorem~\ref{th:finalFD}, it follows that,
$sql_{:Wellbore} \Join sql_{:completionDate} \Join sql_{:isInWell}$
can be by simplified  to  $sql_{:Wellbore}$. 
\end{example}

%


\ignore{
The first thing to observe is that some \vfds are harder to find than others, since Lemma~\ref{def:opt1} requires a
containment check that is in general expensive, but that is trivial for some particular cases (e.g., all the $P_i$'s are
defined in the mappings by a same view). 

\begin{lstlisting}[basicstyle=\ttfamily\scriptsize,mathescape,frame=none,frame=tb]
$\owlind{f(id)} \ \owl{P1} \ \owlind{g(at1)} \leftarrow$ $\SQL{\SELECT{} id, at1 \FROM{} T}$ 
$\owlind{g(at1)} \ \owl{P2} \ \owlind{at2} \leftarrow$ $\SQL{\SELECT{} at1, at2 \FROM{} T}$
\end{lstlisting}

%

For example, consider a template $t_d$, and a $P^D_1$-optimizing \vfd
$t_d\mapsto^b{P_1, P_2}$. 
Now consider the following query:

\begin{center}
\textowl{SELECT ?x ?y ?z ?w WHERE \{?x $P_1$ ?y . ?x $P_2$ ?z . ?x $P_k$ ?w.\}}
\end{center}

From the $P^D_1$-optimizing \vfd $S_{t_d}\mapsto^x{P_1\dots P_n}$ and Definition~\ref{def:optimizable-branching-bgp},
we know that the SPARQL join between $P_1$ and $P_2$ translates into a redundant join in SQL. However, the SPARQL
join between $P_1$ and $P_k$ translates into a non-redundant SQL join that cannot be eliminated. 
In order to support this scenario, maximal sets of redundant joins should be detected during the unfolding phase, and this has a cost which might be just a burden on (more common) easy queries. 
Since \ontop, like---at to the best of our knowledge---all other OBDA systems, 
is not able yet to produce an execution model that
realistically estimate execution times for queries, we preferred not to take advantage of this optimization. 
In other words, the current implementation of \ontop supports only the kind of 
\vfds for which a $P_k$ as defined in the example above does not exist, 
and for which there the properties $P_i$, $1 \le i \le n$ are defined by a view over the same table. 
Still, we shall show in our Experiments Section~\ref{sec:experiments} how this class of \vfds is powerful enough to sensibly
improve the execution times in real world scenarios. 
Our tool for finding \vfds finds exactly this class of \vfds. 
}

\ignore{

This tuning technique was driven by a scenario that arises in the Statoil use case.
EPDS contains thousands  of  views, many of which are used in the mappings.
These views have no  keys and therefore the SQL translations contain several self-joins
that cannot be optimized (removed) by \ontop.\footnote{ 
Standard semantic optimization techniques remove a self-join 
only if the join is over the primary key for that table.}
This redundancy arises often in OBDA because the RDF data
model (over which SPARQL operates) is a ternary model (s p o) while the
relational model is n-ary, hence, the SPARQL equivalent of \texttt{SELECT *
FROM T} on an n-ary table \texttt{T} requires exactly n triple patterns. When
translating each of these triple patterns, 
the SPARQL-to-SQL algorithm
 will
generate an SQL query with exactly n-1 self-JOIN operations.

It has been shown that keeping those redundant JOINs is detrimental for 
performance~\cite{journalMariano}.
However, in the \statoil use case, sometimes one cannot avoid using the views,
and also there is a trade-off between the 
cost  of using the views  
(recall that modifying the DB is not an option),
and the cost of executing the original query defining the views (containing indexed tables).
Here we explore a third option, that is, 
finding  functional dependencies in the graph 
exposed from the views, and based on those dependencies allow the semantic query optimization techniques
to remove the redundant joins, even if there are no primary keys, 
nor functional dependencies explicitly defined in
the views.
In this section we focus our attention on defining and formalizing functional dependencies in OBDA.

\medskip

Recall that in database theory~\cite{AbHV95}  a functional dependency (abbr. \fd) is an expression of the form $X\rightarrow Y$, 
read \emph{X functionally determines Y} where $X$ and $Y$ are attribute sets.
$X\rightarrow Y$ is over an attribute set $R$ if $X\cup Y\subseteq R$.
$X\rightarrow Y$ is satisfied by a relation $I$ on $R$ if  $X\rightarrow Y$ is over $R$  and for all tuples $u,v\in I$,
if whenever the value of $X$ in $u$ ($u[X]$) is equal to the value of $X$ in $v$ ($v[X]$) then it follows that $u[Y]=v[Y]$.

A \emph{virtual functional dependency} is a statement that describes a semantic constraint on the virtual RDF graph
exposed by the ontology and the mappings.
Intuitively we  have two types of virtual functional dependencies:

\begin{compactitem}
\item \emph{Branching \vfd}: 
This dependency describes the relation between an object and a 
set of functional properties providing information about this object.
Intuitively, it corresponds to a ``star'' in the virtual RDF graph.
For instance,  given a person, the properties describing its gender, 
national id, biological mother, etc. are a branching \vfd.

\item \emph{Path \vfd}: This dependency describes the case 
when from a given  object, and a list of properties, 
there is at most one path that can be followed
using the properties in the list.
For instance, \emph{x} works in a single department \emph{y}, and \emph{y} has a single manager 
\emph{w}, and \emph{w} works for a single company \emph{z}. 

\end{compactitem}

The key idea behind these definitions is to identify those cases 
where the SPARQL join of properties translates  to redundant SQL join.
Observe that the SPARQL-to-SQL translation produce a join of two SQL queries from  a SPARQL join of two properties,
if the mappings defining these properties share a template.
Thus, we define \vfds not for the whole domain of these properties  
(defined by several mappings with possibly different templates), 
but for the fragments of these domains that can be joined. That is, for
a (not necessarily strict) subset $S$ of their domain 
that share a common template.
Note that all the subjects generated 
from the same mapping  share a common template, the same holds for objects.
A property might not be functional in general, but it can be functional for a subset of its domain (defined for a particular
set of mappings).

\begin{definition}[Named Virtual Functional Dependency]\label{def:nfd}
  Let $G^{\T,\M,D}$ be an RDF graph exposed by the OBDA instance
  $(\T,\M,D)$, let $S$ be a set of nodes in $G$ and $P,P_1,\dots, P_n$ be
  properties in $\T$. Then 
\begin{compactitem}
\item A \textbf{branching \vfd} is an expression of the form $S\mapsto^b P $,
 read property $P$ is  branch-functional when its domain is restricted to the nodes $S$.
$S\mapsto^b P $ is satisfied by $(\T,\M,D)$ if for each element $s\in S$, there are no $o\neq o' \text{ in } G^{\T,\M,D}$ such that
$\{(s,P,o), (s,P,o')\}\subseteq G^{\T,\M,D}$. The definition is
 lifted to
set
 of properties in the usual way: $S\mapsto^b P_1 \ldots
 P_n$ iff $S\mapsto^b P_i $ for each $i \in \{1,\ldots,n\}$.  
 
\item  A \textbf{path \vfd} is an expression of the form $S\mapsto^p{P_1\dots P_n} $,
read the list of properties $P_1\dots P_n$ is path-functional when the domain of $P_1$ is restricted to the nodes $S$.
$S\mapsto^p{P_1\dots P_n} $ is satisfied by  $G^{\T,\M,D}$ if 
for each $s\in S$ 
 there is at most one set of nodes
$\{o_1,\dots, o_n\}$ in $G^{\T,\M,D}$ such that
$ \{ (s,P_1,o_1), \dots, (o_{n-1}, P_n, o_n) \}$ $ \subseteq G^{\T,\M,D}$.

\end{compactitem}
%
\end{definition}

We call it \emph{named} virtual functional dependencies because it is defined over 
the exposed RDF graph $G^{\T,\M,D}$. 
This definition disregards unnamed individuals (i.e., individuals inferred by axioms 
containing existentials) since the semantic optimizations on the SQL query do not depend on them;
see~\cite{techreport} for further discussions. 
Similarly as in~\cite{Weddell}, it is worth noticing that by 
allowing paths of arbitrary length  in  path \vfds 
we can express constraints that cannot be expressed by a set of
constraints of length 1. 
In particular, observe that $S\mapsto^p{P_1,P_2,P_3} $ does not necessarily imply that $S\mapsto^p{P_1,P_2}$.

Recall that a property $P$ might not be functional,
but $S\mapsto^b P$ might hold for some $S$ in $G^{\T,\M,D}$.
Moreover, a functional dependency in the virtual RDF graph might not correspond
to a  functional dependency at the DB level. We show this with an example: 

\begin{example}\label{ex:opti1}
	Consider  the view \textowl{wellbore} used in Example~\ref{ex:intro}.
	Suppose that there is a single wellbore $001$ that has two completion dates:
	01-04-2019 and 01-04-2010.
	The ``type'' (r\_exis\-tence\-\_kd\_nm) of the first completion date  is \emph{historic} and the second one is \emph{actual} representing a drill
	extension.
	\ignore{
	  \begin{minipage}{0.8\textwidth}\scriptsize
	    \begin{lstlisting}[basicstyle=\ttfamily\scriptsize,mathescape,frame=none,frame=tb]
	wellbore_s  completion_date  r_existence_kd_nm well_s
	002         01-04-2010	     historic           1
	002         01-04-2009	     actual             1
	\end{lstlisting} 
	\end{minipage}\\
	}
	The mapping defining \texttt{completion\_date} uses the table \texttt{\small wellbore\_s}
	and has a filter \texttt{\small r\_exis\-tence\-\_kd\_nm = 'actual'}.
	Even though there is no functional dependency between \texttt{wellbore\_s}  
	and \texttt{completion\_date} since a single wellbore has two completion dates, there is 
	a \vfd for the property \textowl{:completionDate}
	 in the graph generated from these tables, since in that graph the wellbore is linked through the property \textowl{:completionDate} only with 01-04-2010.
	 $\hfill\Box$
\end{example}

Observe that the functional dependencies  at the DB level are not always lifted to the virtual level. 
The mappings and the  ontology can  break the functionality as shown in the following example: 

\begin{example}\label{ex:opti3}
Consider the table $T_1(x,y,z)$  with a single tuple: $(1,2,3)$.
Clearly  $x\rightarrow yz$  is a \fd in $T_1$.
Now consider the following mappings:
\begin{lstlisting}[basicstyle=\ttfamily\scriptsize,mathescape,frame=none,frame=tb]
$\owlind{x} \ \owl{P1} \ \owlind{y}$ $\leftarrow \SQL{\SELECT{} * \FROM{} }$ $T_1~$  and    $~~\owlind{x} \ \owl{P1} \ \owlind{z}$ $\leftarrow \SQL{\SELECT{} * \FROM{} }$ $T_1$
\end{lstlisting}
Clearly, there is no \vfd involving $P_1$.
If in the second mapping we change  $P_1$ to $P_2$,
then these two properties are functional as far as RDF is concerned.
Despite the \fd at the DB level, and the \vfd at the graph level, 
it is enough to add an axiom in the ontology of the form $P_1 \textowl{ rdfs:subClassOf } P_2$
to break the virtual functional dependency again.   	 $\hfill\Box$
\end{example}


\myPar{\vfd Based Optimization}
In this section we show how to optimize queries using  \vfd.
For the sake of simplicity in this section we will assume that  each property $P_i$ 
is defined by    $\T$-mappings  (c.f. Section~\ref{sec:background}) of the form:
\vspace{-2ex}
\begin{equation}\label{eq:Shapemapping}
\begin{array}{l}
P_i(t^i_d(\vec{x}),t^i_r(\vec{y})) \leftarrow  sql^k_i(\vec{z}_i)\\
\end{array}
\vspace{-1ex}
\end{equation}
where 
$t^i_d$,  and $t^i_r$ are  URI templates, 
for the domain and range of $P_i$, and $\vec{x}$ and  
$\vec{y}$  are lists of attributes in $\vec{z}_i$.
The list $\vec{z}_i$ are all the projected attributes, and this is
the maximal set of attributes that can be projected from 
the query in $(\ref{eq:Shapemapping})$.
In OBDA $sql^k_i$ always contains a filter expression of the form 
$\sigma_{\text{notNull}(\vec{x},\vec{y})}$,
 since URIs cannot be constructed out of nulls~\cite{W3Crec-R2RML}.
For the sake
of presentation we will often omit this filter.
Note that under these assumptions each property  uses a single pair of templates in all $\T$-mappings.
Thus, to ease the  presentation,
if a property (or a class) has several $\T$-mappings, we assume we can merge  the mappings into a single mapping with the 
union of the SQL queries (plus some necessary renaming).
Therefore, in the remainder we assume that each property and each class is defined by a single $\T$-mapping
and we drop the supra-index $k$ in the SQL.
In \cite{techreport} we show how to lift the constraints enforced here.

 In the following let $G^{\T,\M,D}$ be a virtual RDF graph and 
 $S^{i}_{t_d}$ and $S^{i}_{t_r}$ be sets of
 URI nodes in $G^{\T,\M,D}$ generated out of the mapping of
  $P_i$ of the form (\ref{eq:Shapemapping}).
 Let $X^{i}_{t_d}$  be the attributes occurring in template $t^i_d$ of  $P_i$.
 Let $Y^{i}_{t_r}$ be the attributes occurring in template $t^i_r$ of  $P_i$.

 In order to detect \vfds we need to analyze the DB based on the mappings and the ontology. 
 To get an intuition, consider Example~\ref{ex:intro} and the properties  \owl{:inWell} and \owl{:hasInterval}.
 There is a branching \vfd of the form $S_{\text{Wellbore-\{\}}}\mapsto^b{ \owl{:inWell} , \owl{:hasInterval}}$ if
 \emph{(i)} there is an \fd of the form \owl{wellbore\_s}$\mapsto$\owl{well\_s} in the SQL \emph{used in the
   mapping }defining \owl{:inWell}; and \emph{(ii)} there is a similar functional dependency
  between the attributes used in the definition of \owl{:hasInterval}.
The next lemma formalizes this intuition.
 
 \begin{lemma} 
   Let $P_1,\dots, P_n$  be properties in $\T$ such that, for each $1 \le i \le n$, $t_d^i=t_d$. Then,
   the \vfd $S_{t_d}\mapsto^b{P_1\dots P_n}$ is satisfied in $G^{\T,\M,D}$ if and only if, for each
   $1 \le i \le n$, the relational FD $X^{i}_{t_d}\rightarrow Y^{i}_{t_r} $ is satisfied on
   $sql_i(\vec{z_i})$. 

 \end{lemma}

 
 \begin{lemma}
   Let $P_1,\dots, P_n$ be properties in $\T$ such that, for each $1 \le i \le n$, $t_{r_i} = t_{d_{i+1}}$.
   Then, the VFD $S^1_{t_d}\mapsto^p{P_1\dots P_n}$ is satisfied in $G^{\T,\M,D}$ if and only if
   the relational fd
   $ X^{1}_{t_d}\rightarrow Y^{1}_{t_r} \dots  Y^{n}_{t_r}$ 
   is satisfied in
   $ 
   \pi_{X^{1}_{t_d}Y^{1}_{t_r}\dots Y^{n}_{t_r}}(sql_1(z_1)) \Join_{Y^{1}_{t_r}=X^{2}_{t_d}} sql_2(z_2) \Join_{Y^{2}_{t_r}=X^{3}_{t_d}}\dots  \Join_{Y^{n-1}_{t_r}=X^{n}_{t_d}} sql_n(z_n))   
   $.
	  
 \end{lemma}

 The above lemmas give us a characterization of \vfds in terms of relational \fds 
 over the SQLs in the mappings. 
 We now  provide sufficient conditions for finding redundant joins 
 in the  SQL translations.

 \ignore{However, we still need to specify how \vfd can be used to remove redundant joins in the SQL translations of SPARQL queries. Intuitively, a SPARQL join between $P_1$ and $P_2$ translates to a redundant join when the SQL query definining $P_1$ returns the same answers of the join between the SQL queries defining $P_1$ and $P_2$. The following definition formalizes this intuition:
   
   \begin{definition}[Redundant Branching SQL Join]
   Consider a \sparql join $J_{sparql} :=$ \textowl{?x P$_1$ ?y$_1$ ; P$_2$ ?y$_2$.} Let
   $J_{sql} := \pi_{ X^{1}_{t_d},Y^{1}_{t_r}}sql_1(\vec{z}_1) \Join \rho_{ X^{2}_{t_d}\mapsto  X^{1}_{t_d}}(\pi_{ X^{2}_{t_d},Y^{2}_{t_r}}sql_2(\vec{z}_2))$ be the standard SQL
   translation of $J_{sparql}$. Then the SQL join in $J_{sql}$ is redundant if there exists an attribute $W \in \vec{z}_1$ such that
   $\rho_{ W \mapsto  Y^{2}_{t_r}} ( \pi_{X^{1}_{t_d},Y^{1}_{t_r},W}sql_1(\vec{z}_1) ) = J_{sql}$.
   \end{definition}
 }

 \begin{lemma}\label{def:opt1}
   Consider $n$ properties $P_1, \dots P_n$ with $t_{d_i} = t_d$, for each $1 \le i \le n$, and for which 
   $S_{t_d}\mapsto^b{P_1\dots P_n}$ is satisfied in $G^{\T,\M,D}$. 
   Then, if 
   \vspace{-1.5ex}
   \begin{equation}\label{eq:optimizingBranchingPrec}
     \footnotesize \text{for each } 1 \le i \le n\text{, }\pi_{ X^{1}_{t_d},Y^{i}_{t_r}}\text{sql}_1(\vec{z}_1) \subseteq \rho_{ X^{i}_{t_d}\mapsto  X^{1}_{t_d}}(\pi_{ X^{i}_{t_d},Y^{i}_{t_r}}\text{sql}_i(\vec{z}_i))
     \vspace{-1ex}
   \vspace{-1ex}
   \end{equation}
      it follows that 
   \vspace{-1.5ex}
   {\footnotesize
     \begin{equation*}
       \pi_{ X^{1}_{t_d},Y^{1}_{t_r}, \ldots, Y^{n}_{t_r}}sql_1(\vec{z}_1) = \pi_{ X^{1}_{t_d},Y^{1}_{t_r}}sql_1(\vec{z}_1) \Join \rho_{ X^{2}_{t_d}\mapsto  X^{1}_{t_d}}(\pi_{ X^{2}_{t_d},Y^{2}_{t_r}}sql_2(\vec{z}_2)) \Join \cdots \Join \rho_{ X^{n}_{t_d}\mapsto  X^{1}_{t_d}}(\pi_{ X^{n}_{t_d},Y^{n}_{t_r}}sql_n(\vec{z}_n)).
       \vspace{-3ex}
     \end{equation*}
   }

\end{lemma}

\ignore{
\begin{definition}[Optimizable branching \vfd]\label{def:opt1}
Let $P_1\dots P_n$  be properties and $t_d$ be the template used for the domains of $P_1\dots P_n$.	
Let $S_{t_d} = S^1_{t_d}$.
We say that 
$S_{t_d}\mapsto^b{P_1\dots P_n}$ 
is $P_1$-optimizable if 
for each property $P_i$, $i=1\dots n$, 
the relational \fd $X^{i}_{t_d}\rightarrow Y^{i}_{t_r} $ is satisfied on
 $sql_i(\vec{z_i})$  and it holds that:
~~~~$
\pi_{ X^{1}_{t_d},Y^{i}_{t_r}}sql_1(\vec{z}_1) 
\subseteq \rho_{ X^{i}_{t_d}\mapsto  X^{1}_{t_d}}(\pi_{ X^{i}_{t_d},Y^{i}_{t_r}}sql_i(\vec{z}_i))
$$\hfill\Box$
\end{definition}
} 

We call a \vfd for which one can verify precondition (\ref{eq:optimizingBranchingPrec}) a \emph{P$_1$-optimizable branching \vfd}.
Observe that the previous definition requires $\vec{z}_1$ to contain all the attributes $ Y^{i}_{t_r}$ for $i=1\dots n$.
A particular case of this, which holds in EPDS, is when every $P_i$ is defined by a query over the same table or view. 

\begin{example}\label{ex:epds6}
  Consider the queries introduced in Examples~\ref{ex:intro}.
  Let $S$ be the set of wellbore URIs. 
  Recall that all the wellbores share  the same URI template.
  We know that:\\
  \centerline{
   \textowl{wellbore\_s} $\rightarrow$ \textowl{well\_s} ~~and~~ 
  \textowl{wellbore\_s} $\rightarrow$ \textowl{completion\_date}} \\
  hold in: $\sigma_{r\_existence\_kd\_nm = 'actual'}(\texttt{\small wellbore})$.
The query containment required by Lemma~\ref{def:opt1} holds trivially since the properties
are defined by the same SQL query (modulo projection). Thus, the following \vfd 
is $\owl{:completionDate}$-optimizable:  
   \vspace{-1.5ex}
  \[
  S\mapsto^b{\owl{:completionDate}, \owl{:isInWell}} 
  \] 
     \vspace{-1ex}
\end{example}


A similar lemma holds for path \vfds. For the sake of clarity, we omit the renaming operations needed to avoid
possible attribute name ambiguity in the outer projection.

\begin{lemma}\label{def:opt2}
    Let $P_1\dots P_n$  be properties and $t_d$ be the template used for the domain of $P_1$.	
    Let the $t_{r_i} = t_{d_{i+1}}$, for each $1 \le i < n$. Assume $S^1_{t_d}\mapsto^p{P_1\dots P_n}$ to be satisfied in $G^{\T,\M,D}$. Then, if
    \vspace{-2ex}
    \begin{equation}\label{eq:pathPrecondition}
      \footnotesize \pi_{X^{1}_{t_d}Y^{1}_{t_r}\dots Y^{n}_{t_r}}sql_1(z_1) \subseteq  \pi_{X^{1}_{t_d}Y^{1}_{t_r}\dots Y^{n}_{t_r}}
      ( sql_1(z_1)\Join_{Y^{1}_{t_r}=X^{2}_{t_d}}\ sql_2(z_2) \Join_{Y^{2}_{t_r}=X^{3}_{t_d}}
      \dots \Join_{Y^{n-1}_{t_r}=X^{n}_{t_d}} sql_n(z_n) )
    \end{equation}
    it follows that
        \vspace{-1ex}
    {\footnotesize
    \begin{equation*}
      \pi_{X^{1}_{t_d}Y^{1}_{t_r}\dots Y^{n}_{t_r}}sql_1(z_1) \supseteq  \pi_{X^{1}_{t_d}Y^{1}_{t_r}\dots Y^{n}_{t_r}}
      ( sql_1(z_1)\Join_{Y^{1}_{t_r}=X^{2}_{t_d}}\ sql_2(z_2) \Join_{Y^{2}_{t_r}=X^{3}_{t_d}}
      \dots \Join_{Y^{n-1}_{t_r}=X^{n}_{t_d}} sql_n(z_n) )
    \end{equation*}
    }
\end{lemma}

\ignore{
  \begin{definition}[Optimizable path \vfd]\label{def:opt2}
    Let $P_1\dots P_n$  be properties and $t_d$ be the template used for the domain of $P_1$.	
    Let the template in the range of $P_{i-1}$  coincide with the template in the domain of $P_i$ ($i=2\dots n$).
    We say that  
    $S^1_{t_d}\mapsto^p{P_1\dots P_n}$ is optimizable if
    the relational fds:
    $ X^{1}_{t_d}\rightarrow Y^{1}_{t_r}\dots  Y^{n}_{t_r}, ~~~~ Y^{1}_{t_r}\rightarrow Y^{2}_{t_r}\dots  Y^{n}_{t_r},~~~~\dots~~~~,	  
    Y^{n-1}_{t_r}\rightarrow  Y^{n}_{t_r}$ 
    are satisfied in:~~~~
    $ 
    \pi_{X^{1}_{t_d}Y^{1}_{t_r}\dots Y^{n}_{t_r}}(sql_1(z_1)\Join_{Y^{1}_{t_r}=X^{2}_{t_d}} sql_2(z_2) \Join_{Y^{2}_{t_r}=X^{3}_{t_d}}\dots  \Join_{Y^{n-1}_{t_r}=X^{n}_{t_d}} sql_n(z_n))   
    $
    and it holds that
    $
    \pi_{X^{1}_{t_d}Y^{1}_{t_r}\dots Y^{n}_{t_r}}sql_1(z_1) \subseteq  \pi_{X^{1}_{t_d}Y^{1}_{t_r}\dots Y^{n}_{t_r}}
    ( sql_1(z_1)\Join_{Y^{1}_{t_r}=X^{2}_{t_d}}\ sql_2(z_2) \Join_{Y^{2}_{t_r}=X^{3}_{t_d}}
    \dots \Join_{Y^{n-1}_{t_r}=X^{n}_{t_d}} sql_n(z_n) )
    $
\end{definition}
}

We call a path \vfd for which one can verify precondition (\ref{eq:pathPrecondition}) an \emph{optimizable path \vfd}. In addition, 
we call $sql_1$ in Lemmas~\ref{def:opt1} and \ref{def:opt2} the \emph{optimizing SQL}.



Virtual functional dependencies focus on properties, however, 
we also want to handle triple patterns of the form $\textowl{?z} \textowl{ rdf:type } \textowl{C}$. 
As in the case of properties, 
we  assume that for each concept  $C_j$ we have a single $\T$-mapping of the form: 
$C_j(t^j(x)) \leftarrow sql_j(\vec{z}_j)$.

\begin{definition}[Optimizable Classes]
We say that $C_j$ is optimizable w.r.t. the domain of $P_i$ if and only if
$t_j=t^i_d$ and
$
\pi_{x}
sql_j(\vec{z}_j)  
\supseteq 
\rho_{  X^{i}_{t_d}\mapsto x}
(\pi_{X^{i}_{t_d}}sql_i(\vec{z}_i))
$.
We say that $C_j$ is optimizable w.r.t. the range of $P_i$ if and only if
$t_j=t^i_r$ and
$
\pi_{x}
sql_j(\vec{z}_j)  
\supseteq 
\rho_{  Y^{i}_{t_r}\mapsto x}
(\pi_{Y^{i}_{t_r}}sql_i(\vec{z}_i))
$.
\end{definition}

Optimizable \vfds and classes give us a tool to identify those BGPs whose SQL translation can be optimized by removing redundant joins.

\begin{definition}[Optimizable branching BGP]
  A BGP $\beta$ is optimizable w.r.t.  $S_{t_d}\mapsto^b{P_1\dots P_n}$ 
  iff \emph{(i)} the \vfd is $P_1$-optimizable;
  \emph{(ii)} the set of triple patterns in $\beta$ involving properties in the ontology is of the form 
$\textowl{?x P$_1$ ?x$_1$ .   \dots . ?x  P$_n$ ?x$_n$ .}$; 
and \emph{(iii)} for each triple pattern of the form $\textowl{?z rdf:type C}$, $\textowl{?z}$ in $\beta$
is either the subject of some 
$P_i$ and $C$ is optimizable w.r.t. to the domain of $P_i$, or   $?z$ is  in the object of some
 $P_i$ and $C$ is optimizable w.r.t. to the range of $P_i$.
\end{definition}

Finally, we prove that the standard SQL translation of optimizable BGPs contains redundant SQL joins that can be safely removed.

\begin{theorem}\label{th:finalFD}
Let $\beta$ be an optimizable BGP w.r.t. $S_{t_d}\mapsto^x_{P_1\dots P_n}$ ($x=b,p$). 
Let  $sql_{\beta}$ be the SQL translation of $\beta$ as explained in Section~\ref{sec:background}.
Let $sql_{\beta}' =sql_1(X^{1}_{t_d},Y^{1}_{t_r}\dots,Y^{n}_{t_r})  $ 
and AddTemplate be the SQL expression that generates the URIs according to the templates $t^1_d\dots t^n_r$.
Then  $sql_{\beta}$ and $AddTemplate(sql_{\beta}')$  return the same answers.
\end{theorem}

\ignore{
  Observe that if we remove in Definition~\ref{def:opt1} the requirement about
  the  relational functional dependencies, then it is  
  not guaranteed that the \vfd can be used for optimization. 
  Consider the two  mappings for $P_1$, $P_2$ in Example~\ref{ex:opti3}.
  If $T_1$ now contains $(1,null,3)$ and $(1,2,null)$, although the query containment between 
  $sql_{P_1}$ and $sql_{P_2}$ holds (the queries are identical), 
  the self-join over $T_1$ 
  to obtain the (SPARQL) join of $P_1$ and $P_2$ triples patterns is not redundant\dl{I cannot find a place to this.}. 
}

\begin{corollary}
Let $Q$ be a SPARQL query.
Let  $sql_{Q}$ be the SQL translation of $Q$ as explained in Section~\ref{sec:background}.
Let $sql_{Q}'$ be the SQL translation of $Q$ where all the SQL expressions corresponding to an optimizable BGPs w.r.t.
a set of \vfds have been optimized as stated
in Theorem~\ref{th:finalFD}.
Then  $sql_{Q}$ and $sql_{Q}'$  return the same answers.
\end{corollary}

\begin{example}
It is clear that the class 
\textowl{:Wellbore} is  optimizable w.r.t.  the domain of 
\textowl{:completion\-}\textowl{Date} and  \textowl{:is}\-\textowl{InWell}.
Since $S\mapsto^b{\owl{:completionDate}, \owl{:isInWell}} $
is optimizable (c.f. Example~\ref{ex:epds6}),
one can allow the semantic optimizations to safely remove redundant joins
in query $sql_1$, sketched in Example~\ref{ex:intro}.
From Theorem~\ref{th:finalFD}, it follows that,
$sql_{:Wellbore} \Join sql_{:completionDate} \Join sql_{:isInWell}$
can be by simplified  to  $sql_{:Wellbore}$. 
\end{example}

\myPar{Implementation}
We have implemented a tool that automatically finds a restricted type of optimizable 
\vfds and we have extended \ontop 
to complement semantic optimization  using these \vfds.
 One can deduce from the definitions in this section that by analyzing the templates in the mappings, one can strongly
reduce the number of combinations that one needs to take into account to detect \vfds.
Our tool, exploits this observation as follows:  given a template  $t_d$,
and \emph{all} the properties, $P_1\dots P_n$, that have $t_d$ as part of their domains in the mappings,
our tools can find and use optimizable \vfds of the form $S_t\mapsto^x{P_1\dots P_n}$
where the optimizing SQL in Definitions \ref{def:opt1} and \ref{def:opt2} is a single table/view.
Observe that this scenario captures the most common cases,
in particular, most \vfds in the current \statoil OBDA instance.
}


\subsection{Enriching the OBDA Specification with Constraints}

We propose to enrich the traditional OBDA specification with a
constraint component, so as to allow the OBDA system to perform
enhanced optimization as described in the previous section.  More
formally, an \emph{OBDA specification with constraints} is a tuple
$\S_{constr} = (\S, \C)$ where $\S$ is an OBDA specification and $\C$
is a set of exact mappings, exact predicates, optimizing virtual
functional dependencies, and optimizing class expressions. An
\emph{instance of $\S_{constr}$} is an OBDA instance of $\S$
satisfying the constraints in $\C$.  Our intention is to be able to
use more of the constraints that exist in real databases for query
optimization, since we often see that these cannot be expressed by
existing database constraints (i.e. keys).  Since $\S$ does not
necessarily imply $\C$, checking the validity of $C$ may have to take
into account more information than just $S$.  The constraints $C$ may be
known to hold e.g. by policy, or be enforced by external tools, e.g.,
as in the case mentioned in the experiments below, by the tool used to
enter data into the database. 

In order to aid the user in the specification of $\C$, we implemented
tools to identify what exact mappings and optimizing virtual
functional dependencies are satisfied in a given OBDA instance
\ifextendedversion
(see appendix).
\else
(see~\cite{techreport}).
\fi. The user can then verify whether these
suggested constraints hold in general, for example because they derive
from storage policies or domain knowledge, and provide them as
parameters to the OBDA system. The user intervention is necessary,
because constraints derived from actual data can be an artifact of the
current situation of the database. 

\vspace{0.3mm}
\noindent\textbf{Optimizing \vfd Constraints.}
We have implemented a tool that automatically finds a restricted type of optimizing 
\vfds satisfied in a given OBDA instance and we have extended \ontop  to complement semantic optimization using these \vfds.
This implementation aims to mitigate the problem of redundant self-joins resulting 
from reifying relational tables. 
Although this is a simple case, it is extremely common in practice
 and, as we show in our experiments in Section~\ref{sec:experiments}, this class of \vfds is powerful enough to sensibly
improve the execution times in real world scenarios. 

\noindent\textbf{Exact Predicates Constraints.}
We implemented a tool to find exact predicates, and we extended 
\ontop to optimize $\T$-mappings with them. 
%
For each predicate $P$ in the ontology $\T$ of an OBDA instance $\O$, the tool constructs the query $q$ that returns all the individual/pairs in $P$.
Then it evaluates $q$ in the two OBDA instances $\O$ and $((\emptyset, \M, \Sigma), D)$. If the  answers for $q$ coincide in both instances,
then $P$ is exact.



\ignore{

  
In the OBDA setting at Statoil it often happens that some axioms in the ontology, 
such as sub-classes, sub-properties, domain and range axioms, do not infer any more triples
that are already generated by the mappings.
This renders fragments of the ontology 
redundant (for standard SPARQL query answering) w.r.t.  the mappings. 
This redundancy leads to redundant unions in the generated SQL as
illustrated in Example~\ref{ex:intro}.
The goal of the tuning technique presented in 
this section is to reduce the number of redundant unions
in the generated SQL queries.


\begin{definition}[Exact Mapping]\label{dfn:exactmapping}
Let $(\T, \M, D)$ be an   OBDA setting. 
Let $\M'$  be an arbitrary  set of  mappings populating a predicate $A$. 
$\M'$ is exact for $A$ w.r.t. $(\T,\M, D)$, 
if for every  node(s) $\vec{a}$, it holds:
$
 (\T, \M, D) \models A(\vec{a}) \text{ if and only if } (\emptyset,\M',D) \models A(\vec{a})
 $
$\hfill\Box$
\end{definition}


In practice (for instance in the \statoil use case) exact mappings  ($\M'$)
often already exist hidden in the  mappings in the OBDA setting.
This leads us to the following definition:

\begin{definition}[Exact Predicate]
  A predicate $A$ is exact w.r.t. an OBDA setting $(\T,\M,D)$ if the set of all
  the mappings  defining $A$ in $\M$ are exact for $A$
  w.r.t. $(\T,\M,D)$.  
\end{definition}

Recall that \ontop uses the $\T$-mapping technique to compile the ontology 
reasoning into the mappings.
For the concepts with exact mappings, we can construct a more compact set of $\T$-mappings that
does not include the inferred additional mappings constructed from the ontology, 
but use the (in practice single) exact mappings. 
This will result, after query translation, in  fewer unions in the SQL.

\begin{proposition} \label{prop:exact}
  Let $(\T,\M,D)$ be an OBDA setting  and  $\M_\T$ be  $\T$-mappings for $(\T,\M,D)$. 
  Suppose that the arbitrary mapping $m$ is exact for the predicate $A$ w.r.t $(\T,\M,D)$.  
  Let $\M_\T'$ be the result of replacing all the mappings 
  defining $A$ in $\M_\T$ by $m$. 
	Then $G^{\T,\M,D}=G^{\emptyset,\M_\T',D}$.
\end{proposition}

\begin{example}
Recall from Example \ref{ex:statoil-tmapping} that the $\T$-mappings
for  \textowl{:Wellbore} consist of four mappings. 
However, \textowl{:Wellbore} is an exact class. 
Therefore we can drop the three  $\T$-mappings for 
\textowl{:Wellbore} inferred from the ontology,
and leave only its  original  mapping.
\end{example}

\ignore{
\begin{corollary}
Given a SPARQL query $Q$, the answer of $Q$ over $(\T,\M,D)$ 
is the same as the answer of $Q$ over $(\emptyset, \M_\T', D)$, 
where $\M_\T'$ is as the mapping obtained in Theorem~\ref{prop:exact}.
\end{corollary}
}	

\myPar{Implementation}
We implemented a tool to find exact predicates, and we extended 
\ontop to optimize $\T$-mappings with them.
	
}




\section{Experiments}\label{sec:experiments}
In this section we present a set of experiments evaluating the
techniques described above. 
In 
\ifextendedversion
the appendix
\else
 \cite{techreport}
\fi
 we ran additional
controlled experiments using an OBDA benchmark built on top of the
Wisconsin benchmark~\cite{wisconsin}, and obtain similar results to
the ones here.





\paragraph{\textbf{Statoil Scenario}}

In this section we briefly describe the Statoil use-case,
and the challenges it presents for OBDA. 
At Statoil, users access several databases on a daily basis, and one of the most important ones is the Exploration and Production Data Store (EPDS) database. EPDS
is a large legacy SQL (Oracle 10g) database comprising over 1500 tables (some of them with up to 10 million tuples) and 1600 views. The complexity of the SQL schema of EPDS is such that it is counter-productive and error-prone to manually write queries over the relational database. Thus, end-users either use only a set of tools with predefined SQL queries to access the database, or interact with IT experts so as to formulate the right query. The latter process can take weeks.
This situation triggered the introduction of OBDA in Statoil in the context of the
Optique project [13]. In order to test OBDA at Statoil, the users provided 60 queries (in natural language) that are relevant to their job, and that cannot be easily performed or formulated at the
moment. The Optique partners formulated these queries in SPARQL, and handcrafted an ontology, and a set of mappings connecting EPDS to the ontology. The ontology contains 90 classes, 37 object properties, and 31 data properties; and there are more than 140 mappings. The queries have between 0 to 2 complex filter expressions
(with several arithmetic and string operations), 0 to 5 nested optionals, modifiers such as ORDER BY and DISTINCT, and up to 32 joins. 

\ignore{
Geoscientists at \statoil, the main Norwegian oil company, access several databases on a daily basis, but in this work
we focus on  the \emph{Exploration and Production Data Store (EPDS)}.\dl{For what reason?}
EPDS is a large legacy SQL (Oracle 10g) database 
comprising over 1500 tables (some of them with up to 10 million tuples) and 1600 views. 
The geoscientists provided 60 queries (in natural language) that
are relevant to their job, and that cannot be easily performed at the moment\dl{Also, that cannot be easily formulated in SQL directly}.
From the vocabulary used in the users' queries we handcrafted an ontology,
and  a set of mappings connecting  the relevant parts of EPDS and the ontology. 
This preliminary ontology contains 90 classes, 37
object properties, and 31 data properties; 
and there are more than 140 mappings.
Using this vocabulary we also built 60 SPARQL queries
modeling the original query catalog in natural language. The obtained queries are moderately complex, as they have an average size of 13 BGPs, and one query contains $5$ \texttt{OPTIONAL} operators.
}

\paragraph{\textbf{Experiment Results.}}
%


The queries were executed sequentially on a HP ProLiant server with 24
Intel Xeon CPUs (X5650 @ 2.67 GHz), 283 GB of RAM.
Each query was evaluated three times and we took the average.
We ran the experiments with 4 exact concepts and 15 virtual functional
dependencies, found with our tools and validated by database experts. 
The 60 SPARQL queries have been executed over \ontop
with and without the optimizations for exact predicates and virtual functional
dependencies.
%
%
We consider that a query times out if the average execution time is greater than 20 minutes.

\begin{table*}[tb]
\centering
\scriptsize
\caption{Results from the tests over EPDS. }
\label{table:resultsEPDS}

%
\begin{tabular}{l|r|r|r|r} 
 & std. opt. & w/VFD & w/exact predicates & w/both \\ \hline
Number of queries timing-out
 & $17$ 
 & $10$ 
 & $11$ 
 & $4$ \ignore{ \text{( 2 without UDF)}$ }
\\ Number of fully answered queries 
 & $43$ 
 & $50$ 
 & $49$ 
 & $56$ 
\\ Avg. SQL query length (in characters)
 & $51521$ 
 & $28112$ 
 & $32364$ 
 & $8954$ 
\\ Average unfolding time 
 & $3.929$ s 
 & $3.917$ s 
 & $1.142$ s 
 & $0.026$ s 
\\ Average total query exec. time with timeouts
 & $376.540$ s 
 & $243.935$ s 
 & $267.863$ s 
 & $147.248$ s 
\\ Median total query exec. time with timeouts
 & $35.241$ s 
 & $11.135$ s 
 & $21.602$ s 
 & $14.936$ s 
\\ Average successful query exec. time (without timeouts)
 & $36.540$ s 
 & $43.935$ s 
 & $51.217$ s 
 & $67.248$ s 
\\ Median successful query exec. time (without timeouts)
 & $12.551$ s 
 & $8.277$ s 
 & $12.437$ s 
 & $12.955$ s 
\\ Average number of unions in generated SQL
& 6.3
& 3.4
& 5.1
& 2.2
\\ Average number of tables joined per union in generated SQL
& 21.0
& 18.2
& 20.0
& 14.2
\\ Average total number of tables in generated SQL
& 132.7
& 62.0
& 102.2
& 31.4
\end{tabular}

\end{table*}

\begin{figure*}[tb]
\begin{tabular}{c}
    \begin{tabular}{c}
      \begin{minipage}{\textwidth}
        \hspace{.001\textwidth}
	\begin{tikzpicture}
	  \pgfplotsset{every axis/.append style={
	      font=\footnotesize,
	      semithick,
	      tick style={thick}}}
          \begin{axis}[
              name=bottom axis,
	      ybar=0pt,
              bar width=0.1cm,
	      height=2cm,
	      width=0.9\textwidth,
	      ytick=\empty,
              xticklabels=\empty,
              yticklabels=\empty,
              xmin=0,
              xmax=28,
              grid=major,
	      legend columns=-1,
              ymin=0, 
	      ymax=10000,
              axis x line*=bottom
	    ]
    
\addplot[fill=blue!60] coordinates {
(1, 10000)
(2, 10000)
(3, 10000)
(4, 10000)
(5, 10000)
(6, 10000)
(7, 10000)
(8, 10000)
(9, 10000)
(10, 10000)
(11, 10000)
(12, 10000)
(13, 10000)
(14, 10000)
(15, 10000)
(16, 10000)
(17, 10000)
(18, 10000)
(19, 10000)
(20, 10000)
(21, 10000)
(22, 10000)
(23, 10000)
(24, 10000)
(25, 10000)
(26, 10000)
(27, 10000)
};
\addplot[draw=red, pattern color = red, pattern = north west lines] coordinates {
(1, 10000)
(2, 10000)
(3, 10000)
(4, 10000)
(5, 10000)
(6, 10000)
(7, 10000)
(8, 10000)
(9, 10000)
(10, 10000)
(11, 10000)
(12, 10000)
(13, 10000)
(14, 10000)
(15, 10000)
(16, 10000)
(17, 10000)
(18, 10000)
(19, 10000)
(20, 10000)
(21, 10000)
(22, 10000)
(23, 10000)
(24, 10000)
(25, 10000)
(26, 10000)
(27, 10000)
};
	  \end{axis}
	  \begin{axis}[
              at=(bottom axis.north),
              anchor=south, yshift=\pgfkeysvalueof{/tikz/axis break gap},
	      ybar=0pt,
              bar width=0.1cm,
              ylabel={\scriptsize \hspace{-15pt}Query execution time},
	      height=3cm,
	      width=0.9\textwidth,
      	      xticklabels=\empty,
              yticklabels={0.1 s,10 s,,20 m},
              ytick={100,10000,1000000,1200000},
              grid=major,
	      legend columns=-1,
              xmin = 0,
              xmax=28,
              ymin=100, 
	      ymax=1200000,
              ymode=log,
              xticklabel=\empty,
              axis x line*=top, 
              after end axis/.code={
                \draw (rel axis cs:0,0) +(-1mm,-1mm) -- +(1mm,1mm)
                    ++(0pt,-\pgfkeysvalueof{/tikz/axis break gap})
                    +(-1mm,-1mm) -- +(1mm,1mm)
                    (rel axis cs:1,0) +(-1mm,-1mm) -- +(1mm,1mm)
                    ++(0pt,-\pgfkeysvalueof{/tikz/axis break gap})
                    +(-1mm,-1mm) -- +(1mm,1mm);
            }
	    ]
	    
\addplot[fill=blue!60] coordinates {
(1, 1200000.0)
(2, 31045.266333333333)
(3, 1200000.0)
(4, 1200000.0)
(5, 451.915)
(6, 1200000.0)
(7, 14567.415666666668)
(8, 86767.832)
(9, 10429.749333333333)
(10, 1200000.0)
(11, 1200000.0)
(12, 1200000.0)
(13, 1200000.0)
(14, 47674.744666666666)
(15, 1200000.0)
(16, 273440.7256666667)
(17, 34515.80866666667)
(18, 1200000.0)
(19, 1200000.0)
(20, 1200000.0)
(21, 1200000.0)
(22, 35965.282666666666)
(23, 198613.8313333333)
(24, 37489.41133333333)
(25, 360937.3196666667)
(26, 1200000.0)
(27, 1200000.0)
};
\label{epds_untuned}
\addplot[draw=red, pattern color = red, pattern = north west lines] coordinates {
(1, 1200000.0)
(2, 6061.1923333333325)
(3, 1200000.0)
(4, 9883.007)
(5, 504.1513333333333)
(6, 4826.358333333333)
(7, 4967.532666666667)
(8, 12108.967666666666)
(9, 2158.47)
(10, 15743.838000000002)
(11, 1193568.6693333334)
(12, 16483.41233333333)
(13, 21841.840666666667)
(14, 3967.9269999999997)
(15, 72346.72266666667)
(16, 3857.595333333333)
(17, 10953.243666666667)
(18, 29955.592666666664)
(19, 449867.08466666663)
(20, 15452.311666666666)
(21, 23960.463666666667)
(22, 7615.3876666666665)
(23, 12905.610333333332)
(24, 7916.365000000001)
(25, 116688.18533333333)
(26, 890227.6223333334)
(27, 105109.99733333332)
};
\label{epds_tuned}
 \label{tuned-full}
            
	  \end{axis}
	\end{tikzpicture}
      \end{minipage}
    \end{tabular}
    \\	
    \begin{minipage}{\textwidth}
      \hspace{.05\textwidth}
      \footnotesize{
	\ref{epds_untuned} standard optimizations~~~ 
	\ref{epds_tuned} standard optimizations + VFD + exact predicates~~~
      }
    \end{minipage}
  \end{tabular}
\caption{Comparison of query execution time with standard optimizations.
   Log. scale}
\label{fig:queryTimesBars}
\end{figure*}
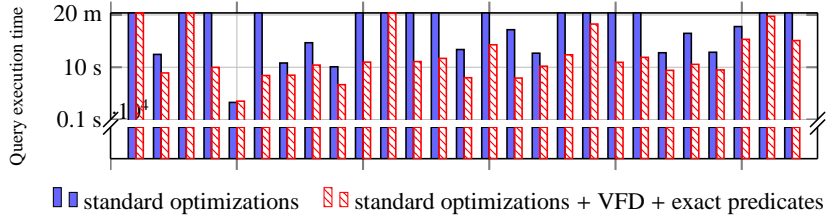


The results are summarized in Table \ref{table:resultsEPDS} and
Figure~\ref{fig:queryTimesBars}. We can see that 
the proposed optimizations allow \ontop to critically reduce the query
size and improve the performance of the query execution by orders of
magnitude. Specifically, in Figure~\ref{fig:queryTimesBars} we compare
standard optimizations with and without the techniques presented here.
Observe that the average successful query execution time is higher
with new optimizations than without because the number of successfully
executed queries increases.  
%
With standard optimizations, 17 SPARQL queries time out.
With both novel optimizations enabled, only four queries still time
out. 

A total of $27$ SPARQL queries get a more compact
SQL translation with new optimizations enabled. 
The largest proportional decrease in size of the SQL
query is $94$\%, from $171k$ chars, to $10k$.  The largest
absolute decrease in size of the SQL is $408k$ chars.
Note that the number of unions in the SQL may decrease also only with
VFD-based optimization. Since the VFD-based optimization removes joins, more
unions may become equivalent
and are therefore removed. 
The maximum measured decrease in execution time is on a query that times
out with standard optimizations, but uses $3.7$ seconds with new optimizations.

\ignore{
The diagram in Figure \ref{fig:epds_total} shows the
relation between the relative decrease in SQL query size and the
relative decrease in query execution time. Only the queries for which
the SQL actually changed with the tuning are included.

\begin{figure}[t]
\centering
\vspace{-1cm}
\includegraphics[width=0.45\textwidth]{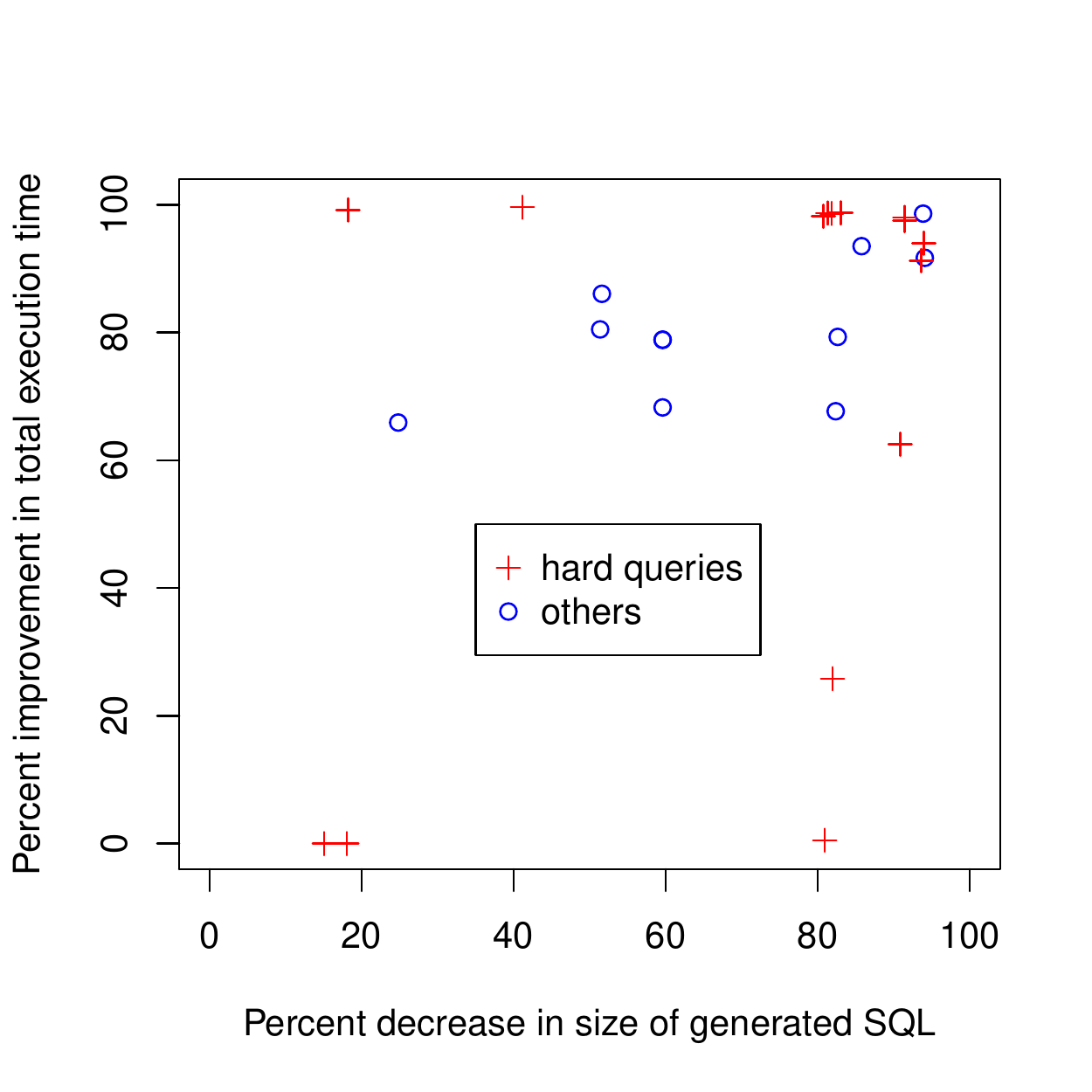}
\caption{Relation between the relative decreases in SQL size and total execution time.}
\label{fig:epds_total}
\end{figure}
}
\ignore{
There are eight queries in the query catalog that contain
SPARQL Optionals. 
Six of these queries were timing out without tuning.
Recall the SPARQL Optionals are translated into SQL Left Joins.
Because of tuning, and in particular, exact mappings,
\ontop optimizes these queries by removing unions
inside the SQL Leftjoins, and applying the structural optimizations
discussed in Section~\ref{sec:background}.
As a consequence, queries that were taking more than 20 minutes 
are executed in 15 seconds, and only one query still times out with
the tunings enabled. The average execution time improves from almost 16
minutes to ca. 3 minutes, and the median from 20 minutes to 20 seconds. 
}

\ignore{
The modifier DISTINCT in the SPARQL queries is translated by \ontop into
DISTINCT in the SQL queries. Since the SPARQL queries were created from a
natural language specification, it is not clear wether we should only return distinct values.
We modified a number of SPARQL queries adding the DISTINCT modifier.
These queries, that before returned answers in a few seconds,  timeout 
after the modification.
In the Oracle query plan, there is no difference other than the
translation of DISTINCT into \emph{unique hash}. 
}

\ignore{
\noindent\textbf{Inconsistency.} 
Two of the exact mappings were automatically detected by our tool,
and two were provided by the domain experts. According to internal
regulations of the oil company, and as stated in Example~\ref{ex:intro}, the table \texttt{\small wellbore}
contains all the wellbores in EPDS.
However, our tool correctly detected that the class \textowl{:Wellbore} is not  exact,
highlighting the inconsistency in the database.
This implies that there are wellbores in EPDS outside the table   \texttt{\small wellbore}
that are ``found'' (and classified as members of the class \textowl{:Wellbore})
thanks the mappings and ontology reasoning. 
This mismatch is due to user errors during entry of wellbore information into the DB.
}

%
 



\section{Related work}\label{sec:related}


Dependencies have been intensively studied in the context of
traditional relational databases~\cite{BeVa81}.
Our work is related to the one in~\cite{Weddell};  in
particular their notion of path functional dependency is  close to
the notion of path \vfd presented here.
However, they do not consider neither ontologies, nor databases,
and their dependencies are not meant to be used to optimize queries.
There are a number of studies   on functional dependencies in RDF~\cite{yu2011extending,He-2014},
but as shown in Example~\ref{ex:opti3},
functional dependencies in RDF  do not necessarily correspond to a \vfd (when considering the ontology). 
Besides,  these works do not tackle the issue of SQL query optimization.


The notion of \emph{perfect
  mapping}~\cite{DLLM*13} is strongly related to the notion of exact
mapping. However there is a substantial difference: a perfect mapping
must be \emph{entailed} by the OBDA specification, whereas exact
mappings are additional constraints that enrich the OBDA
specification. For instance, perfect mappings would not be effective
in the \statoil use case, where organizational constraints and storage
policies are not entailed by the OBDA specification.  
The notion of \emph{EBox}~\cite{Rosa12b,optiqueprojectobda2014kyrie2} was proposed as an attempt to include 
constraints in OBDA. However, EBox axioms are defined through a  
$\T$-box like syntax. These axioms cannot express constraints
based on templates like virtual functional dependencies.


  \ignore{
    ================= OLD PART ===================

introduce the idea of using primary keys in order to eliminate
self-joins from the unfolded queries. In this work we observed that
weaker functional dependencies can be used as well, and that more
general kinds of joins can be eliminated if query containment is taken
into account.

The main difference is that the notion of perfect mappings captures the implications given by explicit query containment
analysis. 
In contrast, exact mappings can capture the possibilities
opened when enabling reasoning with respect to functional
dependencies, or even reasoning with instance-dependent
constraints. 
Moreover, the two notions are used in a quite different
way; in our setting, we focus on one particular kind of exact
mapping, namely the one that subsumes the ontology hierarchies, whereas
perfect mappings are defined by full rewritings w.r.t. the ontology, mappings, and
view containment. Last, the
exact mappings described in this paper are applied when parsing
the mappings, at startup time, so unlike the PerfectMap algorithm,
they incur no overhead at query execution time.

\begin{itemize}
\item DiPinto et al. \cite{PintoLLMPRRS-edbt13} describe the
  concept of a "perfect mapping" and an algorithm "PerfectMap" which
  uses these in the query rewriting. The idea of a perfect mapping is strictly more expressive than the complete mappings described in this paper. The PerfectMap algorithm is applied at query execution
  time. 
  We could not find a use for the larger expressivity of the perfect
  mappings in the queries and ontologies of the EPDS use case. The
  complete mappings described in this paper are applied when parsing
  the mappings, at startup time, so unlike the PerfectMap algorithm,
  they incur no overhead at query execution time.

\item Extending Functional Dependency to Detect Abnormal Data in RDF Graphs
propose that the degree to which a triple deviates from similar triples can 
be an important heuristic for identifying errors. Inspired by functional dependency, 
which has shown promise in database data quality research, we introduce value-clustered graph functional de- pendency

\item Reasoning about functional dependencies generalized for semantic data models
The path functional dependency:
propose a more general form of functional dependency for semantic data models that derives 
from their common feature in which the separate notions of domain and relation in the 
relational model are combined into a single notion of class. This usually results in a richer 
terminological component for their query languages, whereby terms may
navigate through any number of properties, including none. \davide{I
  say it in dialect: ``Embe'?''}

\end{itemize}

}


\section{Conclusions}
\label{sec:conclusions}

In this work we presented two novel optimization techniques for OBDA
that complement standard optimizations in the area, and enable
efficient SPARQL query answering over enterprise relational data.  We
provided theoretical foundations for these techniques based on two
novel OBDA constraints: virtual functional dependencies, and exact
predicates.
We implemented these techniques in our OBDA system \ontop and
empirically showed their effectiveness through extensive experiments that
display improvements on the query execution time up to orders of
magnitude.  


\vspace{2mm}
\noindent\textbf{Acknowledgement.} This work is partially supported by the EU
under IP project Optique (\emph{Scalable End-user Access to Big Data}), grant
agreement n.~FP7-318338.

\bibliographystyle{abbrv}
\bibliography{string-tiny,local-bib,krdb,w3c}

\ifextendedversion

\clearpage
\appendix

\section{Appendix}

\newcommand{\extV}[1][V]{\textit{ext}_{#1}}
\newcommand{\sm}{\lambda}

\newcommand{\bind}{\textsc{Bind}}

 \newcommand{\smerge}{\oplus}

\subsection{Background On SPARQL to SQL}

In this section, we recap the complete SPARQL to SQL translation~\cite{KRRXZ14}. This background will be used for the proofs in the following sections.



\subsubsection{SPARQL under Simple Entailment}
\label{sec:sparql-1}

SPARQL is a W3C standard language designed to query RDF graphs.
Its vocabulary contains four pairwise disjoint and countably infinite sets of symbols: \myIRI{} for \emph{IRIs},  \myBNK{} for \emph{blank nodes}, \myLIT{} for \emph{RDF literals}, and \myVAR{} for \emph{variables}. The elements of $\sDOM = \myIRI \cup \myBNK \cup \myLIT$ are called \emph{RDF terms}.
A \emph{triple pattern} is an element of $(\sDOM\cup\myVAR)\times (\myIRI\cup\myVAR) \times (\sDOM\cup\myVAR)$. A \emph{basic graph pattern} (\emph{BGP}) is a finite set of  triple patterns. Finally, a \emph{graph pattern}, $P$, is an expression defined by the grammar 
\begin{align*}
P \ &::=  \ \textsc{BGP} \ \mid \  \filter(P,F)  \ \mid \ \bind(P,v,c) \ \mid  \ \union(P_1,P_2) \\ 
 & \qquad \mid \ \join(P_1,P_2) \ \mid \ \leftjoin(P_1,P_2,F),
\end{align*}
where $F$, a \emph{filter}, is a formula constructed from atoms of the form $\textit{bound}(v)$,  \mbox{$(v=c)$}, $(v=v')$, for $v,v' \in \myVAR$, $c \in \sDOM$, and possibly other built-in predicates using the logical connectives $\land$ and $\neg$. The set of variables in $P$ is denoted by $\var(P)$.

A \emph{SPARQL query} is a graph pattern $P$  with a \emph{solution modifier}, which specifies the \emph{answer variables}---the variables in $P$ whose values we are interested in---and the form of the output (we ignore other solution modifiers for simplicity). 
The values to variables are given by \emph{solution mappings}, which are \emph{partial} maps $s  \colon \myVAR \to \sDOM$ with (possibly empty) domain $\dom (s)$. 
In this paper, we use the set-based (rather than bag-based, as in the specification) semantics for SPARQL.
For sets $S_1$ and $S_2$ of solution mappings, a filter $F$, a variable $v\in \myVAR$ and a term $c\in \sDOM$, let
\begin{itemize}
\item $\filter(S,F) = \{ s \in S \mid F^s = \top \}$;
\item $\bind(S,v,c) = \{ s \smerge \{ v \mapsto c \} \mid s\in S \}$ (provided that $v\notin\dom(s)$, for $s\in S$);
\item $\union(S_1, S_2) = \{ s \mid s\in S_1 \text{ or } s\in S_2 \}$;
\item $\join(S_1, S_2) = \{ s_1 \smerge s_2 \mid s_1 \in S_1 \text{ and } s_2 \in S_2 \text{ are compatible}\}$;
\item $\leftjoin(S_1,S_2,F) =  \filter(\join(S_1,S_2), F) \ \cup \ \{ s_1 \in S_1 \mid \text{ for all } s_2\in S_2,$\\
\mbox{}\hfill $\text{ either } s_1, s_2 \text{ are incompatible or } 
F^{s_1\smerge s_2} \ne \top \}$. 
\end{itemize}
Here, $s_1$ and $s_2$ are \emph{compatible} if $s_1(v) = s_2(v)$, for any $v \in \dom(s_1) \cap \dom(s_2)$, in which case $s_1\smerge s_2$ is a solution mapping with $s_1 \smerge s_2\colon v \mapsto s_1(v)$, for $v \in \dom(s_1)$, $s_1 \smerge s_2\colon v \mapsto s_2(v)$, for $v \in \dom(s_2)$, and domain $\dom(s_1) \cup \dom(s_2)$. The \emph{truth-value $F^s \in \{\top, \bot, \varepsilon\}$ of a filter $F$ under a solution mapping $s$} is defined inductively: 
\begin{itemize}
\item $(\textit{bound}(v))^s$ is $\top$ if $v \in \dom(s)$ and $\bot$ otherwise; 

\item $(v = c)^s = \varepsilon$ if $v\notin\dom(s)$; otherwise, 
$(v = c)^s$ is the classical truth-value of the predicate $s(v) = c$;  similarly, $(v = v')^s = \varepsilon$ if either $v$ or $v'\notin\dom(s)$; otherwise, 
$(v = v')^s$ is the classical truth-value of the predicate $s(v) = s(v')$;

\item $(\neg F)^s=\begin{cases}
\varepsilon, & \text{if } F^s = \varepsilon,\\[-2pt]
\neg  F^s, & \text{otherwise,}
\end{cases}$ 
\quad and
$(F_1 \land F_2)^s = \begin{cases}%
\bot, & \text{if } F_1^s =\bot \text{ or } F_2^s =\bot,\\[-2pt]
\top, & \text{if  } F_1^s =F_2^s =\top,\\[-2pt]
\varepsilon, &  \text{otherwise.}
\end{cases}$
\end{itemize}
Finally, given an RDF graph $G$, the \emph{answer to a graph pattern $P$ over $G$} is the set $\sANS{P}{G}$ of solution mappings defined by induction using the operations above and starting from the following base case: for a basic graph pattern $B$, 
\begin{equation}\label{ans1}
\sANS{ B }{G}
 = \{ s\colon \var(B) \to \sDOM \mid s(B) \subseteq G \},
\end{equation}
where $s(B)$ is the set of triples resulting from substituting each variable $u$ in $B$ by $s(u)$. This semantics is known as \emph{simple entailment}.

\subsubsection{Translating SPARQL under Simple Entailment to SQL}\label{sec:SQL}


We recap the basics of relational algebra and SQL (see e.g.,~\cite{AbHV95}). 
Let $U$ be a finite (possibly empty) set of \emph{attributes}. A \emph{tuple over} $U$ is a map $t \colon U \to \Delta$, where $\Delta$ is the underlying domain, which always contains a distinguished element $\Null$. 
A ($|U|$-\emph{ary}) \emph{relation over $U$} is a finite set of tuples over $U$ (again, we use the set-based rather than bag-based semantics). 
A \emph{filter $F$ over $U$} is a formula constructed from atoms $\isNull(U')$, $(u=c)$ and $(u=u')$, where $U'\subseteq U$, $u,u' \in U$ and $c \in \Delta$, using the connectives $\land$ and
$\neg$. Let $F$ be a filter with variables $U$ and let $t$ be a tuple over $U$. The \emph{truth-value $F^t \in \{\top, \bot, \varepsilon\}$ of $F$ over $t$} is defined inductively: 
\begin{itemize}
\item $(\isNull(U'))^t$ is $\top$ if $t(u)$ is $\Null$, for all $u\in U'$, and $\bot$ otherwise; 

\item $(u = c)^t = \varepsilon$ if $t(u)$ is $\Null$; otherwise, 
$(u = c)^t$ is the classical truth-value of the predicate $t(u) = c$; similarly, $(u = u')^t = \varepsilon$ if either $t(u)$ or $t(u')$ is $\Null$; otherwise, 
$(u = u')^t$ is the classical truth-value of the predicate $t(u) = t(u')$;

\item $(\neg F)^t=\begin{cases}
\varepsilon, & \text{if } F^t = \varepsilon,\\[-1pt]
\neg  F^t, & \text{otherwise,}
\end{cases}$ 
\quad and 
$(F_1 \land F_2)^t = \begin{cases}%
\bot, & \text{if } F_1^t =\bot \text{ or } F_2^t =\bot,\\[-2pt]
\top, & \text{if  } F_1^t =F_2^t =\top,\\[-2pt]
\varepsilon, &  \text{otherwise.}
\end{cases}$
\end{itemize}
(Note that $\neg$ and $\land$ are interpreted in the same three-valued logic as in SPARQL.)
We use standard relational algebra operations such as union, difference, projection, selection, renaming and natural (inner) join. Let $R_i$ be a relation over $U_i$, $i=1,2$. 
\begin{itemize}
\item If $U_1 = U_2$ then the standard 
$R_1 \cup R_2$ 
and $R_1 \setminus R_2$ 
are relations over $U_1$.

\item If $U \subseteq U_1$ then $\pi_{U}R_1 = R_1 |_U$ is a relation over $U$.

\item If $F$ is a filter over $U_1$ then $\sigma_F R_1 = \{t\in R_1 \mid F^t = \top\}$ is a relation over $U_1$.

\item If $v \notin U_1$ and $u \in U_1$ then 
$\rho_{v/u}R_1 = \bigl\{t_{v/u} \mid  t\in R_1\bigr\}$, where $t_{v/u}\colon v \mapsto t(u)$ and $t_{v/u}\colon u' \mapsto t(u')$, for $u' \in U_1\setminus \{u\}$, is a relation over $(U_1\setminus \{u\}) \cup \{v\}$.

\item $R_1 \Join R_2 = \{t_1 \smerge t_2  \mid  t_1 \in R_1 \text{ and } t_2 \in R_2 \text{ are compatible} \}$  is a relation over $U_1 \cup U_2$. Here, $t_1$ and $t_2$ are \emph{compatible} if $t_1(u) = t_2(u) \ne \Null$, 
for all $u \in U_1 \cap U_2$, in which case  
a tuple $t_1 \smerge t_2$ over $U_1 \cup U_2$  is defined by taking $t_1 \smerge t_2\colon u\mapsto t_1(u)$, for $u \in U_1$, and $t_1 \smerge t_2\colon u \mapsto t_2(u)$, for $u \in U_2$ (note that if $u$ is $\Null$ in either of the tuples then they are incompatible).
\end{itemize}
To bridge the gap between partial functions (solution mappings) in SPARQL and total mappings (on attributes) in SQL, we require one more operation (expressible in SQL):
\begin{itemize}
\item If $U \cap U_1 = \emptyset$ then the \emph{padding} $\mu_{U} R_1$ is $R_1 \Join \Null^{U}$, where $\Null^U$  is the relation consisting of a single tuple $t$ over $U$ with $t\colon u \mapsto \Null$, for all $u \in U$.
\end{itemize}
By an \emph{SQL query}, $Q$, we understand any expression constructed from relation symbols (each over a fixed set of attributes) and filters using the  relational algebra operations given above (and complying with all restrictions on the structure). 
Suppose $Q$ is an SQL query and $D$ a data instance which, for any relation symbol in the schema under consideration, gives a concrete relation over the corresponding set of attributes. The \emph{answer  to $Q$ over $D$} is a relation $\|Q\|_D$ defined inductively in the obvious way starting from the base case: for a relation symbol $Q$,  $\|Q\|_D$ is the corresponding relation in $D$.


We now define a translation, $\tra$, which, given a graph pattern $P$, returns an SQL query $\tra(P)$  with the same answers as $P$. More formally, for a set of variables $V$, let $\extV$ be a function transforming any solution mapping $s$ with $\dom(s)\subseteq V$ to a tuple over $V$ by padding it with $\Null$s:  
\begin{equation*}
\extV(s) \  = \ \{ v \mapsto s(v) \mid v\in \dom(s) \} \ \cup \ \{ v\mapsto \Null \mid v\in V \setminus\dom(s)\}.
\end{equation*}
The \emph{relational answer to $P$ over $G$} is $\| P\|_G = \{\extV[\var(P)](s) \mid s\in \sANS{P}{G}\}$. 
The SQL query $\tra(P)$ will be such that, 
for any RDF graph $G$, the relational answer to $P$ over $G$ coincides with the answer to $\tra(P)$ over $\textit{triple}(G)$, the database instance storing $G$ as a ternary relation $\textit{triple}$ with the attributes $\textit{subj}$, $\textit{pred}$, $\textit{obj}$. 
First, we define the translation of a SPARQL filter  $F$ by taking $\tra(F)$ to be the SQL filter  obtained by replacing each $\textit{bound}(v)$ with $\neg \isNull(v)$ (other built-in predicates can be handled similarly). 
\begin{proposition}\label{prop:filter}
Let $F$ be a SPARQL filter and let $V$ be the set of variables in $F$.
Then $F^s = (\tra(F))^{\extV[V](s)}$, for any solution mapping $s$ with $\dom(s)\subseteq V$.  
\end{proposition}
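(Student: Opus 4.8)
The plan is to proceed by structural induction on the filter $F$, exploiting two features already built into the definitions: the truth value of an atom depends only on the values assigned to the variables occurring in it, and the three-valued connectives $\neg$ and $\land$ are interpreted by the \emph{same} clauses on the SPARQL and the SQL side. To make the induction descend cleanly into subformulas, I would first strengthen the statement to be insensitive to the ambient variable set: for every finite $V$ with $\var(F) \subseteq V$ and every solution mapping $s$ with $\dom(s) \subseteq V$, one has $F^s = (\tra(F))^{\extV[V](s)}$. This is the right formulation because padding $\extV[V](s)$ with extra $\Null$-columns for variables not occurring in $\tra(F)$ cannot affect the value, so the induction hypothesis for $F_1, F_2$ can be invoked with the \emph{same} $V$; the original statement is then the instance $V = \var(F)$.

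For the base cases, write $t = \extV[V](s)$ and record the two identities linking the two worlds: for $v \in V$ we have $t(v) = \Null$ exactly when $v \notin \dom(s)$, and $t(v) = s(v)$ (a genuine RDF term, hence never $\Null$) when $v \in \dom(s)$. The three atom types then follow directly. For $\textit{bound}(v)$, whose translation is $\neg\isNull(v)$, the value $(\isNull(v))^t$ is $\top$ iff $t(v) = \Null$ iff $v \notin \dom(s)$; since $\isNull$ is two-valued, its negation is two-valued as well, so $(\neg\isNull(v))^t = \top$ iff $v \in \dom(s)$, matching $(\textit{bound}(v))^s$. For $(v = c)$ and $(v = v')$ the translation $\tra$ is the identity, and both definitions return $\varepsilon$ precisely when one of the involved variables is undefined (equivalently, null in $t$) and otherwise return the same classical truth value, using $t(v) = s(v)$.

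The inductive steps are purely syntactic. Since $\tra(\neg F) = \neg\,\tra(F)$ and $\tra(F_1 \land F_2) = \tra(F_1) \land \tra(F_2)$, and since the clauses defining $(\neg\,\cdot)^{\bullet}$ and $(\cdot \land \cdot)^{\bullet}$ are identical three-valued functions of the truth values of the immediate subexpressions, applying the induction hypothesis to $F$ (respectively $F_1, F_2$) yields the claim immediately.

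The argument is mostly bookkeeping, and the only real content sits in the base case for $\textit{bound}$: the whole correspondence rests on translating $\textit{bound}(v)$ into $\neg\isNull(v)$, which converts ``$v$ is undefined in the solution mapping'' into ``the $v$-column is $\Null$ in the padded tuple.'' The one point I would be careful to state explicitly is that RDF terms in $\sDOM$ are never the SQL null, so a defined variable can never accidentally fall into the $\varepsilon$ branch reserved for nulls; this is exactly what makes the equalities line up. I would also flag that the generalization over $V$ is not cosmetic but necessary, since without it the ambient variable set would change as one descends into $F$ and the induction hypothesis would not apply verbatim to the subformulas.
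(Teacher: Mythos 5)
Your proof is correct and is essentially the intended argument: the paper states Proposition~\ref{prop:filter} without proof, as part of the SPARQL-to-SQL background recapped from the cited translation work, and the expected justification is exactly the structural induction you give, with the only substantive base case being $\textit{bound}(v)\mapsto\neg\isNull(v)$. The two points you flag explicitly --- strengthening the claim to an ambient $V\supseteq\var(F)$ so the induction hypothesis applies verbatim to subformulas, and noting that RDF terms in $\sDOM$ are never $\Null$, so a variable in $\dom(s)$ cannot stray into the $\varepsilon$ branch of the SQL semantics --- are precisely the spots where the bookkeeping could otherwise go wrong, and you handle both correctly.
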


The definition of $\tra$ proceeds by induction on the construction of $P$. Note that we can always assume that graph patterns \emph{under simple entailment} do not contain blank nodes because they can be replaced by fresh variables. It follows that a BGP $\{ \textit{tp}_1,\dots,\textit{tp}_n\}$ is equivalent to $\join(\{\textit{tp}_1\},\join(\{\textit{tp}_2\},\dots))$. 
So, for the basis of induction we set
\begin{equation*}
\tra(\{\langle s,p,o\rangle\})  =  \begin{cases}%
\pi_\emptyset \sigma_{(\textit{subj} = s) \land (\textit{pred} = p) \land (\textit{obj} = o)}\,\textit{triple}, & \text{if } s,p,o\in \myIRI \cup \myLIT,\\ 
\proj{s} \rho_{s/\textit{subj}}\, \sigma_{(\textit{pred} = p) \land (\textit{obj} = o)}\,\textit{triple}, & \text{if } s\in\myVAR \text{ and } p,o\in \myIRI \cup \myLIT,\\ 
\proj{s,o} \rho_{s/\textit{subj}}\,\rho_{o/\textit{obj}} \,\sigma_{\textit{pred} = p}\,\textit{triple}, & \text{if } s,o\in\myVAR, s \ne o, p\in \myIRI \cup \myLIT,\\ 
\proj{s} \rho_{s/\textit{subj}} \,\sigma_{(\textit{pred} = p)\land (\textit{subj} = \textit{obj})}\,\textit{triple},\hspace*{-0.5em} & \text{if } s,o\in\myVAR, s = o,  p\in \myIRI \cup \myLIT,\\[-2pt]
\dots
\end{cases}
\end{equation*}
(the remaining cases are similar).  
Now, if $P_1$ and $P_2$ are graph patterns and $F_1$ and $F$ are filters containing only variables in $\var(P_1)$ and $\var(P_1) \cup \var(P_2)$, respectively, then we set  $U_i = \var(P_i)$, $i=1,2$, and
\begin{align*}
& \tra(\filter(P_1,F_1))  = \sigma_{\tra(F_1)} \tra(P_1),\\
& \tra(\bind(P_1,v,c)) =  \tra(P_1) \Join \{ v\mapsto c \},\\
& \tra(\union(P_1,P_2))  = \ \mu_{U_2 \setminus U_1} \tra(P_1) \ \cup \ \mu_{U_1 \setminus U_2}\tra(P_2),\\
& \tra(\join(P_1, P_2))  =  \hspace*{-1.5em}\bigcup_{\begin{subarray}{c}V_1, V_2 \subseteq U_1 \cap U_2\\V_1 \cap V_2 = \emptyset\end{subarray}}  \hspace*{-1.5em} \bigl[ (\pi_{U_1\setminus V_1} \sigma_{\textit{isNull}(V_1)} \tra(P_1)) \!\Join\! 
(\pi_{U_2\setminus V_2} \sigma_{\textit{isNull}(V_2)} \tra(P_2)) \bigr],\\
& \tra(\leftjoin(P_1,P_2,F)) = \ \  \tra(\filter(\join(P_1,P_2),F)) \ \cup {} \\
& \hspace*{5em} \mu_{U_2\setminus U_1} \bigl(\tra(P_1) \setminus \hspace*{-0.7em}  \bigcup_{V_1\subseteq U_1 \cap U_2} \hspace*{-1em}\mu_{V_1}\pi_{U_1\setminus V_1} \tra(\filter(\join(P_1^{V_1,U_1\cap U_2}, P_2), F))\bigr),
\end{align*}
where $P^{V,U} = \filter(P, \bigwedge_{v\in V} \neg\textit{bound}(v) \land \bigwedge_{v\in U\setminus V} \textit{bound}(v))$. It is readily seen that any $\tra(P)$ is a valid SQL query and defines a relation over $\var(P)$; in particular, $\tra(\join(P_1,P_2))$ is a relation over \mbox{$\bigcup_{i = 1,2}(U_i \setminus V_i)= U_1 \cup U_2 = \var(\join(P_1,P_2))$}.

\begin{theorem}\label{thm:sparql-sql}
For any RDF graph $G$ and any graph pattern $P$,  
$\| P\|_G = \|\tra(P)\|_{\textit{triple}(G)}.$
\end{theorem}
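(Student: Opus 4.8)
The plan is to prove the identity by structural induction on the graph pattern $P$, matching each clause in the inductive definition of $\tra$ against the corresponding clause in the semantics $\sANS{\cdot}{G}$, while carefully tracking the bookkeeping that relates a partial solution mapping $s$ to its total, $\Null$-padded image $\extV[\var(P)](s)$. Throughout I would use the defining equation $\|P\|_G = \{\extV[\var(P)](s) \mid s \in \sANS{P}{G}\}$, so the goal reduces to showing that a tuple $t$ over $\var(P)$ lies in $\|\tra(P)\|_{\textit{triple}(G)}$ if and only if $t = \extV[\var(P)](s)$ for some $s \in \sANS{P}{G}$.

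For the base case I would take a single triple pattern $\{\langle s,p,o\rangle\}$ and check each case of the definition of $\tra$ on triples directly against the base case~(\ref{ans1}): a selection on $\textit{triple}$ picks exactly the rows matching the constant positions, renaming turns the matched columns into the variable attributes, and the $s=o$ case is handled by the extra selection $\textit{subj}=\textit{obj}$. Since, as noted in the text, a BGP $\{\textit{tp}_1,\dots,\textit{tp}_n\}$ is equivalent to $\join(\{\textit{tp}_1\},\join(\{\textit{tp}_2\},\dots))$, the general BGP case follows once the $\join$ step is established, so no separate argument for arbitrary BGPs is needed.

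The inductive steps for $\filter$, $\bind$ and $\union$ I expect to be routine. For $\filter$ I would invoke Proposition~\ref{prop:filter}, which gives $F^s = (\tra(F))^{\extV[V](s)}$, so that $\sigma_{\tra(F)}$ retains exactly the padded images of the mappings surviving the SPARQL filter. For $\bind$ the SQL join with the singleton $\{v\mapsto c\}$ mirrors the extension $s \smerge \{v\mapsto c\}$. For $\union$ the only subtlety is that the two arguments range over different variable sets $U_1,U_2$; the paddings $\mu_{U_2\setminus U_1}$ and $\mu_{U_1\setminus U_2}$ fill the missing attributes with $\Null$, which is exactly what $\extV[U_1\cup U_2]$ does to a mapping defined on a proper subset.

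The main obstacle will be the $\join$ case, because of the gap between SPARQL \emph{compatibility} of partial maps and the SQL natural join, which matches only on non-$\Null$ values. Two solution mappings are compatible precisely when, on each shared variable, at least one is unbound or both agree; the translation captures this by the union over all disjoint pairs $V_1,V_2 \subseteq U_1\cap U_2$, where $V_1$ (resp.\ $V_2$) records the shared variables left unbound in the left (resp.\ right) argument. I would argue that $\sigma_{\isNull(V_i)}$ isolates exactly the padded images whose domain misses $V_i$, that projecting these $V_i$ away lets the natural join enforce agreement on the genuinely bound common variables, and that re-padding yields $\extV$ of the merge $s_1\smerge s_2$; conversely, every compatible pair arises for exactly one choice of $(V_1,V_2)$, so no tuple is produced twice or spuriously. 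The $\leftjoin$ case then reduces to the already-treated $\join$ and $\filter$ together with the set-difference term: its first disjunct reproduces the filtered join, while the second disjunct, using the binding-pattern selectors $P_1^{V_1,U_1\cap U_2}$, collects exactly the ``dangling'' mappings $s_1\in\sANS{P_1}{G}$ having no $F$-satisfying compatible partner in $P_2$, in agreement with the three-valued definition of $\leftjoin$ and Proposition~\ref{prop:filter}.
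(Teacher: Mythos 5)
Your proof is correct in outline and is, in essence, the standard argument: note that this paper states Theorem~\ref{thm:sparql-sql} \emph{without} proof, recapping it as background from the cited translation~\cite{KRRXZ14}, and the structural induction you sketch --- base case on single triple patterns, BGPs via iterated \join{}, Proposition~\ref{prop:filter} for \filter{}, $\Null$-padding for \union{}, the case split over disjoint $(V_1,V_2)\subseteq U_1\cap U_2$ for \join{}, and the reduction of \leftjoin{} to \join{}, \filter{} and the difference term with the binding-pattern selectors --- is exactly how that result is established in the source. One small slip worth fixing: your claim that each compatible pair $(s_1,s_2)$ arises for \emph{exactly one} choice of $(V_1,V_2)$ is not quite right, since a shared variable unbound on \emph{both} sides may be placed in either $V_1$ or $V_2$ (but not in neither, as the natural join would then reject the two $\Null$s), so the same merged tuple is produced in two disjuncts of the union; under the set-based semantics the paper adopts this duplication is harmless and the stated equality is unaffected, but the uniqueness assertion as written is false.
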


\subsubsection{R2RML Mappings}\label{sec:r2rml}

The SQL translation of a SPARQL query constructed above has to be evaluated over the ternary relation $\textit{triple}(G)$ representing the virtual RDF graph $G$. Our aim now is to transform it to an SQL query over the actual database, which is related to $G$ by means of an R2RML mapping~\cite{W3Crec-R2RML}.
We begin with a simple example.
\begin{example}\label{ex:map}
The following R2RML mapping (in the Turtle syntax) populates 
an object property \nm{ub:UGDegreeFrom} from a relational table \nm{students}, whose attributes \nm{id} and \nm{degreeuniid} identify graduate students and their  universities:\\[2pt]
{\small\sffamily
\hspace*{1em}\_:m1 a rr:TripleMap;\\
\hspace*{4em}rr:logicalTable [ rr:sqlQuery "SELECT * FROM students WHERE stype=1" ];\\ 
\hspace*{4em}rr:subjectMap [ rr:template "/GradStudent\{id\}" ] ; \\
\hspace*{4em}rr:predicateObjectMap [ rr:predicate ub:UGDegreeFrom ;\\
\hspace*{14.8em}rr:objectMap [ rr:template "/Uni\{degreeuniid\}" ] ] 	
}\\[2pt]
More specifically, for each tuple in the query, an R2RML processor generates an RDF triple with the predicate \nm{ub:UGDegreeFrom} and the subject and object  constructed from attributes \nm{id} and \nm{degreeuniid}, respectively, using IRI templates.
\end{example}

Our aim now is as follows: given an R2RML mapping $\M$, we are going to define an SQL query  $\smash{\rTwoTr_\M(\textit{triple})}$  that constructs the relational representation $\smash{\textit{triple}(G_{D,\M})}$ of the virtual RDF graph $\smash{G_{D,\M}}$ obtained by $\M$ from any given data instance $D$.
Without loss of generality and to simplify presentation, we assume that each triple map has\\[4pt] 
\hspace*{1em}-- one logical table (\nm{rr:sqlQuery}), \\
\hspace*{1em}-- one subject map (\nm{rr:subjectMap}), which does not have resource typing (\nm{rr:class}), \\ 
\hspace*{1em}-- and one predicate-object map with one \nm{rr:predicateMap} and one \nm{rr:objectMap}.\\[4pt]
This normal form can be achieved by introducing predicate-object maps with \nm{rdf:type} and splitting any triple map into a number of triple maps with the same logical table and subject. We also assume that triple maps contain no referencing object maps  (\nm{rr:parentTriplesMap}, etc.) since they can be eliminated using  joint SQL queries~\cite{W3Crec-R2RML}. 
Finally, we assume that the term maps (i.e., subject, predicate and object maps) contain no constant shortcuts and are of the form $[\nm{rr:column}\ v]$, $[\nm{rr:constant}\ c]$ or  $[\nm{rr:template}\ s]$.

Given a triple map $m$ with a logical table (SQL query) $R$, we construct a selection $\sigma_{\neg \isNull(v_1)} \cdots \sigma_{\neg \isNull(v_k)} R$, where $v_1,\dots,v_k$ are the  \emph{referenced columns} of $m$ (attributes of $R$ in the term maps in $m$)---this is done to exclude tuples that contain $\Null$~\cite{W3Crec-R2RML}. To construct $\rTwoTr_m$, the selection filter is prefixed with projection $\proj{\textit{subj},\textit{pred},\textit{obj}}$ and, for each of the three term maps, either with renaming (e.g., with $\rho_{\textit{obj}/v}$ if the object map is of the form  $[\nm{rr:column}\ v]$) or with value creation (if the term map is of the form $[\nm{rr:constant}\ c]$ or $[\nm{rr:template}\ s]$; in the latter case, we use the built-in string concatenation function \nm{\cc}).
For instance, the mapping \nm{\_:m1} from Example~\ref{ex:map} is converted to the SQL query\\[2pt]
{\small\sffamily
\hspace*{2em}SELECT ('/GradStudent' \cc{} id) AS subj, 'ub:UGDegreeFrom' AS pred,\\ 
\hspace*{2em}\phantom{SELECT }  ('/Uni' \cc{} degreeuniid) AS obj FROM students\\ 
\hspace*{2em}WHERE (id IS NOT NULL) AND (degreeuniid IS NOT NULL) AND (stype=1).
}\\[2pt]
Given an R2RML mapping $\M$, we set $\rTwoTr_{\M}(\textit{triple}) = \bigcup_{m\in \M}
\rTwoTr_m$.

\begin{proposition}\label{prop:r2rml}
For any R2RML mapping $\M$ and data instance $D$,  
$t\in \|\rTwoTr_{\M}(\textit{triple})\|_D$ if and only if $t\in \textit{triple}(G_{D,\M})$.
\end{proposition}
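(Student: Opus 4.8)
The plan is to reduce the statement to the level of a single triple map and then verify, row by row, that the relational algebra expression $\rTwoTr_m$ reproduces exactly the triples that the R2RML processor generates from $m$. The argument is a direct unwinding of the two definitions, so the work lies in matching the term-map value-generation of R2RML with the renaming, constant-creation and concatenation operations used on the SQL side.

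First I would exploit that both sides distribute over the union of triple maps. By construction $\rTwoTr_{\M}(\textit{triple}) = \bigcup_{m \in \M} \rTwoTr_m$, and since the answer operator commutes with union, $\|\rTwoTr_{\M}(\textit{triple})\|_D = \bigcup_{m \in \M} \|\rTwoTr_m\|_D$. Likewise, the virtual graph $G_{D,\M}$ is the union over $m \in \M$ of the triples contributed by each triple map $m$, so $\textit{triple}(G_{D,\M}) = \bigcup_{m \in \M} \textit{triple}(G_{D,m})$, where $G_{D,m}$ denotes the triples emitted by $m$ alone. Hence it suffices to establish, for every triple map $m$ in the normal form fixed above, the equality $\|\rTwoTr_m\|_D = \textit{triple}(G_{D,m})$.

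Second I would fix such an $m$, with logical table $R$, referenced columns $v_1,\dots,v_k$, and subject, predicate and object term maps. On the R2RML side, the processor ranges over the rows of $R$ evaluated in $D$, discards any row in which some referenced column is $\Null$, and for each surviving row $r$ emits the triple whose three coordinates are obtained by applying the term maps to $r$: a column map $[\nm{rr:column}\ v]$ returns $r(v)$, a constant map $[\nm{rr:constant}\ c]$ returns $c$, and a template map $[\nm{rr:template}\ s]$ returns the string obtained by substituting the referenced column values of $r$ into $s$. On the SQL side, $\rTwoTr_m$ first applies $\sigma_{\neg\isNull(v_1)}\cdots\sigma_{\neg\isNull(v_k)}$, which retains precisely the same surviving rows, and then, for each coordinate, applies a renaming $\rho_{\textit{subj}/v}$ for a column map, a constant value creation for a constant map, or a concatenation built from $\cc$ for a template map, before projecting onto $\textit{subj},\textit{pred},\textit{obj}$. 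Checking that these two row-level mappings agree is then immediate: the renaming yields $r(v)$, the constant operation yields $c$, and the $\cc$-concatenation in $\rTwoTr_m$ is by construction exactly the template instantiation performed by the processor, so the tuple $\rTwoTr_m$ outputs for $r$ coincides componentwise with the triple generated from $r$. Since the surviving rows are identical on both sides and the per-row outputs agree, the two relations are equal.

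The only delicate points are bookkeeping rather than conceptual. One must invoke the normal-form assumptions (a single logical table, a single subject map, one predicate-object map with one predicate and one object map, and no referencing object maps) so that exactly three term maps feed into each generated triple and the union over $\M$ captures the full graph. The step I expect to be the main obstacle is reconciling the $\Null$ handling: one must confirm that the SQL filter $\sigma_{\neg\isNull(v_i)}$, interpreted in the three-valued semantics, matches the R2RML rule that a row with $\Null$ in any referenced column produces no triple, and that under this filter the $\cc$-based string generation faithfully realizes the R2RML template semantics on every surviving row. Once this correspondence is pinned down, the componentwise equality and hence the proposition follow.
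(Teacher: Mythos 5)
Your proposal is correct, and it is essentially the argument the paper intends: Proposition~\ref{prop:r2rml} is stated in the appendix as recapped background (from the cited SPARQL-to-SQL translation) with no written proof, precisely because it follows by unwinding the construction of $\rTwoTr_m$ exactly as you do --- distributing both sides over the union of triple maps, matching the $\sigma_{\neg\isNull(v_i)}$ selections to the R2RML rule that rows with $\Null$ in a referenced column emit no triple, and checking componentwise that renaming, constant creation, and $\cc$-concatenation realize the column, constant, and template term maps. Your one flagged worry about three-valued semantics in fact dissolves immediately, since by the paper's definition $(\isNull(U'))^t$ is always $\top$ or $\bot$ (never $\varepsilon$), so the selection is genuinely two-valued on every row.
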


Finally, given a graph pattern $P$ and an R2RML mapping $\M$, we define $\rTwoTr_{\M}(\tra(P))$ to be the result of replacing every occurrence of the relation $\textit{triple}$ in the query $\tra(P)$, constructed in Section~\ref{sec:SQL}, with $\rTwoTr_{\M}(\textit{triple})$. By Theorem~\ref{thm:sparql-sql} and Proposition~\ref{prop:r2rml}, we obtain:

\begin{theorem}\label{thm:sparql-r2rml-sql}
For any graph pattern $P$, R2RML  mapping $\M$ and data instance $D$,
$\| P\|_{G_{D,\M}} = \|\rTwoTr_{\M}(\tra(P))\|_D$.
\end{theorem}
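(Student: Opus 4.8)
The plan is to obtain the statement as a direct composition of the two results established immediately before it: Theorem~\ref{thm:sparql-sql}, which rewrites a graph pattern into an SQL query over the ternary relation $\textit{triple}$, and Proposition~\ref{prop:r2rml}, which shows that $\rTwoTr_{\M}(\textit{triple})$ materialises exactly $\textit{triple}(G_{D,\M})$. The glue connecting them is a compositionality property of relational algebra: the answer to an SQL query depends on the underlying data instance only through the relations the instance assigns to the base relation symbols occurring in the query.

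First I would instantiate Theorem~\ref{thm:sparql-sql} with the RDF graph $G := G_{D,\M}$, obtaining $\| P\|_{G_{D,\M}} = \|\tra(P)\|_{\textit{triple}(G_{D,\M})}$. It then remains to show $\|\tra(P)\|_{\textit{triple}(G_{D,\M})} = \|\rTwoTr_{\M}(\tra(P))\|_D$. To this end I introduce the data instance $D'$ that agrees with $D$ on every relation symbol except $\textit{triple}$, to which it assigns the relation $\|\rTwoTr_{\M}(\textit{triple})\|_D$; this is a relation over the attributes $\textit{subj},\textit{pred},\textit{obj}$, matching the schema that $\tra(P)$ assumes for $\textit{triple}$ in Section~\ref{sec:SQL}. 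By Proposition~\ref{prop:r2rml} this relation is precisely $\textit{triple}(G_{D,\M})$, so $D'$ interprets $\textit{triple}$ as $\textit{triple}(G_{D,\M})$, and hence $\|\tra(P)\|_{D'} = \|\tra(P)\|_{\textit{triple}(G_{D,\M})}$.

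The only genuinely inductive step is the substitution lemma $\|\rTwoTr_{\M}(\tra(P))\|_D = \|\tra(P)\|_{D'}$, which holds because $\rTwoTr_{\M}(\tra(P))$ is by definition the query obtained from $\tra(P)$ by replacing every occurrence of the base symbol $\textit{triple}$ with the subquery $\rTwoTr_{\M}(\textit{triple})$. I would prove it by a routine structural induction on the relational algebra expression $\tra(P)$: in the base case, the substituted $\textit{triple}$ evaluates over $D$ to $\|\rTwoTr_{\M}(\textit{triple})\|_D$, which is exactly what $D'$ assigns to $\textit{triple}$, while all other base symbols are untouched; in the inductive cases each operator $\cup$, $\setminus$, $\pi$, $\sigma$, $\rho$, $\Join$ (and the derived padding $\mu$) is defined pointwise from the answers of its immediate subexpressions, so the two evaluations agree by the induction hypothesis. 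Chaining the three equalities yields $\| P\|_{G_{D,\M}} = \|\rTwoTr_{\M}(\tra(P))\|_D$.

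I expect no real obstacle, as the theorem is essentially bookkeeping. The two points that deserve a line of care are: first, that $\tra(P)$ may contain several occurrences of $\textit{triple}$ (for instance when a BGP is unfolded into nested joins of single triple patterns), but since all occurrences are replaced by one and the same subquery and relational algebra evaluates subexpressions independently, the single instance $D'$ handles them uniformly; and second, that the attribute schema $\{\textit{subj},\textit{pred},\textit{obj}\}$ produced by $\rTwoTr_{\M}(\textit{triple})$ coincides with the schema assumed for $\textit{triple}$, so that every operator of $\tra(P)$ remains well-typed after the substitution.
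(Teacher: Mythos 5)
Your proposal is correct and follows exactly the route the paper itself takes: the paper derives the theorem in one line ``By Theorem~\ref{thm:sparql-sql} and Proposition~\ref{prop:r2rml}'', which is precisely your composition of the two results. Your only addition is to make explicit the substitution lemma (that replacing every occurrence of $\textit{triple}$ by $\rTwoTr_{\M}(\textit{triple})$ amounts to evaluating $\tra(P)$ over the instance interpreting $\textit{triple}$ as $\textit{triple}(G_{D,\M})$), a routine structural induction the paper leaves implicit, and your care about schema compatibility and multiple occurrences of $\textit{triple}$ is sound.
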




\subsection{Proofs of Section~\ref{sec:exact-mappings-obda}}

\begin{propositionnum} {\ref{prop:exact}}
  Let $\M'$ be exact for the predicate $A$ in $\T$.  
  Let $\M_\T'$ be the result of replacing all the mappings 
  defining $A$ in $\M_\T$ by $\M'$. 
	Then $\G^\O=\G^{(( \emptyset,\M_\T', \Sigma ), D)}$.
\end{propositionnum}

\begin{proof}[Sketch]
  By the definition of $\T$-mappings, we have
  $\G^{\O}=\G^{\emptyset,\M_\T,D}.$ For all predicates other than $A$,
  $\M_\T$ and $\M_\T'$ produce the same set of triples since the
  mappings defining them are identical. For the predicate $A$, since
  $\M'$ is exact in $\O$, $\M_\T$ and $\M_\T'$ also produce same set
  of triples. Therefore
  $\G^{\emptyset,\M_\T,D}=\G^{\emptyset,\M_\T',D}.$
\end{proof}

\subsection{Proofs of Section~\ref{s:fds-in-obda}}
\label{sec:proofs-sect-refs:fds}


\begin{lemmanum}{\ref{lm:detect1}} Let $P_1,\dots, P_n$ be properties in $\T$ such
  that, for each $1 \le i < n$, $t_d^i=t_d^1$. Then, the \vfd
  $t_d^1\mapsto^b{P_1\dots P_n}$ is satisfied in $\O$ if and only if,
  for each $1 \le i \le n$, the \fd
  $\vec{x}_{i}\rightarrow \vec{y}_{i} $ is satisfied on
  $sql_i(\vec{z}_i)^D$.
 \end{lemmanum}

 \begin{proof}
      \footnotesize
      ~\\
      $t_d^1\mapsto^b{P_1\dots P_n}$ is satisfied in $G^{\O}$\\
      $\Updownarrow$ (Definition~\ref{def:nfd})\\
      $\forall s \in S_{t_d^1}: \forall 1 \le i \le n : (s,o) \in P_i^{G^{\O}} \land (s,o') \in P_i^{G^{\O}} \Rightarrow o = o'$\\
      $\Updownarrow$ (Mappings assumptions for $P_i$)\\
      $\forall 1 \le i \le n : \forall \vec{u} \in \pi_{\vec{x}_i} \sigma_{notNull(\vec{x}_i,\vec{y}_i)}sql_i(\vec{z}_i)^D : (\vec{u},\vec{y}) \in \pi_{\vec{x}_i\vec{y}_i}\sigma_{notNull(\vec{x}_i,\vec{y}_i)}sql_i(\vec{z}_i)^D \land (\vec{u},\vec{y'}) \in \pi_{\vec{x}_i\vec{y}_i}\sigma_{notNull(\vec{x}_i,\vec{y}_i)}sql_i(\vec{z}_i)^D \Rightarrow \vec{y} = \vec{y'}$\\
      $\Updownarrow$ (Definition of Functional Dependency)\\
      $\forall 1 \le i \le n : \vec{x}_i \rightarrow \vec{y}_i \text{ is satisfied in } \pi_{\vec{x}_i\vec{y}_i}\sigma_{notNull(\vec{x}_i,\vec{y}_i)}sql_i(\vec{z}_i)^D$\\
      $\Updownarrow$ \\ 
      $\forall 1 \le i \le n : \vec{x}_i \rightarrow \vec{y}_i \text{ is satisfied in } \sigma_{notNull(\vec{x}_i,\vec{y}_i)}sql_i(\vec{z}_i)^D$
    \end{proof}
    $\hfill\Box$


\begin{lemmanum}{\ref{def:opt1}}
   Consider $n$ properties $P_1,\dots,P_n$ in $\T$ with $t_{d}^i = t_d^1$, for each $1 \le i \le n$, and for which 
   ${t_d^1}\rightsquigarrow^b{P_1\cdots P_n}$ is satisfied in $\O$. Then 
   {\footnotesize
     \begin{align*}
       \pi_{\gamma}(sql_1(\vec{z}_1))^D = ~& 
\pi_{ \gamma} (sql_1(\vec{z}_1) \Join_{\vec{x}_{1}=\vec{x}_{2}} sql_2(\vec{z}_2) \Join \cdots \Join_{\vec{x}_{1}=\vec{x}_{n}} sql_n(\vec{z}_n))^D, 
     \end{align*}
   }
where $\gamma=\vec{x}_{1},\vec{y}_{1}, \ldots, \vec{y}_{n}$.

\end{lemmanum}

    \begin{proof}
      The direction $\subseteq$ of the equality can be obtained easily. Here we prove the direction $\supseteq$.

      Let $q_{branch}^D$ denote the right hand side expression in the equality.
      Assume the containment $\supseteq$ does not hold. Then, this means there exists a tuple $(\vec{s}, \vec{v}_1, \ldots, \vec{v}_n)$ such that
      
      \begin{itemize}
      \item $(\vec{u}, \vec{v}_1, \ldots, \vec{v}_n) \in q_{branch}^D$,  and
      \item $(\vec{u}, \vec{v}_1, \ldots, \vec{v}_n) \notin  \pi_{ \vec{x}_1,\vec{v}_1, \ldots, \vec{v}_n}\sigma_{notNull(\vec{x}_1, \vec{y}_1)}sql_1(\vec{z}_1)^D$
      \end{itemize}

      The above implies that there exists an index $j$, $1 \le j \le n$, such that
      \begin{itemize}
      \item $(\vec{u}, \vec{v}_j) \in \pi_{\vec{x}_1,\vec{y}_j}q_{branch}^D$, and 
      \item $(\vec{u}, \vec{v}_j) \notin  \pi_{ \vec{x}_1,\vec{y}_j}\sigma_{notNull(\vec{x}_1, \vec{y}_1)}sql_1(\vec{z}_1)^D$
      \end{itemize}
      
      Then, we can distinguish three cases:

      \begin{enumerate}
      \item $\vec{u} \notin \pi_{\vec{x}_1}\sigma_{notNull(\vec{x}_1, \vec{y}_1)}sql_1(\vec{z}_1)$.
        
        Then $u \notin \pi_{\vec{x}_1}q_{branch}^D$, hence $(\vec{u}, \vec{v}_j) \notin \pi_{\vec{x}_1,\vec{y}_j}q_{branch}^D$; contradiction.
      \item $(\vec{u},\vec{v}_j') \in \pi_{\vec{x}_1,\vec{y}_j}\sigma_{notNull(\vec{x}_1, \vec{y}_1)}sql_1(\vec{z}_1)$, and $\vec{v}_j' = \texttt{null}$.
        
        Since ${t_d^1}\rightsquigarrow^b{P_1\cdots P_n}$ is satisfied in $\O$, it must be 
        $(\vec{u},\texttt{null}) \in \pi_{\vec{x}_j,\vec{y}_j}\sigma_{notNull(\vec{x}_j, \vec{y}_j)}sql_j(\vec{z}_j)^D$, which is impossible.
        
      \item $(\vec{u},\vec{v}_j') \in \pi_{\vec{x}_1,\vec{y}_j}\sigma_{notNull(\vec{x}_1, \vec{y}_1)}sql_1(\vec{z}_1)^D$, $\vec{v}_j \neq \vec{v}_j'$ and not $\vec{v}_j$ nor $\vec{v}_j'$ is \texttt{null}.

        This violates the hypothesis that ${t_d^1}\rightsquigarrow^b{P_1\cdots P_n}$ is satisfied in $\O$, because of Lemma~\ref{lm:detect1}.  
      \end{enumerate}
      Hence, by contradiction we conclude that the containment $\supseteq$ must hold.
    \end{proof}
    $\hfill\Box$

   \paragraph{\textbf{Results and Proofs for PATH VFDs}}

   ~\\

   \begin{lemma}\label{lm:detect2}
     Let $P_1,\dots, P_n$ be properties in $\T$ such that, for each $1 \le i \le n$, $t_{r_i} = t_{d_{i+1}}$.
     Then, the \vfd ${t^1_{d}}\mapsto^p{P_1\dots P_n}$ is satisfied in $\O$ if and only if 
    the \fd 
    $ \textbf{x}_{1}\rightarrow \textbf{y}_{1} \cdots \textbf{y}_{n}$ 
    is satisfied in: 
    \[\pi_{\textbf{x}_{1}\textbf{y}_{1}\cdots \textbf{y}_{n}}(sql_1(\vec{z}_1)) \Join_{\textbf{y}_{1}=\textbf{x}_{2}} sql_2(\vec{z}_2) \Join_{\textbf{y}_{2}=\textbf{x}_{3}}\dots  \Join_{\textbf{y}_{n-1}=\textbf{x}_{n}} sql_n(\vec{z}_n))^D\]

    \begin{proof}
      ~\\
      \scriptsize
      ${t_d^1}\mapsto^p{P_1\dots P_n}$ is satisfied in $G^{\O}$\\
      $\Updownarrow$ (Definition~\ref{def:nfd})\\
      $ \forall s\in S^1_{t_d} : \exists \text{ unique list } \langle o_1,\dots, o_n\rangle$ in $G^{\O}$ such that
      $ \{ (s,P_1,o_1), \dots, (o_{n-1}, P_n, o_n) \}$ $ \subseteq G^{\O}$ \\
      $\Updownarrow$ \\
      $ \forall s\in S^1_{t_d} : \forall o_1, \ldots, o_n, o_1', \ldots, o_n' \text{ in } G^{\O} : (s,o_1) \in P_1^{G^{\O}} \land \ldots \land (o_{n-1}, o_n) \in P_n^{G^{\O}} \bigwedge (s,o_1') \in P_1^{G^{\O}} \land \ldots \land (o_{n-1}', o_n') \in P_n^{G^{\O}} \Rightarrow o_1 = o_1', \ldots, o_n = o_n'$\\
      $\Updownarrow$ (Mappings assumptions for $P_i$)
      \begin{align*}
        \forall \vec{u} \in \pi_{\vec{x}_1}\sigma_{notNull(\vec{x}_1)}sql_1(\vec{z}_1)^D : & \\
        & \forall \vec{v} _1, \vec{v}_1' \in \pi_{\vec{y}_1}\sigma_{notNull(\vec{y}_1)}sql_1(\vec{z}_1)^D : \ldots : \forall \vec{v}_n, \vec{v}_n' \in \pi_{\vec{y}_n}\sigma_{notNull(\vec{y}_n)}sql_n(\vec{z}_n)^D : \\
        & (\vec{u},\vec{v}_1) \in \pi_{\vec{x}_1\vec{y}_1}\sigma_{notNull(\vec{x}_1,\vec{y}_1)}sql_1(\vec{z}_1)^D \land \ldots \land (\vec{v}_{n-1},\vec{v}_n) \in \pi_{\vec{x}_n\vec{y}_n}\sigma_{notNull(\vec{x}_n,\vec{y}_n)}sql_n(\vec{z}_n)^D \bigwedge \\
        & (\vec{u},\vec{v}_1') \in \pi_{\vec{x}_1\vec{y}_1}\sigma_{notNull(\vec{x}_1,\vec{y}_1)}sql_1(\vec{z}_1)^D \land \ldots \land (\vec{v}_{n-1}',\vec{v}_n') \in \pi_{\vec{x}_n\vec{y}_n}\sigma_{notNull(\vec{x}_n,\vec{y}_n)}sql_n(\vec{z}_n)^D \Rightarrow \vec{v}_1 = \vec{v}_1', \ldots, \vec{v}_n = \vec{v}_n'
      \end{align*}
      $\Updownarrow$ (Standard Translation AND assumptions on templates)\\
      \begin{align*}
        \forall \vec{u} \in \pi_{\vec{x}_1}\sigma_{notNull(\vec{x}_1)}sql_1(\vec{z}_1)^D : \\
        & \forall \vec{v}_1, \vec{v}_1' \in \pi_{\vec{y}_1}\sigma_{notNull(\vec{y}_1)}sql_1(\vec{z}_1)^D : \ldots : \forall \vec{v}_n, \vec{v}_n' \in \pi_{\vec{y}_n}\sigma_{notNull(\vec{y}_n)}sql_n(\vec{z}_n)^D : \\
        & (\vec{u}, \vec{v}_1, \ldots, \vec{v}_n) \in \pi_{\vec{x}_1\vec{y}_1\dots \vec{y}_n}(\sigma_{notNull(\vec{x}_1,\vec{y}_1)}\widehat{sql_1(\vec{z}_1)} \Join_{\vec{y}_1=\vec{x}_2} \sigma_{notNull(\vec{x}_2,\vec{y}_2)}sql_2(\vec{z}_2) \Join_{\vec{y}_2=\vec{x}_3}\dots  \Join_{\vec{y}_{n-1}=\vec{x}_n} \sigma_{notNull(\vec{x}_n,\vec{y}_n)}sql_n(\vec{z}_n))^D \bigwedge \\
        & (\vec{u}, \vec{v}_1', \ldots, \vec{v}_n') \in \pi_{\vec{x}_1\vec{y}_1\dots \vec{y}_n}(\sigma_{notNull(\vec{x}_1,\vec{y}_1)}\widehat{sql_1(\vec{z}_1)} \Join_{\vec{y}_1=\vec{x}_2} \sigma_{notNull(\vec{x}_2,\vec{y}_2)}sql_2(\vec{z}_2) \Join_{\vec{y}_2=\vec{x}_3}\dots  \Join_{\vec{y}_{n-1}=\vec{x}_n} \sigma_{notNull(\vec{x}_n,\vec{y}_n)}sql_n(\vec{z}_n))^D \Rightarrow \\ 
        &\hspace{11cm} \vec{v}_1 = \vec{v}_1', \ldots, \vec{v}_n = \vec{v}_n'\\
      \end{align*}
      $\Updownarrow$ (Definition of Functional Dependency)\\
      $ \vec{x}_1\rightarrow \vec{y}_1 \dots  \vec{y}_n$ is satisfied in $\pi_{\vec{x}_1\vec{y}_1\dots \vec{y}_n}(\widehat{sql_1(\vec{z}_1)} \Join_{\vec{y}_1=\vec{x}_2} sql_2(\vec{z}_2) \Join_{\vec{y}_2=\vec{x}_3}\dots  \Join_{\vec{y}_{n-1}=\vec{x}_n} sql_n(\vec{z}_n))^D$
      $\hfill\Box$
    \end{proof}
  \end{lemma}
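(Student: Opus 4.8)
The plan is to prove the biconditional by a single chain of equivalences, in the same style as the proof of Lemma~\ref{lm:detect1}, unfolding the path-\vfd condition on the left and the \fd condition on the right until both meet at a common statement about tuples of the joined relation. Since $\O$ is assumed basic, each property $P_i$ has a unique mapping of the shape $t^i_d(\textbf{x}_i)\,P_i\,t^i_r(\textbf{y}_i)\leftarrow sql_i(\textbf{z}_i)$, and I would use this throughout to move between the graph level $G^\O$ and the database level.

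First I would expand the left-hand side with Definition~\ref{def:nfd}: ${t^1_d}\mapsto^p P_1\cdots P_n$ holds in $\O$ iff, for every $s\in S_{t^1_d}$, there is at most one list $(o_1,\dots,o_n)$ in $G^\O$ with $\{(s,P_1,o_1),(o_1,P_2,o_2),\dots,(o_{n-1},P_n,o_n)\}\subseteq G^\O$. I would immediately rephrase ``at most one'' in the universally quantified form used for \fds: for any two such lists $(o_1,\dots,o_n)$ and $(o_1',\dots,o_n')$ that both form a path starting at $s$, one has $o_i=o_i'$ for every $i$. This puts the left end of the chain into a shape that can be matched against the definition of a functional dependency.

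The crux is the translation from this graph-level statement to a database-level one. An edge $(a,P_i,b)\in G^\O$ arises from the unique mapping for $P_i$, so $a=t^i_d(\textbf{u})$ and $b=t^i_r(\textbf{v})$ for some tuple of $\sigma_{\text{notNull}(\textbf{x}_i,\textbf{y}_i)}sql_i(\textbf{z}_i)^D$ whose $\textbf{x}_i$- and $\textbf{y}_i$-values are $\textbf{u}$ and $\textbf{v}$. Two facts drive the step: templates are injective, so a node determines its generating attribute values uniquely; and the hypothesis $t_{r_i}=t_{d_{i+1}}$ forces the shared node $o_i$ (range of $P_i$, domain of $P_{i+1}$) to make the range values $\textbf{y}_i$ of the $i$-th edge coincide with the domain values $\textbf{x}_{i+1}$ of the $(i+1)$-th edge. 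That coincidence is precisely the equijoin condition $\textbf{y}_i=\textbf{x}_{i+1}$, so a path from $s$ corresponds exactly to a tuple of $\pi_{\textbf{x}_1\textbf{y}_1\cdots\textbf{y}_n}(sql_1(\textbf{z}_1)\Join_{\textbf{y}_1=\textbf{x}_2}\cdots\Join_{\textbf{y}_{n-1}=\textbf{x}_n}sql_n(\textbf{z}_n))^D$ whose $\textbf{x}_1$-value generates $s$, and conversely. The ``at most one path'' statement therefore becomes: for each value of $\textbf{x}_1$ there is at most one value of $(\textbf{y}_1,\dots,\textbf{y}_n)$ in that relation, which by the definition of functional dependency is exactly $\textbf{x}_1\rightarrow\textbf{y}_1\cdots\textbf{y}_n$ satisfied there, closing the chain.

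The hard part will be making this graph-to-database translation watertight in both directions. Concretely I would need to argue template injectivity carefully so that distinct attribute tuples never collapse to one node and equal nodes never come from distinct tuples, and to check that the $\sigma_{\text{notNull}}$ filters built into each $sql_i$ line up across the successive joins so that no spurious path is created and none is lost. Both implications hinge on this alignment and on $t_{r_i}=t_{d_{i+1}}$ holding for every $i$, which is what licenses reading the composed joins as a single path; once these bookkeeping points are settled the remaining equivalences are routine, as in Lemma~\ref{lm:detect1}.
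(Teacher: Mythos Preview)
Your proposal is correct and follows essentially the same approach as the paper: a chain of equivalences that unfolds the path-\vfd definition into the ``two lists coincide'' form, then uses the basic-instance assumption (unique mapping per $P_i$), template injectivity, and the hypothesis $t_{r_i}=t_{d_{i+1}}$ to identify paths in $G^\O$ with tuples of the joined relation, finishing with the definition of a functional dependency. If anything, you are more explicit than the paper about why template injectivity and the $\sigma_{\text{notNull}}$ filters are needed, which the paper absorbs into the labels ``Mappings assumptions for $P_i$'' and ``Standard Translation AND assumptions on templates''.
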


  \begin{example}
    Consider the following set of $\T$-mappings for an OBDA setting $\O$:

\begin{lstlisting}[basicstyle=\ttfamily\scriptsize,mathescape,frame=none,frame=tb]
$\owlind{f(id,name)} \ \owl{P}_1 \ \owlind{g(friend)} \leftarrow$ $\SQL{\SELECT{} id, name, friend \FROM{} T}$ 
$\owlind{g(friend)} \ \owl{P}_2 \ \owlind{h(friend\_age)} \leftarrow$ $\SQL{\SELECT{} friend, friend\_age \FROM{} T}$
\end{lstlisting}
Then the lemma above suggests that the \vfd $f\mapsto^p{P_1P_2}$ is satisfied in $\O$
if and only if the \fd $\owl{\small id name}$ $\rightarrow$ \owl{\small friend friend\_age} is satisfied in $(T \Join_{friend = friend} T)^D$.
 \end{example}

 \begin{definition}[Optimizing Path \vfd]\label{def:optimizable-path}
   Let $t$ be a template, and $P_1,\dots, P_n$ be properties in $\T$. 
   An \emph{optimizing path \vfd} is an expression of the form
   $t \rightsquigarrow^p P_1 \cdots P_n$.
   An optimizing \vfd \emph{$t \rightsquigarrow^p P_1 \cdots P_n$ is satisfied in $\O$} if $t \mapsto^p P_1 \cdots P_n$ is satisfied in $\O$ and 
   \begin{equation}\label{eq:pathPrecondition}
     \pi_{\textbf{x}_{1}\textbf{y}_{1}\dots \textbf{y}_{n}}sql_1(\textbf{z}_1)^D \subseteq q_{path}^D        
   \end{equation}
   where
   \[q_{path} = \pi_{\textbf{x}_{1}\textbf{y}_{1}\dots \textbf{y}_{n}} ( sql_1(\textbf{z}_1)\Join_{\textbf{y}_{1}=\textbf{x}_{2}} sql_2(\textbf{z}_2) \Join_{\textbf{y}_{2}=\textbf{x}_{3}} \dots \Join_{\textbf{y}_{n-1}=\textbf{x}_{n}} sql_n(\textbf{z}_n) ).\]
\end{definition}

\begin{lemma}\label{def:opt2}
  Consider $n$ properties $P_1,\dots,P_n$ in $\T$ with $t_{r_i} = t_{d_{i+1}}$, for each $1 \le i < n$, and for which 
  ${t_d^1}\rightsquigarrow^p{P_1\cdots P_n}$ is satisfied in $\O$. 
   Then 
   \begin{align*}
     \pi_{\textbf{x}_{1}\textbf{y}_{1}\dots \textbf{y}_{n}}sql_1(\textbf{z}_1)^D & = q_{path}^D
   \end{align*}
   where $q_{path}$ is the same as in the Definition~\ref{def:optimizable-path}.
   \begin{proof}
     (Sketch) The argument is similar to the one of the proof for Lemma~\ref{def:opt1}, by using Lemma~\ref{lm:detect2}.
    \end{proof}
\end{lemma}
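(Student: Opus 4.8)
The plan is to establish the set equality $\pi_{\gamma}sql_1(\textbf{z}_1)^D = q_{path}^D$, with $\gamma = \textbf{x}_1,\textbf{y}_1,\ldots,\textbf{y}_n$, by proving the two inclusions separately and reusing the skeleton of the proof of Lemma~\ref{def:opt1}. I work under the standing normalization (inherited from the branching setting) that $\textbf{z}_1$ carries all the range attributes $\textbf{y}_1,\ldots,\textbf{y}_n$, so that the left-hand projection is well defined, and under the fact that every $sql_i$ contains the filter $\sigma_{\text{notNull}(\textbf{x}_i,\textbf{y}_i)}$, so that $q_{path}^D$ has no nulls on the join and range attributes. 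First I would unfold the hypothesis: by Definition~\ref{def:optimizable-path}, ``$t_d^1 \rightsquigarrow^p P_1\cdots P_n$ is satisfied in $\O$'' means both that the path \vfd $t_d^1 \mapsto^p P_1\cdots P_n$ holds and that the containment precondition~\eqref{eq:pathPrecondition}, i.e. $\pi_{\gamma}sql_1(\textbf{z}_1)^D \subseteq q_{path}^D$, holds. The inclusion $\subseteq$ is therefore immediate, being exactly this precondition.

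The substantive direction is $\supseteq$. Here I would first convert the \vfd into a relational statement: by Lemma~\ref{lm:detect2}, satisfaction of $t_d^1 \mapsto^p P_1\cdots P_n$ is equivalent to the functional dependency $\textbf{x}_1 \rightarrow \textbf{y}_1\cdots\textbf{y}_n$ holding in $q_{path}^D$. Now fix an arbitrary tuple $(\vec{u},\vec{v}_1,\ldots,\vec{v}_n)\in q_{path}^D$. Since it is the $\gamma$-projection of a tuple of the chain join $sql_1 \Join_{\textbf{y}_1=\textbf{x}_2}\cdots\Join_{\textbf{y}_{n-1}=\textbf{x}_n} sql_n$, there is a witness $w\in sql_1(\textbf{z}_1)^D$ with $w[\textbf{x}_1]=\vec{u}$ and $w[\textbf{y}_1]=\vec{v}_1$; projecting $w$ onto $\gamma$ gives a tuple $(\vec{u},\vec{v}_1,\vec{v}_2',\ldots,\vec{v}_n')\in\pi_{\gamma}sql_1(\textbf{z}_1)^D$, where $\vec{v}_i'=w[\textbf{y}_i]$. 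Applying the already-established inclusion $\subseteq$, this second tuple also lies in $q_{path}^D$. At that point I have two tuples of $q_{path}^D$ that agree on $\textbf{x}_1=\vec{u}$, so the functional dependency forces agreement on all of $\textbf{y}_1,\ldots,\textbf{y}_n$; hence $\vec{v}_i'=\vec{v}_i$ for every $i$, and the original tuple equals $\pi_{\gamma}w\in\pi_{\gamma}sql_1(\textbf{z}_1)^D$, establishing $\supseteq$.

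I expect the only delicate point to be the interaction between the auxiliary range attributes $\textbf{y}_2,\ldots,\textbf{y}_n$ stored in $sql_1$ and the null-handling: a priori the witness $w$ might carry a null on some $\textbf{y}_i$, which would break the comparison. This is precisely why the argument routes $\pi_{\gamma}w$ back into $q_{path}^D$ through the precondition \emph{before} invoking the dependency --- since $q_{path}^D$ is built from $\text{notNull}$-filtered joins it contains no such nulls, so the functional dependency applies cleanly (this mirrors the three-case analysis, including the null case, in the proof of Lemma~\ref{def:opt1}). The remaining work is routine: spelling out why every tuple of $q_{path}^D$ admits a witness in $sql_1^D$ matching on $\textbf{x}_1,\textbf{y}_1$, which follows directly from the definition of the chain join and the projection onto $\gamma$.
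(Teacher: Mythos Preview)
Your proposal is correct and follows exactly the route the paper intends: the $\subseteq$ direction is the precondition~\eqref{eq:pathPrecondition} verbatim, and the $\supseteq$ direction invokes Lemma~\ref{lm:detect2} to turn the path \vfd into the functional dependency $\textbf{x}_1\to\textbf{y}_1\cdots\textbf{y}_n$ on $q_{path}^D$, which is precisely the analogue of how Lemma~\ref{lm:detect1} is used in the proof of Lemma~\ref{def:opt1}. Your direct argument (find a witness in $sql_1^D$, route its $\gamma$-projection back into $q_{path}^D$ via the precondition, then invoke the FD) is a clean unpacking of the contradiction-style three-case analysis the paper gives for the branching case, including the null case you single out; the paper's own proof here is only the one-line sketch you were asked to expand.
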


\begin{definition}[Optimizable path BGP]
  A BGP $\beta$ is optimizable w.r.t. $\mathfrak{v} = {t_d}\rightsquigarrow^p{P_1\dots P_n}$ 
  if \emph{(i)} $\mathfrak{v}$ is satisfied in $\O$;
\emph{(ii)} the BGP of triple patterns in $\beta$ involving properties  is of the form 
$\textowl{?v$_0$ ~P$_1$ ?v$_1$.~\dots ?v$_{n-1}$  P$_n$ ?v$_n$.}$;
and \emph{(iii)}  for every triple  pattern of the form $\textowl{?u  rdf:type  C}$ in $\beta$,
 $\textowl{?u}$ is the subject of some  
$P_i$ ($i=1\dots n$) 
and
$t_d^i \rightsquigarrow^d_{P_i} C$ is satisfied in $\O$
, or  $?u$ is  the object of some 
$P_i$ ($i=1\dots n$)  and $t_r^i \rightsquigarrow^r_{P_i} C$ is satisfied in $\O$.
\end{definition}

\paragraph{\textbf{Proofs for Main Results}}

~\\

\begin{theoremnum}{\ref{th:finalFD}}
  Let $\beta$ be an
  optimizable BGP w.r.t. ${t_d}\rightsquigarrow^x {P_1\dots P_n}$
  ($x=b,p$) in $\O$.  Let
  $\pi_{v/t_{d}^1,v_1/t_{r}^1,\ldots,v_n/t_{r}^n}sql_{\beta}$ be
  the SQL translation of $\beta$ as explained in
  Section~\ref{sec:sparql-qa}.  Let
  $sql_{\beta}'
  =sql_1(\vec{x}_{1},\vec{y}_{1}\dots,\vec{y}_{n}) $.
Then  $sql_{\beta}^D$ and $sql_{\beta}'^D$  return the same answers.
\end{theoremnum}

  \begin{proof}

    Assume that ${t_d}\rightsquigarrow ^p_{P_1\dots P_n}$. The proof for branching functional dependencies is analogous. 

    From the definition of  $\tra{}$ for triple pattern and the definition of the $\tra{} $ for $\Join$ for BGPs it follows that the BGP $\beta$ will be translated as:

    \begin{equation}\label{eq:1}
      \begin{array}{l}
        (\proj{v_0,v_1} \rho_{v/\mathtt{subj}}\,\rho_{v_1/\mathtt{obj}} \,\sigma_{\texttt{pred} = P_1}triple) \\
        ~~~~~~~~~~~~~~~~~~~~~~~~~~~~~~~~~~~~~\Join_{v_1=v_2} \\
        ~~~~~~~~~~~~~~~~~~~~~~~~~~~~~~~~~~~~~\vdots \\
        ~~~~~~~~~~~~~~~~~~~~~~~~~~~~~~~~~~~~~\Join_{v_{n-2}=v_{n-1}}\\
        (\proj{v_{n-1},v_n} \rho_{{v_{n-1}/\mathtt{subj}}}\,\rho_{\textit{v}_n/\mathtt{obj}} \,\sigma_{\texttt{pred} = P_n}triple) \\
      \end{array}
    \end{equation}

    The table triple is replaced by the definition of the triple patterns in the mappings as follows:

    \begin{equation}\label{eq:1}
      \begin{array}{l}
 (\proj{v_0,v_1} \rho_{v_0/\mathtt{subj}}\,\rho_{v_1/\mathtt{obj}} \,  \sigma_{\texttt{pred} = P_1} \pi_{\mathtt{pred}/P_1,\mathtt{subj}/t_d^{1},\mathtt{obj}/t_r^{1}}
(sql_1(\vec{z}_1))) \\
        ~~~~~~~~~~~~~~~~~~~~~~~~~~~~~~~~~~~~~\Join_{v_1=v_2} \\
        ~~~~~~~~~~~~~~~~~~~~~~~~~~~~~~~~~~~~~\vdots \\
        ~~~~~~~~~~~~~~~~~~~~~~~~~~~~~~~~~~~~~\Join_{v_{n-2}=v_{n-1}}\\
 (\proj{v_{n-1},v_n} \rho_{v_{n-1}/\mathtt{subj}}\,\rho_{v_n/\mathtt{obj}} \,  \sigma_{\texttt{pred} = P_n} \pi_{\mathtt{pred}/P_n,\mathtt{subj}/t_d^{n},\mathtt{obj}/t_r^{n}})(sql_n(\vec{z}_n)))) \\
      \end{array}
    \end{equation}

    %
    This expression can be simplified to:

\begin{equation}\label{eq:1}
      \begin{array}{l}
 (\proj{v_0,v_1} \rho_{v_0/\mathtt{subj}}\,\rho_{v_1/\mathtt{obj}} \,  \pi_{\mathtt{subj}/t_d^{1},\mathtt{obj}/t_r^{1}}
(sql_1(\vec{z}_1))) \\
        ~~~~~~~~~~~~~~~~~~~~~~~~~~~~~~~~~~~~~\Join_{v_1=v_2} \\
        ~~~~~~~~~~~~~~~~~~~~~~~~~~~~~~~~~~~~~\vdots \\
        ~~~~~~~~~~~~~~~~~~~~~~~~~~~~~~~~~~~~~\Join_{v_{n-2}=v_{n-1}}\\
 (\proj{v_{n-1},v_n} \rho_{v_{n-1}/\mathtt{subj}}\,\rho_{v_n/\mathtt{obj}} \, \pi_{\mathtt{subj}/t_d^{n},\mathtt{obj}/t_r^{n}})(sql_n(\vec{z}_n)))) \\
      \end{array}
    \end{equation}


By definition we know that the template in the range of $P_{i-1}$  coincide with the template in $P_i$.
Thus, we can remove them from the join over $u_i$'s in (\ref{eq:sql_tp}) and make the join over the attributes $\mathbf{x}_i, \mathbf{y}_i$ instead of the URIs. Therefore, $\beta$ can be rewritten to 
\begin{equation}\label{eq:2_proof}
	\begin{array}{l}
 \pi_{{v_0}/t_{d}^1,v_1/t_{r}^1,\ldots,v_n/t_{r}^n}
(sql_1(\vec{z}_1)\Join_{\vec{y}_{1}=\vec{x}_{2}} sql_2(\vec{z}_2) \Join_{\vec{y}_{2}=\vec{x}_{3}}\dots \Join_{\vec{y}_{n-1}=\vec{x}_{n}}  sql_n(\vec{z}_n))
\end{array}
\end{equation}

Since $\beta$ is optimizable, we know that
\begin{equation}\label{eq:contaiend}
\begin{array}{l}
\pi_{\vec{x}_{1}\vec{y}_{1}\dots \vec{y}_{n}}sql_1(\vec{z}_1) =  \pi_{\vec{x}_{1}\vec{y}_{1}\dots \vec{y}_{n}}
( sql_1(\vec{z}_1)\Join_{\vec{y}_{1}=\vec{x}_{2}}\ sql_2(\vec{z}_2) \Join_{\vec{y}_{2}=\vec{x}_{3}}\\
\hfill\dots \Join_{\vec{y}_{n-1}=\vec{x}_{n}} sql_n(\vec{z}_n) )
\end{array}
\end{equation}
Therefore, we can simplify (\ref{eq:2_proof}) to 
$\pi_{v/t_{d}^1,v_1/t_{r}^1,\ldots,v_n/t_{r}^n}
(sql_1(\mathtt{z}_1))$.
    This proves the Theorem.
    $\hfill\Box$
  \end{proof}


\subsection{Lifting Basic OBDA Instance Assumption}

We show that the ``basic OBDA instance assumption'' in Section~\ref{s:fds-in-obda} is not a real
restriction. A SPARQL query over a $\T$-mapping with predicates of
multiple templates can be rewritten to another SPARQL query over another
$\T$-mapping with predicates of only single template.

As usual, we assume an OBDA instance $((\T,\M,\Sigma),D)$, and 
let $\M_\T$ be a
$\T$-mapping.

Suppose a predicate $A$ is defined by $k$ mapping assertions using
  different template in $\M_\T$:
\[
\begin{aligned}
A(t_d^1(x), t_r^1(y)) & \leftarrow sql_1(z)  \\
\ldots \\
A(t_d^k(x), t_r^k(y)) & \leftarrow sql_k(z)  
\end{aligned}
\]

Define $\M_\T^A$ be the mapping obtained by replacing the assertions for the $A$
with the following $k$ mapping assertions defining $k$ fresh predicates
$A_i$ ($i=1,\ldots,k$):
\[
\begin{aligned}
A_1(t_d^1(x), t_r^1k(y)) & \leftarrow sql_1(z)  \\
\ldots \\
A_k(t_d^k(x), t_r^k(y)) & \leftarrow sql_k(z)  
\end{aligned}
\]
Suppose that $Q$ is a SPARQL query using predicate $A$. The idea is to
construct another SPARQL query $Q'$ such that $\llbracket Q \rrbracket _{(\M_\T,D)} = \llbracket Q' \rrbracket _{(\M_\T^A,D)}$. The construction is performed on each triple pattern using $A$. Suppose $B$ is a triple pattern occurring in $Q$; we take $B^+$
to be the union of
\[ B[A \mapsto A_i], \qquad i= 1\ldots k \]
where $B[A\mapsto A_i]$ is a triple pattern obtained by replacing all the occurrences of
$A$ in $B$ with $A_i$. Finally $Q'$ is defined as the SPARQL query obtained by
replacing all the triple patterns $B$ with $B^+$.

\begin{lemma}\label{lem:lift}
  $\llbracket B \rrbracket _{(\M_\T,D)} = \llbracket B^+ \rrbracket
  _{(\M_\T^A,D)}$
\end{lemma}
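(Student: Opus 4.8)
The plan is to compare the two virtual RDF graphs $G = \G^{((\emptyset,\M_\T,\Sigma),D)}$ and $G' = \G^{((\emptyset,\M_\T^A,\Sigma),D)}$ directly, and then unfold the SPARQL semantics on the single triple pattern $B$. I assume $A$ is a property and $B = \langle \mathit{sub}, A, \mathit{obj}\rangle$ has $A$ in the predicate position; the case of a class, where $A$ sits in the object position under \rdftype, is entirely analogous. Since $A_1,\dots,A_k$ are fresh IRIs, we have $\var(B[A\mapsto A_i]) = \var(B)$ for every $i$, so all solution mappings involved range over the same set of variables, which lets us treat the \union uniformly.

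First I would pin down the relationship between $G$ and $G'$ from the definition of $\A_{\M,D}$. The only difference between $\M_\T$ and $\M_\T^A$ is that the $k$ assertions with head $A$ have that head renamed to the fresh predicates $A_1,\dots,A_k$, keeping the same templates $t_d^i,t_r^i$ and the same bodies $sql_i$. Hence: \emph{(i)} for every predicate $Q \notin \{A, A_1,\dots,A_k\}$ the defining assertions coincide in $\M_\T$ and $\M_\T^A$, so $G$ and $G'$ carry exactly the same $Q$-triples; \emph{(ii)} $G$ contains no $A_i$-triple and $G'$ contains no $A$-triple; and \emph{(iii)} a triple $\langle s, A, o\rangle$ belongs to $G$ iff it is produced by some assertion $A(t_d^i(x),t_r^i(y)) \leftarrow sql_i(z)$, i.e. iff $\langle s, A_i, o\rangle \in G'$ for that same $i$. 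Combining these, $\langle s,A,o\rangle \in G$ iff there is $i \in \{1,\dots,k\}$ with $\langle s, A_i, o\rangle \in G'$.

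Next I would unfold the semantics. By the base case~(\ref{ans1}) of SPARQL evaluation, $\llbracket B\rrbracket_G = \{\sigma : \var(B) \to \sDOM \mid \langle \sigma(\mathit{sub}), A, \sigma(\mathit{obj})\rangle \in G\}$. Since $B^+ = \union(B[A\mapsto A_1], \dots, B[A\mapsto A_k])$, the semantics of \union together with the preservation of $\var$ gives $\llbracket B^+\rrbracket_{G'} = \bigcup_{i=1}^{k}\{\sigma \mid \langle \sigma(\mathit{sub}), A_i, \sigma(\mathit{obj})\rangle \in G'\} = \{\sigma \mid \exists i: \langle \sigma(\mathit{sub}), A_i, \sigma(\mathit{obj})\rangle \in G'\}$. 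Substituting claim \emph{(iii)} rewrites the right-hand side as $\{\sigma \mid \langle \sigma(\mathit{sub}), A, \sigma(\mathit{obj})\rangle \in G\} = \llbracket B\rrbracket_G$, which is the desired equality $\llbracket B \rrbracket _{(\M_\T,D)} = \llbracket B^+ \rrbracket _{(\M_\T^A,D)}$.

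The only delicate point, and the one I would spell out carefully, is claim \emph{(iii)}: that splitting the head predicate partitions the $A$-triples of $G$ exactly into the $A_i$-triples of $G'$, with neither loss nor cross-talk. This requires that each assertion keeps its own templates $t_d^i,t_r^i$ and body $sql_i$ unchanged, so the generated ground triples are identical up to the predicate name, and that the $A_i$ are genuinely fresh, so they cannot collide with any predicate populated by the remaining assertions. Once \emph{(iii)} is established, the rest is a routine application of the triple-pattern and \union semantics, with the bookkeeping that $\var$ is invariant under the renaming $A \mapsto A_i$.
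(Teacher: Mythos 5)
Your proposal is correct and follows essentially the same route as the paper's proof: both establish a one-to-one correspondence between the $A$-triples exposed by $\M_\T$ and the $A_i$-triples exposed by $\M_\T^A$ (via the unchanged templates and SQL bodies of the renamed assertions) and then push solution mappings through the triple-pattern and \textsc{Union} semantics. The only cosmetic differences are that you spell out the property case while the paper spells out the class case $(?x,\texttt{rdf:type},A)$, and you make explicit the freshness and graph-level bookkeeping (your claims \emph{(i)}--\emph{(iii)}) that the paper's terser solution-mapping-chasing argument leaves implicit.
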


\begin{proof}
  We only prove the case where $B$ is a single triple pattern of the
  form $ B = (?x, \rdftype, A) $, since the case where $A$ is a property
  can be proved analogously. In this case,
  \[B^+ = (?x, \rdftype, A_1)~ \union \ldots \union~ (?x, \rdftype,
  A_k)\] 

  Suppose that $\{?x \mapsto a\} $ is a solution mapping, i.e., $A(a)$
  is in the RDF graph exposed by $\M_\T$ and $D$.  It follows that
  there is a mapping assertion
  $A(t(\vec{x})) \leftarrow sql_i(z) \in \M_\T$, such that
  $a=t(\vec{x}_0)$ for some template $t_0$ and tuple $x_0$. Since
  $A_i(t(\vec{x})) \leftarrow sql_i(z) \in \M_\T^A $, we have
  $A_i(t(\vec{x}_0))$ is in the RDF graph exposed by $\M_\T^A$ and
  $D$. Then $ \{ x \mapsto a \} $ is a solution mapping of
  $(?x, \rdftype, A_i)$ and also of $B^+$.

The other direction can be proved analogously.
\end{proof}

\begin{theorem}
\label{thm:lift}
$\llbracket Q \rrbracket _{(\M_\T,D)} = \llbracket Q' \rrbracket _{(\M_\T^A,D)}$
\end{theorem}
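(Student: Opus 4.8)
The plan is to prove the statement by structural induction on the graph pattern $Q$, using Lemma~\ref{lem:lift} as the base case. The key observation is that the rewriting $Q \mapsto Q'$ touches only the \emph{leaves} of the SPARQL algebra tree: every triple pattern $B$ is replaced by $B^+$, while every operator node ($\filter$, $\bind$, $\union$, $\join$, $\leftjoin$, and projection) together with every filter expression $F$ is left untouched. Two preliminary facts make this precise. First, replacing the predicate $A$ by a fresh predicate $A_i$ in a triple pattern changes neither the subject nor the object position, so $\var(B) = \var(B[A\mapsto A_i]) = \var(B^+)$; hence $Q$ and $Q'$ have identical variable sets and identical operator trees. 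Second, every filter $F$ is built only from variables and terms in $\sDOM$, so it is unaffected by the renaming and is applied on both sides in exactly the same way.

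Next I would recall that the SPARQL set-semantics recalled in the appendix is \emph{compositional}: for any RDF graph $G$ we have $\sANS{\filter(P,F)}{G} = \filter(\sANS{P}{G},F)$, $\sANS{\union(P_1,P_2)}{G} = \union(\sANS{P_1}{G},\sANS{P_2}{G})$, and likewise for $\bind$, $\join$ and $\leftjoin$; that is, the value of a compound pattern depends on its immediate subpatterns only through the corresponding operation on solution-mapping sets. Let $G$ be the virtual graph exposed by $\M_\T$ and $D$, and $G'$ the one exposed by $\M_\T^A$ and $D$, so that $\llbracket Q \rrbracket_{(\M_\T,D)} = \sANS{Q}{G}$ and $\llbracket Q' \rrbracket_{(\M_\T^A,D)} = \sANS{Q'}{G'}$. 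The induction then runs as follows. In the base case $Q$ is a single triple pattern $B$, its rewriting is $B^+$, and Lemma~\ref{lem:lift} gives $\sANS{B}{G} = \sANS{B^+}{G'}$ directly; when $B$ does not mention $A$ we have $B^+ = B$, and since $G$ and $G'$ agree on all triples whose predicate differs from $A$ and every $A_i$, the equality is immediate. In the inductive step, if $Q = \mathsf{op}(Q_1,\dots)$ with rewriting $Q' = \mathsf{op}(Q_1',\dots)$, the induction hypothesis yields $\sANS{Q_j}{G} = \sANS{Q_j'}{G'}$ for each subpattern, and applying the same operation $\mathsf{op}$ (with the same unchanged filter where relevant) to equal arguments produces equal results. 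This establishes $\sANS{Q}{G} = \sANS{Q'}{G'}$, which is exactly the claim.

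The only genuinely content-bearing step is the base case, already discharged by Lemma~\ref{lem:lift}; everything above it is a routine propagation of equality through the algebra. Consequently the point I would treat as the crux is the bookkeeping that guarantees the two virtual graphs $G$ and $G'$ differ \emph{only} by the renaming of $A$-triples into $A_i$-triples, so that leaves not mentioning $A$ are evaluated identically over both graphs while leaves mentioning $A$ are handled precisely by the lemma. Once this is observed, compositionality does the rest, and no operator case requires more than substituting equals for equals.
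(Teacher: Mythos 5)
Your proposal is correct and follows essentially the same route as the paper, which likewise proves Theorem~\ref{thm:lift} by structural induction on the SPARQL query with Lemma~\ref{lem:lift} as the base case and dismisses the inductive step as routine. Your write-up merely fills in the details the paper leaves implicit (preservation of variable sets and filters, compositionality of the operators, and agreement of the two virtual graphs on non-$A$ triples), all of which are sound.
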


\begin{proof}
  The proof is a standard induction over the structure of the SPARQL
  queries. The base case of proof is the triple pattern case, and has been
  proved in Lemma~\ref{lem:lift}. The inductive case can be proved
  easily.
\end{proof}

By exhaustingly apply Theorem~\ref{thm:lift} to all predicates of
different templates, one can lift the restriction of ``basic OBDA
instance''.


\subsection{Wisconsin Benchmark}
\label{sec:wisconsin}

We setup an environment based on the Wisconsin Benchmark~\cite{wisconsin}.
This benchmark was designed for the systematic evaluation 
of data\-base performance with respect to different query characteristics.
The benchmark comes with a schema that is designed so one can quickly
understand the structure of each table and the distribution of each
attribute value.  This allows easy construction of queries that
isolate the features that need to be tested.  The benchmark also comes
with a data generator to populate the schema.  Unlike EPDS, the
benchmark database contains synthetic data that allows easily
specifying a wide range of retrieval queries.  For instance, in EPDS
it is very difficult to specify a selection query with a 20\% or 30\%
selectivity factor.  This task becomes even harder when we include
joins into the picture.

The benchmark defines a single table schema (which can be used to
instantiate multiple tables). The table, which we now call  ``Wisconsin table'', contains
16 attributes, and a primary key (\texttt{unique2})
with  integers from 0 to 100 million randomly ordered.

We refer the reader to~\cite{wisconsin} for details on the algorithm
that populates the schema.

\myPar{Dataset}
We used Postgres 9.1, and  DB2 9.7  as \ontop backends. 
The query optimizers were left with the default configurations.
All the table statistics  were updated. 

%
For each DB engine we created a database, each with 10 tables:  
5 Wisconsin tables
 ($\text{Tab}_i, i=1,\ldots,5$),  and 5 tables materializing the join of
the former tables. For instance, \texttt{\small view123} materializes the join 
of the tables \texttt{\small Tab1}, \texttt{\small Tab2}, and
\texttt{\small Tab3}.
Each table contains 100 million rows, and each of the databases
occupied ca. 400GB of disk space.


\myPar{Hardware}
We ran the experiments in an HP Proliant server with 24 Intel Xeon CPUs
(@3.47GHz), 106GB of RAM and five 1TB 15K RPM HD. 
\ontop was run  with 6GB Java heap space.
The OS is Ubuntu 12.04 LTS 64-bit edition.

\medskip In these experiments, we ran each query 3 times, and we
averaged the execution times.  There was a warm-up phase, where we ran
4 random queries not belonging to the tests.

\subsubsection{Evaluating the Impact of \vfd-based Optimization}

The experiments in this section measure the impact of optimization
based on \vfds.
Optimizations based on branching \vfds and path \vfds produce the same
effect in the resulting SQL query, therefore, for concreteness we
focus on branching \vfd.  The performance gain for path \vfd is
similar.

Recall that we started studying this scenario because EPDS contains
thousands of views that lack primary/foreign keys, and some of them
cannot be avoided in the mappings.  This prevents OBDA semantic
optimizations to take place.

The following experiments evaluate the trade-off of using views or
their definitions depending on:
\begin{inparaenum}[\it (i)]
	\item type of mappings (using views or view definitions);
	\item the complexity of the user query (\#~of SPARQL joins);
	\item the complexity of the mapping definition (\#~of SQL joins);
	\item the selectivity of the query;
	\item the \vfd optimization ON/OFF;
	\item the DB engine (DB2/PostgreSQL);
\end{inparaenum}

In the following we describe the queries, mappings and the OBDA
specifications and instances used in the different experiments.

\myPar{Queries}
In this experiment we tested a set of 36 queries each varying on:
\begin{inparaenum}[\it (i)]
\item the number of SPARQL joins (1-3), 
\item SQL joins in the mappings (1-4),  and 
\item selectivity of the query (3 different values). 
\end{inparaenum}
The SPARQL queries have the following shape:
\begin{center}
\begin{minipage}{0.75\textwidth}
\begin{lstlisting}[basicstyle=\ttfamily\scriptsize,mathescape,frame=none,frame=tb]
 SELECT ?x ?y  WHERE {
?x a ${:Class-n-SQLs}$ . ?x ${:Property_1-n-SQLs}$ ?y$_1$ . 
                     $\vdots$
?x ${:Property_m-n-SQLs}$ ?y$_m$ .  Filter( ?y$_m$ < k% ) }
\end{lstlisting}
\end{minipage}
\end{center}

where  \texttt{Class-n-SQLs} and \texttt{Property$_i$-n-SQLs} are classes and properties defined by mappings which source is either an SQL join of $n=1\dots4$ tables, or a materialized view of the join of $n$ tables. Subindex $m$ represents the number of SPARQL joins, 1 to 3.
Regarding the selectivity of $k\%$,	
we did the experiments with the following values:
\begin{inparaenum}[\it (i)]
\item $0.0001\%$  (100 results);
\item $0.01\%$ (10.000 results);
\item $0.1\%$ (100.000 results).
 \end{inparaenum}
These  queries  do not belong to the Wisconsin benchmark.

\myPar{OBDA Specifications} We have two OBDA settings, one where
classes and properties are populated using an SQL that use original
tables with primary keys (1-4 joins) ($K_1$); and a second one where
predicates are populated using materialized views (materializing 1-4
joins).  This second setting we tested with $\vfd$ optimization
($K_2$) and without optimization ($K_3$).  In the first OBDA setting,
all the property subjects are mapped into the tables primary keys.
There are no axioms in the ontology. All the individuals have the same
template $t$.

\ignore{
\myPar{\vfd Specifications}
Each  set of 36 queries was tested against 3 different scenarios. 
\begin{compactitem}
	\item : Using the mapping based on the original tables.  
	\item : Using the mapping based on materialized views  with  
	\item Using the mapping based on materialized views without
      optimization.
\end{compactitem}
}
 Let $S_t$ be the set of all individuals.
In $K_2$ there are  12 branching \vfds  of the 
form $S_t \mapsto^b{\textowl{:Property$_m$-n-SQLs}}$ for every $n=1\dots4$, $m=1\dots 3$. 
The optimizable \vfds contain intuitively the properties populated from the same view, that is,\\
\centerline{$S_t \mapsto^b{\textowl{:Property$_1$-n-SQLs},\textowl{:Property$_2$-n-SQLs},\textowl{:Property$_3$-n-SQLs}}$} \\
for $n=1\dots 4$.


\ignore{

\begin{example}
	
\begin{center}
\begin{minipage}{0.4\textwidth}
\begin{lstlisting}[basicstyle=\ttfamily\footnotesize,mathescape,frame=none,frame=tb]
 
SELECT ?x ?y  WHERE {
?x a  :2Tab1 . 
?x :Tab3unique2Tab3 ?y . 
Filter( ?y < 100 )
}
\end{lstlisting}
\end{minipage}
\end{center}
	
\end{example}
}

\myPar{Discussion and Results}
The  results of the experiments are shown in~ Figure~\ref{fig:FDResults}.
Each $q_{i/j}$ represents the query with $i$ SPARQL joins over properties
mapped to $j$ SQL joins. 


There is almost no difference between the results with
different selectivity, 
so for clarity we averaged the run times over different selectivities.
Since the experiment was run three times, each point in the figure
represents the average of 9 query executions. 

The experiment results  in Figure~\ref{fig:FDResults} show that 
all the SPARQL queries perform better in $K_2$ than in $K_3$ in both DB engines.
Moreover, in all cases queries in $K_2$  perform at least twice as fast as  the ones in $K_3$,
even getting close to the performance of $K_1$.

In \ontop-Postgres, the execution of the hardest SPARQL queries in
$K_1$  is 1 order of magnitude faster than in $K_2$. The execution of these queries in
$K_2$ is 4 times  faster than in $K_3$.
In \ontop-DB2, the performance gap between the SPARQL queries in 
$K_3$ and $K_2$ is smaller.
The SPARQL queries in
$K_1$  are slightly faster than the queries in $K_2$. The execution in
$K_2$ is 2 times  faster than in $K_3$.

In \ontop-Postgres and \ontop-DB2, the translations of the SPARQL queries resulting from the $K_3$ scenario 
contain self-joins of the non-indexed views that force 
the DB engines to create hash tables for all intermediate join results
which increases the start-up cost of the joins, and the overall
execution time.  One can observe that in both, \ontop-Postgres and
\ontop-DB2, the number of SPARQL joins strongly affect the performance
of the query in $K_3$.  In both cases, the SPARQL queries in $K_2$,
because of our optimization technique, get translated into a join-free
SQL query that requires a single sequential scan of the unindexed
view. However, the cost of scanning the whole view to perform a
non-indexed filter is still higher than the cost of joins (nested
joins in both) of the indexed tables in $K_1$.

 \begin{figure*}[t]
 \centering
 \pgfplotsset{every tick label/.append style={font=\tiny}}
\begin{minipage}{.75\textwidth}

		\begin{tikzpicture}
		\pgfplotsset{every axis/.append style={
			font=\footnotesize,
			semithick,
			tick style={thick}}}
		    \begin{axis}[
				title= {Postgres},
	    			height=4cm,
				width=9cm,
				xtick={1,2,3,4,5,6,7,8,9,10,11,12},
				xticklabels={$q_{1/1}$,$q_{1/2}$,$q_{1/3}$,$q_{2/1}$,$q_{2/2}$,
				$q_{2/3}$,$q_{3/1}$,$q_{3/2}$,$q_{3/3}$,$q_{4/1}$,$q_{4/2}$,$q_{4/3}$},
				ymin=0, 
				ymax=1000000,
                                scaled y ticks=false,
                                ytick={120000,240000,360000,480000,600000,720000,840000,960000},
                                yticklabels={2 m, 4 m, 6 m, 8 m, 10 m,12 m, 14 m, 16 m},
		        	grid=major,
				legend columns=-1
		    ]
\addplot[black,mark=triangle,  mark options={solid}] coordinates {
( 1,30470.6666666667)(2,31069.6666666667)(3,35456.3333333333)(4,28726)(5,31861)(6,37697.3333333333)(7,30490.3333333333)(8,33274)(9,38996.3333333333)(10,50795.3333333333)(11,45442.6666666667)(12,56197 )

};
 \label{fig:wis-mk-joins}

\addplot[blue,dash pattern=on 3pt off 6pt on 6pt off 6pt,mark=o,  mark options={solid}] coordinates {
( 1,167160.333333333)(2,257645)(3,348509)(4,248971)(5,379721)(6,504111.666666667)(7,315088)(8,472156)(9,629904.333333333)(10,498757.666666667)(11,742154)(12,988929.666666667 )
};
\label{fig:wis-mk-views_no_keys}


\addplot[red,densely dotted,mark=square,  mark options={solid}] coordinates {
( 1,72028.3333333333)(2,64178.6666666667)(3,69950)(4,127435.666666667)(5,124295)(6,126980)(7,161180)(8,161647.666666667)(9,165631.666666667)(10,260033.333333333)(11,259428.333333333)(12,261718.333333333 )
};
\label{fig:wis-mk-views_tuned}

\end{axis}
\end{tikzpicture}

\end{minipage}

\begin{minipage}{.75\textwidth}
	
		\begin{tikzpicture}
		\pgfplotsset{every axis/.append style={
			font=\footnotesize,
			semithick,
			tick style={thick}}}
		    \begin{axis}[
				title= {DB2},
	    			height=4cm,
				width=9cm,
				xtick={1,2,3,4,5,6,7,8,9,10,11,12},
				xticklabels={$q_{1/1}$,$q_{1/2}$,$q_{1/3}$,$q_{2/1}$,$q_{2/2}$,
				$q_{2/3}$,$q_{3/1}$,$q_{3/2}$,$q_{3/3}$,$q_{4/1}$,$q_{4/2}$,$q_{4/3}$},
				ymin=0, 
				ymax=1000000,
                                scaled y ticks=false,
                                ytick={120000,240000,360000,480000,600000,720000,840000,960000},
                                yticklabels={2 m, 4 m, 6 m, 8 m, 10 m,
                                12 m, 14 m, 16 m},
		        	grid=major,
				legend columns=1
		    ]
\addplot[black,mark=triangle,  mark options={solid}] coordinates {

( 1,32063)(2,32025.6666666667)(3,31858.3333333333)(4,37669)(5,37811.6666666667)(6,38093.6666666667)(7,38884.6666666667)(8,37764.6666666667)(9,37449.6666666667)(10,42848.3333333333)(11,43310.6666666667)(12,43756.6666666667 )
};
 \label{fig:wis-mk-db2-joins}


\addplot[blue,dash pattern=on 3pt off 6pt on 6pt off 6pt,mark=o,  mark options={solid}] coordinates {
( 1,108271.333333333)(2,187635.333333333)(3,207350.666666667)(4,143720.666666667)(5,205474)(6,285039.333333333)(7,202681.333333333)(8,293285)(9,410668.333333333)(10,249598)(11,352131.333333333)(12,473457.333333333 )
};
\label{fig:wis-mk-db2-views_no_keys}


\addplot[red,densely dotted,mark=square,  mark options={solid}] coordinates {
( 1,48713.6666666667)(2,48536.3333333333)(3,48388.3333333333)(4,66582.6666666667)(5,66504.6666666667)(6,67014.6666666667)(7,97762)(8,97610.3333333333)(9,97716.6666666667)(10,118405.666666667)(11,118746.666666667)(12,118494.666666667 )
};
\label{fig:wis-mk-db2-views_tuned}

\end{axis}
\end{tikzpicture}
\end{minipage}

\begin{minipage}{.90\textwidth}
	\centering
		\footnotesize{
			\ref{fig:wis-mk-joins} $K_1$ (Querying the tables directly)\\
			\ref{fig:wis-mk-views_tuned} $K_2$ (Views with optimization enabled)\\
			\ref{fig:wis-mk-views_no_keys} $K_3$ (Views no optimization)~~~
			}
\end{minipage}
\caption{Experiments showing the impact of the optimization technique based on \vfd}
\label{fig:FDResults}	
\end{figure*}
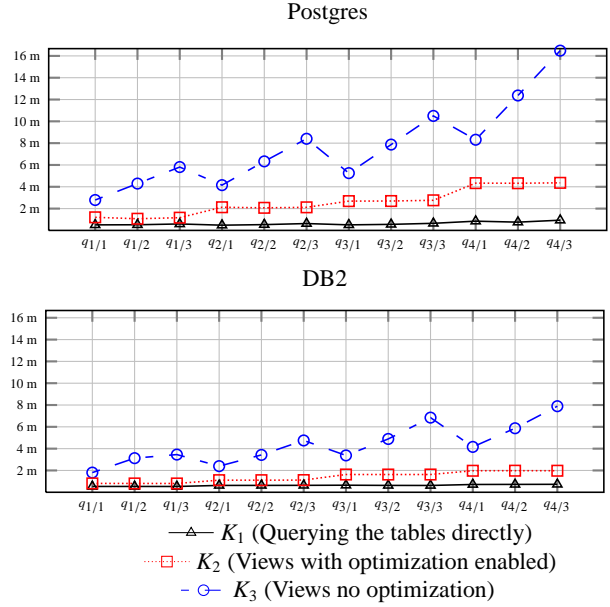


\ignore{
\paragraph{Observations in Experiments with Views}

\mysql implements only the \emph{nested loop} algorithm for realizing the join operation. A different approach is the \emph{Hash Join} algorithm, implemented both in \dbtwo and \postgres which, by exploiting a combination of analysis of the physical data and usage of the main memory, often significantly outperforms the \emph{nested loop} procedure. In particular, the \emph{Hash Join} allows for a fast join on columns which are not explicitly defined as keys, but that are keys in practice; this because an index for those columns is constructed on-the-fly and in the main memory, and therefore the only overhead is the phase of the index construction (one single table scan). Below we show the execution plan of \postgres:
}

\subsubsection{Evaluating the Impact of Exact Mappings}\label{sec:expr-exactMap}




In this test we evaluate the exact mapping optimization technique described in Section~\ref{sec:exact-mappings-obda}.
This experiment is inspired by the use case in EPDS where optimization
based on exact mapping can help.
The following experiments evaluate the impact of the optimization
depending on:
%
\begin{inparaenum}[(i)]
\item the complexity of the query (\# of SPARQL joins);
\item the selectivity of the query;
\item the number of specified  exact classes;
\item the DB engine (DB2/PostgreSQL). 
\end{inparaenum}

In the following we describe the tables, ontology, mappings, queries
and exact predicate specifications used in the experiment.

\myPar{OBDA Specifications}
The ontology contains four classes
$A_1, A_2, A_3, A_4$, one object property $R$ and one data property
$S$.  The classes form a hierarchy
\[A_1 ~\rdfssubclassof~ A_2, ~ A_2 ~\rdfssubclassof~ A_3, ~A_3 ~\rdfssubclassof~ A_4.\]
The mappings for classes $A_i$ ($i=1,\ldots,4$) are
defined over the primary key of Tab$_i$ with different
filters, in such a way that each $A_i$ is exact. The mappings for $R$ and
$S$ are defined over the primary key column and another unique column (\texttt{unique1})
of Tab$_5$.

\myPar{Queries}
 In this experiment we tested 6
queries ($q_1,\ldots,q_6$) 
varying on: (i) the number of classes and properties in the SPARQL (1-3) and (ii) the
classes used in the query.
For instance, $q_3$ is 
\begin{center}
\begin{minipage}{0.65\textwidth}
\begin{lstlisting}[basicstyle=\ttfamily\scriptsize,mathescape,frame=none,frame=tb]
SELECT * WHERE {?x a :A3. ?x :R y. ?y a :A4. 
  OPTIONAL { ?x :S ?u . } OPTIONAL { ?y :S ?v . }. }
\end{lstlisting}
\end{minipage}
\end{center}

\myPar{Exact Concepts}
We consider the following four exact concept specifications:
$E_0 = \emptyset$, 
$E_1 = \{A_1,A_2\}$, 
	$E_2 = \{A_1,A_2, A_3\}$,
$E_3 = \{A_1,A_2, A_3, A_4\}$.
Observe that $E_0$ corresponds to the case where no exact mapping optimization is applied.

 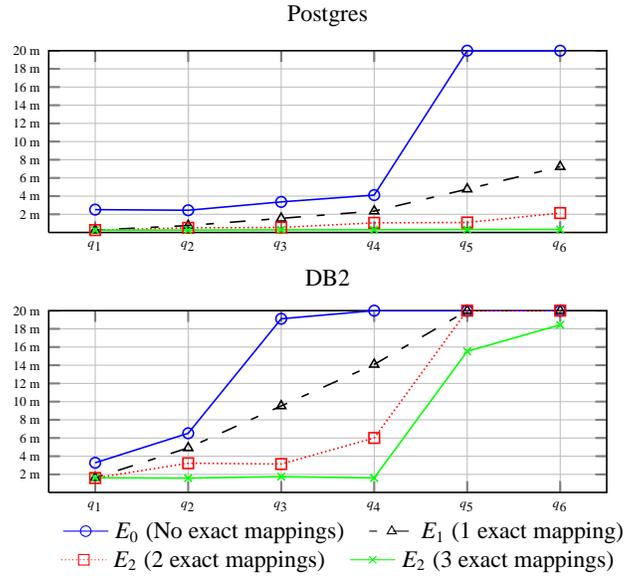
\begin{figure}[t]
   \centering
 \pgfplotsset{every tick label/.append style={font=\tiny}}
\begin{minipage}{.75\textwidth}
  \begin{tikzpicture}
    \pgfplotsset{every axis/.append style={
	font=\footnotesize,
	semithick,
	tick style={thick}}}
    \begin{axis}[
	title= {Postgres},
	height=4cm,
	width=9cm,
	xtick={1,2,3,4,5,6,7,8,9,10,11,12},
	xticklabels={$q_{1}$,$q_{2}$,$q_{3}$,$q_{4}$,$q_{5}$,
	  $q_{6}$},
	ymin=0, 
	ymax=1200000,
        scaled y ticks=false,
        ytick={120000,240000,360000,480000,600000,720000,840000,960000,1080000,1200000},
        yticklabels={2 m, 4 m, 6 m, 8 m, 10 m,12 m, 14 m, 16 m, 18 m,
          20 m},
	grid=major,
	legend columns=1
      ]
      \addplot[blue,mark=o,  mark options={solid}] coordinates {
          (1,150680) (2,146562) (3,201718) (4,246921) (5,1200000) (6,1200000)
      };
      \label{e0}

      \addplot[black,dash pattern=on 3pt off 6pt on 6pt off 6pt,mark=triangle,  mark options={solid}] coordinates {
        (1,15725) (2,45653) (3,93230) (4,140780) (5,286021) (6,434190)
      };
      \label{e1}

      \addplot[red,densely dotted,mark=square,  mark options={solid}] coordinates {
        (1,15803) (2,30862) (3,32814) (4,63711) (5,65933) (6,128035)
      };
      \label{e2}
      \addplot[green,solid,mark=x,  mark options={solid}] coordinates {
        (1,15514) (2,16002) (3,17408) (4,17809) (5,19406) (6,19711)
      };
      \label{e3}
    \end{axis}
  \end{tikzpicture}
\end{minipage}
\begin{minipage}{.75\textwidth}
		\begin{tikzpicture}
		\pgfplotsset{every axis/.append style={
			font=\footnotesize,
			semithick,
			tick style={thick}}}
		    \begin{axis}[
				title= {DB2},
                height=4cm,
				width=9cm,
				xtick={1,2,3,4,5,6,7,8,9,10,11,12},
	xticklabels={$q_{1}$,$q_{2}$,$q_{3}$,$q_{4}$,$q_{5}$,
	  $q_{6}$},
				ymin=1000, 
				ymax=1200000,
                                scaled y ticks=false,
                                ytick={120000,240000,360000,480000,600000,720000,840000,960000,1080000,1200000},
                                yticklabels={2 m, 4 m, 6 m, 8 m, 10 m,12 m, 14 m, 16 m, 18 m,
                                  20 m},
		        	grid=major,
                    legend columns=1 ]
      \addplot[blue,mark=o,  mark options={solid}] coordinates {
        (1,196456) (2,391016) (3,1146362) (4,1200000) (5,1200000) (6,1200000)
      };
      \label{fig:wis-ex-db2-e0}

      \addplot[black,dash pattern=on 3pt off 6pt on 6pt off 6pt,mark=triangle,  mark options={solid}] coordinates {
        (1,97807) (2,294653) (3,571352) (4,845108) (5,1200000) (6,1200000)

      };
      \label{fig:wis-ex-db2-e1}

      \addplot[red,densely dotted,mark=square,  mark options={solid}] coordinates {
        (1,96094) (2,193611) (3,188214) (4,360226) (5,1200000) (6,1200000)
      };
      \label{fig:wis-ex-db2-e2}
      \addplot[green,solid,mark=x,  mark options={solid}] coordinates {
        (1,97382) (2,95404) (3,104967) (4,96599) (5,931600) (6,1106524)
      };
      \label{fig:wis-ex-db2-e3}
\end{axis}
\end{tikzpicture}
\end{minipage}
\begin{minipage}{.90\textwidth}
	\centering
		\footnotesize{
			\ref{e0} $E_0$ (No exact mappings)~~~ 
			\ref{e1} $E_1$ (1 exact mapping)~~~\\
			\ref{e2} $E_2$ (2 exact mappings)~~~
			\ref{e3} $E_2$ (3 exact mappings)~~~
			}
\end{minipage}
\caption{Experiments showing the impact of the optimization technique based on Exact Mappings}
\label{fig:exact-map-left}	
\end{figure}

\myPar{Discussion and Results} 
The  results of the experiments are shown
in Figure~\ref{fig:exact-map-left}. 
The results show that the exact mapping optimization improves the performance of
all SPARQL queries in both database engines. In particular, under the full optimization
setting $E_3$, none of the queries time out (20 mins), and the hardest queries  perform orders of magnitude faster than in $E_0$ and even $E_1$.

The performance gain is the result of the elimination of redundant unions.  
For instance, under $E_0$, SPARQL query $q_3$ is translated into a SQL
query with 12 unions, but 11 of them are redundant; applying $E_3$ removes
all the redundant unions.

\newpage

\subsection{Experiments Material and Tools}

All the material related to the Wisconsin experiment, as well as the tools used to find exact mappings and virtual functional dependencies, can be found on 

\begin{center}
  \url{https://github.com/ontop/ontop-examples/tree/master/ruleml-2016}.
\end{center}



\fi

\end{document}
